\newenvironment{prop}{\begin{proposition}}{\end{proposition}}
\newenvironment{thm}{\begin{theorem}}{\end{theorem}}
\newenvironment{lem}{\begin{lemma}}{\end{lemma}}
\spnewtheorem{defn}[definition]{Definition}{\bfseries}{}
\newcommand{\R}{\mathbb{R}}
\newcommand{\N}{\mathbb{N}}
\renewcommand{\O}{\mathcal{O}}
\newcommand{\bkp}{\texttt{bkp}}
\renewcommand{\epsilon}{\varepsilon}
\renewcommand{\tilde}{\widetilde}
\renewcommand{\bar}{\overline}
\DeclareMathOperator{\capacity}{Cap}
\DeclareMathOperator{\demand}{Dem}
\DeclareMathOperator{\rec}{REC}
\DeclareMathOperator{\rs}{RS}
\newcommand{\suppress}[1]{}
\tikzset{%
  truecircle/.style = {circle, minimum size = 1cm, draw, thick},
  trueellipse/.style = {ellipse, minimum height=3cm, minimum width=1cm, draw, thick},
  falsecircle/.style = {circle, minimum size = 1cm, draw, color=red, dashed, thick},
  falseellipse/.style = {ellipse, minimum height=3cm, minimum width=1cm, draw,color=red, dashed, thick},
  confusedellipse/.style = {ellipse, minimum height=3cm, minimum width=1cm, draw, double, thick},
  trueedge/.style = {-{Latex}, thick},
  falseedge/.style = {-{Latex}, color=red, dashed, thick},
  truehalfedge/.style = {-{Latex[left, line width=0]}, thick},
  falsehalfedge/.style = {-{Latex[right, line width=0]}, color=red, dashed, thick},
  properredge/.style = {-{Latex}, thick},
  truesplitedge/.style = {-, thick},
  falsesplitedge/.style = {-, color=red, dashed, thick}
}
\renewenvironment{proof}{\begin{origproof}}{\qed \end{origproof}}
\title{
The Adversarial Noise Threshold for Distributed Protocols
}
\author{William M. Hoza\thanks{Supported by a Nellie Bergen and Adrian Foster Tillotson Summer Undergraduate Research Fellowship from the California Institute of Technology, as well as by the ARCS Los Angeles Founder Chapter.}%
\and Leonard J. Schulman\thanks{Supported in part by NSF Award 1319745.}}
\institute{Caltech, Pasadena, CA 91125 \\ \email{\{whoza, schulman\}@caltech.edu}}
\date{}
\begin{document}
\maketitle

\begin{abstract} \setlength\parindent{16pt}
We consider the problem of implementing distributed protocols, despite adversarial channel errors, on synchronous-messaging networks with arbitrary topology.

In our first result we show that any $n$-party $T$-round protocol on an undirected communication network $G$ can be compiled into a robust simulation protocol on a sparse ($\O(n)$ edges) subnetwork so that the simulation tolerates an adversarial error rate of $\Omega\left(\frac{1}{n}\right)$; the simulation has a round complexity of 
$\O\left(\frac{m \log n}{n} T\right)$, where $m$ is the number of edges in $G$. (So the simulation is work-preserving up to a $\log$ factor.) The adversary's error rate is within a constant factor of optimal. Given the error rate, the round complexity blowup is within a factor of $\O(k \log n)$ of optimal, where $k$ is the edge connectivity of $G$.
We also determine that the maximum tolerable error rate on \emph{directed} communication networks is $\Theta(1/s)$ where $s$ is the number of edges in a minimum equivalent digraph.

Next we investigate adversarial \emph{per-edge error rates}, where the adversary is given an error budget on each edge of the network. We determine the limit for tolerable per-edge error rates on an arbitrary directed graph to within a factor of $2$. However, the construction that approaches this limit has exponential round complexity, so we give another compiler, which transforms $T$-round protocols into $\O(mT)$-round simulations, and prove that for polynomial-query black box compilers, the per-edge error rate tolerated by this last compiler is within a constant factor of optimal.
\end{abstract}
\thispagestyle{empty}
\section{Introduction}
We consider the problem of protecting distributed protocols from channel noise. The two-party case has received extensive attention, which we briefly survey in Section~\ref{sec:prior}. The multiparty case has been studied in three works. Rajagopalan and Schulman~\cite{rs94} showed how to protect synchronous distributed protocols on digraphs with $m$ edges and $n$ vertices against stochastic noise (at a constant noise rate per bit transmission), slowing down by a factor of $\log(\text{max degree})$. 
Gelles, Moitra, and Sahai improved on these results \cite{gms11} by constructing a computationally efficient simulation with the same properties. The first study of adversarial noise on multiparty ($n>2$) networks is by 
Jain, Kalai, and Lewko \cite{JKL15}. They focused on a ``sequential'' communication model, in which there is at most one message in-flight in the network at any time. Their networks are undirected (which throughout this work we equate with a symmetric or bidirected digraph), and they show that if the graph 
contains one party who is connected to every other (a star subnetwork), then every ``semi-adaptive'' $T$-round protocol can be compiled into an $\O(T)$-round simulation protocol which tolerates an adversarial bit error rate of $\Omega(\frac{1}{n})$. They point out that this error rate is within a constant factor of optimal, because with an error budget of this order, the adversary can effectively cut off one party from the rest of the graph. They also prove another negative result, showing that in a certain black-box model, even if the adversary is restricted to a separate budget of errors for each party's outgoing messages, no constant error rate can be tolerated.

We return in this paper to the model of synchronous distributed protocols---in each unit of time, each party transmits one bit to each of its out-neighbors, as in~\cite{rs94}---but, unlike~\cite{rs94} and~\cite{gms11}, we treat adversarial error. Specifically, the adversary is assumed to know the inputs to all the parties, and the entire history of communications up to the present. Only the private randomness of the parties is unknown to the adversary. Our primary objective is to determine (up to a constant) the noise threshold at which reliable communication becomes possible. On undirected networks, we provide simulation protocols which achieve this threshold and which are within a factor of $\O(k \log n)$ of optimal in round complexity (for protocols achieving the threshold), where $k$ is the edge connectivity of the network.

\subsection{Outline of our results}

The starting point for our main result is a slight variant of the compiler constructed by Rajagopalan and Schulman \cite{rs94}, which we refer to as the \emph{RS compiler} (see Appendix~\ref{apx:rs} for the modifications). Previously this compiler was analyzed for stochastic errors. We show (Proposition~\ref{prop:rs94}) that the RS compiler tolerates an adversarial error rate of $\Omega(\frac{1}{m})$. On networks with bounded edge connectivity, if we only consider simulation protocols which run on the same networks as the original protocols, this is within a constant factor of the best possible error rate: with an error budget of this order, the adversary can effectively disconnect the network. Thus, to tolerate a higher error rate, we are forced to consider simulations running on subnetworks. 
Note that it would not suffice for the parties to simply send dummy messages on those edges which they do not want to use; rather, our model explicitly allows simulations to run on subnetworks. 
It may seem strange that turning off edges can help with noise resiliency but the key is that we are able to redesign the protocol so that, informally, it ``relies on all remaining edges evenly''; consequently, the adversary's most effective attacks, which apply the entire error budget to a small region, have an advantage factor of only $n$ rather than $m$. In outline, we achieve this as follows, given an arbitrary protocol on an undirected communication network: 
\begin{enumerate}[(a)]
\item We use multicommodity flow methods to route the messages of the original protocol through a cut sparsifier.
\item We modify the sparse network by adding back in some of the edges which were removed, so that the routes can be short in addition to having low congestion.\item We apply the RS compiler to this new protocol on the second sparse subnetwork, so that the final simulation tolerates an error rate of $\Omega(\frac{1}{n})$.
\end{enumerate}
This error rate is within a constant factor of optimal, as noted above. Furthermore, the round complexity blowup is within a factor of $\O(k \log n)$ of optimal (for protocols tolerant to this error rate), where $k$ is the edge connectivity of the graph on which the original protocol ran (Theorem~\ref{thm:runtime-lower-bound}.) (If one permits shared randomness, there are cases in which this gap can be narrowed, as we describe in Theorem~\ref{thm:magi}.)

The same basic strategy allows us to determine the optimal error rate on \emph{directed} graphs. We say that two digraphs on the same vertex set are \emph{reachability-equivalent} if they have the same reachability relation. A \emph{minimum equivalent digraph} of $G$ is a reachability-equivalent subgraph with the fewest possible edges. (See \cite{mt69,hsu75}.) We show (Theorem~\ref{thm:reachability-preserving-subgraph-error-rate}) that any protocol on an arbitrary digraph can be simulated to tolerate an error rate of $\Omega(\frac{1}{s})$, where $s$ is the number of edges in each minimum equivalent digraph. We also show (Theorem~\ref{thm:error-rate-negative-result}) that this error rate is within a constant factor of optimal.

We also investigate a more restricted adversary, who has a separate budget of errors for each edge. We prove (Theorems~\ref{thm:signal-diameter-positive-result} and~\ref{thm:signal-diameter-negative-result}) that the cutoff for tolerable per-edge error rates is $\Theta(\frac{1}{D})$, where $D$ is the maximum finite directed distance between any two parties in the digraph. However, the positive side of that argument involves a simulation with exponential round complexity. We prove (Theorem~\ref{thm:range-positive-result}) that there is a compiler which tolerates a per-edge error rate of $\Omega(\frac{1}{R})$, where $R$ is the maximum number of distinct vertices visited in any walk through the graph; the simulations output by that compiler have round complexity $\O(mT)$. The proof of Theorem~\ref{thm:range-positive-result} mostly consists of extending the arguments in \cite{rs94} to establish a tighter analysis of the RS compiler. By a similar argument to that used in \cite{JKL15}, we prove that this per-edge error rate is within a constant factor of optimal for polynomial-query black-box simulations (Theorem~\ref{thm:range-negative-result}). 

\subsection{Prior work} \label{sec:prior}
Classical coding theory methods designed for data transmission cannot be efficiently applied on a per-round basis to interactive protocols: either the slow-down or the error probability will be large. This problem was first addressed by Schulman for the case of two-party interactions. \cite{sch92} treated stochastic (positive capacity) channels and constructed a randomized compiler that transforms any $T$-round two-party protocol into a computationally efficient $\O(T)$-round simulation protocol. Later~\cite{sch93,sch96} treated adversarial noise and constructed a deterministic compiler that transforms any $T$-round two-party protocol into an $\O(T)$-round simulation protocol which tolerates adversarial error at the constant bit error rate $\frac{1}{240}$. This simulation, however, was not computationally efficient against adversarial error; it also relies on tree codes, which were shown to exist but have not yet been constructed (but see~\cite{b12,ms14}).
 Since then, the original results have been improved in many respects. As mentioned above,~\cite{rs94} treated the multiparty case for stochastic errors; since this solution depended upon tree codes, subsequent work provided effective simulations for a restricted class of communication protocols~\cite{ors05,ors09}. Gelles, Moitra and Sahai~\cite{gms11,gms14} provided a computationally efficient simulation of multiparty protocols against stochastic errors which avoids the per-instance pre-sharing of random bits in~\cite{sch92}. Returning to the two-party problem, Braverman and Rao~\cite{bra11} improved the adversarial error rate to $\frac{1}{8}$. The simulations in~\cite{gms11,gms14} work even against adversarial errors for two parties, but they are no longer computationally efficient in that setting. Brakerski and Kalai~\cite{bk12} and Brakerski and Naor~\cite{bn13} constructed computationally efficient  simulations at constant adversarial error rates. Several papers focused on noise thresholds for various channels~\cite{ghs14,gh14,egh15,be14}, while~\cite{cpt13} investigated what is possible while preserving the privacy of information not released by the noiseless protocol. Haeupler~\cite{h14} showed how to extend the non-tree-code-based randomized protocol in~\cite{sch92} to cope with adversarial error and at high rate.
Kol and Raz~\cite{kr13} showed a strict separation between the communication rates in one-way and interactive two-party communication.

The paper closest to our work is~\cite{JKL15}, which initiated the study of adversarial noise in protocols among $n>2$ parties. The main points of comparison are: (a)
 We provide simulation protocols for general networks, not only those containing a spanning star subgraph---in this respect our work is more general. 
(b) We consider the edges of the network to be capable of carrying simultaneously one bit per edge per unit time, rather than there being only a single edge of the network on which active communication is occurring at any time---in this sense the two works are incomparable, the model in~\cite{JKL15} favoring communication complexity and ours favoring round complexity.

\subsection{Notation} 
All of our graphs will be simple (i.e., without loops or multiple edges).
For positive results it suffices to show how to simulate the \emph{universal protocol} $\pi^*[G, T]$, which is a $T$-round deterministic protocol running on the digraph $G$ defined as follows. For a party $P_i$ with indegree $d_i^-$ and outdegree $d_i^+$, a $T$-round \emph{transmission function} for $P_i$ is a function $x_i$, which takes as input $d_i^-$ equal-length sequences of $< T$ bits received and gives as output $d_i^+$ bits to transmit. In $\pi^*[G, T]$, each party receives a transmission function as input, does as it instructs, and gives as output all the bits that she received. We will just write $\pi^*$ if $G$ and $T$ are clear.

\section{The Noise Threshold for Adversaries with a Global Budget}
\subsection{Asymptotically optimal error tolerance,
and fast simulation, on undirected networks}  \label{sec:reroute}
It was already shown by~\cite{JKL15} that reliable communication in an undirected $n$-vertex network is impossible against an adversary who can modify $\O(1/n)$ of the bit transmissions. (The model in~\cite{JKL15} is different but their argument applies mutatis mutandis to ours.) Our contribution is the converse to this statement:

\begin{thm} \label{thm:sparsifier-error-correction}
There exists a compiler $C$ such that if $\pi$ is a $T$-round protocol on a connected, undirected graph, then $C(\pi)$ tolerates a bit error rate of $\Omega(\frac{1}{n})$ and has a round complexity of $\O\left(\frac{m \log n}{n}T\right)$.
\end{thm}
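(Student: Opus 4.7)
The plan follows the three-step outline from the introduction. First, I would sparsify: construct a cut sparsifier $H$ of $G$ with $\O(n)$ edges (for instance via Benc\'zur--Karger) so that every cut of $H$ approximates the corresponding cut of $G$ up to a constant factor. Next, interpret the one-bit-per-edge-per-round transmissions of $\pi$ as a multicommodity demand on $H$ with $m$ source-sink pairs of unit demand. Because cuts are preserved up to constants, the sparsest-cut / max-concurrent-flow LP guarantees a fractional routing of this demand in $H$ with congestion $\O(m/n)$ per edge. After integral rounding via a randomized path decomposition, every round of $\pi$ can in principle be delivered by sending $\O(m/n)$ packets along the edges of $H$.

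Second, I would address dilation. A pure cut sparsifier does not control path lengths, and an unrestricted flow might route packets along paths of length $\Omega(n)$, which would inflate latency. To fix this, I would augment $H$ by adding back a carefully chosen $\O(n)$-edge subgraph of $G$ that provides short routes---plausibly a sparse multiplicative spanner, or a low-stretch spanning tree, of the underlying flow structure. The aim is to obtain a single subnetwork $H'\subseteq G$ with $|E(H')|=\O(n)$ such that the demands of $\pi$ admit a routing on $H'$ with both congestion and dilation $\O(m/n)$. Once this is achieved, a Leighton--Maggs--Rao-type packet-routing argument lets us simulate one round of $\pi$ on $H'$ in $\O(m/n)$ rounds. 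Call the resulting protocol $\pi'$; its round complexity is $\O((m/n)T)$.

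Third, I would invoke the RS compiler. Since $H'$ has $\O(n)$ edges, Proposition~\ref{prop:rs94} gives a simulation of $\pi'$ on $H'$ that tolerates an adversarial error rate of $\Omega(1/n)$, at the cost of an $\O(\log n)$ round-complexity blowup (the standard RS overhead for graphs of maximum degree $\O(n)$). Composing the two transformations yields a compiler $C$ with the claimed error rate $\Omega(1/n)$ and round complexity $\O((m \log n/n)T)$.

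The technical heart will be step (b): obtaining \emph{simultaneously} low congestion and short dilation on an $\O(n)$-edge subnetwork. A cut sparsifier preserves congestion but not dilation; a spanner preserves distances but not cut structure; naively taking their union need not yield routings short enough for the latency bound. The argument must exhibit a single $\O(n)$-edge subgraph in which the specific demands arising from $\pi$ can be routed with both parameters $\O(m/n)$, and then apply a pipelined packet-scheduling theorem to convert congestion plus dilation into round complexity. Controlling the edge count throughout this construction is essential, because any blowup in $|E(H')|$ would degrade the adversarial error rate tolerated in step (c).
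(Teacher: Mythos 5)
Your outline matches the paper's strategy (sparsify, route, control dilation, schedule, apply the RS compiler), but several of the key quantitative claims are off, and in particular the $\log n$ factor is attributed to the wrong source.

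First, the paper's variant of the RS compiler (Proposition~\ref{prop:rs94}) has \emph{constant} round-complexity overhead, not $\O(\log n)$: the paper modifies the original construction of~\cite{rs94} precisely to eliminate the $\log(\max\text{-degree})$ factor (see Appendix~\ref{apx:rs}). The $\log n$ in the final bound comes from the routing step, not the coding step. Specifically, your claim that cut preservation gives a fractional routing with congestion $\O(m/n)$ is incorrect: for multicommodity flow on undirected graphs the flow--cut gap is $\Theta(\log k)$ in the worst case (e.g.\ on expanders), and the paper invokes the Linial--London--Rabinovich approximate max-flow/min-cut theorem (Lemma~\ref{lem:multicommodity-max-flow-min-cut}) to obtain flow value $\Omega(n/(m\log m))$, hence congestion $\O((m\log n)/n)$. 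With that corrected, applying an $\O(\log n)$-overhead RS compiler would yield $\O((m\log^2 n/n)T)$ rounds, which overshoots the target; the constant-overhead RS variant is actually needed.

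Second, Benc\'zur--Karger gives $\O(n\log n/\epsilon^2)$ edges, not $\O(n)$; the paper instead uses the deterministic BSS-type sparsifier of de Carli Silva, Harvey, and Sato (Lemma~\ref{lem:spectral-sparsifier}), which achieves $\O(n/\epsilon^2)$ edges. More importantly, that result also provides a per-part weight bound (the partition/cost-control guarantee, Equation~\ref{eqn:cost-control}): the paper needs every sparsifier edge weight to be $\O(m/n)$ in order to convert the weighted cut comparison into a bound on the \emph{number} of sparsifier edges crossing each cut, which is what the flow network ultimately uses. A generic weighted cut sparsifier does not give this bound for free.

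Third, for the dilation issue---which you correctly flag as the technical heart---a spanner or low-stretch spanning tree is not the right tool and is harder to combine with the congestion bound. The paper's trick is simpler: since the sparse flow uses $\O(n)$ distinct edges with congestion $\O((m\log n)/n)$, the total length of all $m$ paths is $\O(m\log n)$, so at most $\O(n)$ paths have length exceeding $(m\log n)/n$. Replace each such long path with the direct edge $(P_i,P_j)\in E$ (available since the demand corresponds to an edge of $G$), adding only $\O(n)$ extra edges and at most one unit of congestion per edge. This keeps the subgraph size $\O(n)$ while capping dilation at $\O((m\log n)/n)$, after which Leighton--Maggs--Rao scheduling applies exactly as you suggested.
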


\subsubsection{The RS compiler}

The main coding-theoretic ingredient in the proof of Theorem~\ref{thm:sparsifier-error-correction} is (a slight variant of) the \emph{RS compiler}. The RS compiler was designed for stochastic errors, but it turns out to have good properties in the adversarial setting as well:
\begin{prop} \label{prop:rs94}
There exists a compiler $C$ (the RS compiler) such that if $\pi$ is a $T$-round protocol on a digraph $G$, then $C(\pi)$ tolerates a bit error rate of $\Omega(\frac{1}{m})$ and has a round complexity of $\O(T)$.
\end{prop}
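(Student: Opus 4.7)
The plan is to adapt the potential-function analysis of the RS compiler---originally designed for stochastic noise---to the adversarial setting. The variant of the compiler used here (recalled in Appendix~\ref{apx:rs}) sends, on each edge, a tree-coded stream whose symbols summarize the sender's current belief about the simulated transcript; the compiled protocol always runs for $N = cT$ super-rounds for an absolute constant $c$, so round complexity $\O(T)$ is immediate by construction. What must be shown is that if the adversary's total bit-flip budget $E$ is only $\O(T)$, then the simulation still completes the full $T$ logical rounds of $\pi$. Since the total number of bit transmissions is $mN = \O(mT)$, an error rate of $\Omega(1/m)$ produces exactly $E = \O(T)$, which is what we want.

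The key tool is the standard tree-code potential $\Phi = \sum_e (d^{\mathrm{snd}}_e - d^{\mathrm{ok}}_e)$, where $d^{\mathrm{snd}}_e$ is the sender's current tree-code depth on edge $e$ and $d^{\mathrm{ok}}_e$ is the deepest prefix that the receiver has decoded correctly. Two facts drive the argument: (i) by the constant-relative-distance property of tree codes, a single adversarial flip on edge $e$ can prevent correct decoding only to within a constant additional depth, so each error increases $\Phi$ by $\O(1)$; and (ii) in any error-free super-round, on every edge both $d^{\mathrm{snd}}_e$ and $d^{\mathrm{ok}}_e$ advance by one, so $\Phi$ does not grow and moreover the simulation of $\pi$ advances by one round on every edge whose endpoints' beliefs are currently consistent. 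Combining this with the RS rewind-to-longest-consistent-prefix rule, one shows that after $N$ super-rounds with $E$ adversarial flips the number of logical rounds productively simulated is at least $N - \gamma E$ for an absolute constant $\gamma$. Choosing $c$ sufficiently large and $\rho = \varepsilon/m$ with $\varepsilon$ sufficiently small, the quantity $N - \gamma E = cT - \gamma \varepsilon c T$ exceeds $T$, so the compiled protocol completes the simulation regardless of the adversary's scheduling of errors.

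The main obstacle will be the careful accounting tying the rewind mechanism to the potential: when a party detects inconsistency and retreats on some edge, $d^{\mathrm{ok}}_e$ can decrease, and this decrease must be charged back to the adversarial flips that caused the detected inconsistency without double-counting across edges or super-rounds. This is essentially the same bookkeeping subtlety that appears in the original Rajagopalan--Schulman analysis, so I would import their potential accounting essentially verbatim. The only genuinely new ingredient is that the bound $E = \O(T)$ arrives from a worst-case global budget instead of from a concentration inequality on the number of random flips; once $E$ is bounded deterministically, the rest of the argument carries through with only constant-factor changes.
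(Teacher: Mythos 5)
Your high-level strategy---reuse the RS compiler and observe that the combinatorial core of the Rajagopalan--Schulman analysis is deterministic, so it survives the switch from stochastic to adversarial noise once the error \emph{count} is capped at $\O(T)$---is exactly the insight the paper uses, and the final arithmetic ($E \le T/48$ out of $\O(mT)$ transmitted bits gives rate $\Omega(1/m)$) matches the paper's one-line derivation. However, the route you sketch to get there differs from the paper's and, as written, has real gaps. The paper proves Proposition~2 as an immediate corollary of a \emph{stronger} statement (Lemma~\ref{lem:rs-failure-walk}): a failure forces a \emph{time-like sequence}, i.e.\ a single walk, carrying $T/48$ edge character errors. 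That walk structure is overkill for Proposition~2 but is the whole point of the appendix, because it is what Theorem~\ref{thm:range-positive-result} (per-edge budgets) needs. Your direct potential argument would, if made rigorous, prove only the weaker Proposition~2.

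The more serious issue is that the potential $\Phi = \sum_e (d^{\mathrm{snd}}_e - d^{\mathrm{ok}}_e)$ together with claims (i)--(ii) does not faithfully describe the RS bookkeeping, and the claims do not hold as stated. A single character error can push $d^{\mathrm{ok}}_e$ back by a large amount in that step, so (i) fails pointwise; it is only an \emph{amortized} statement, and proving it requires the tree-code distance argument that a deviation of magnitude $M$ at time $\tau$ forces $\ge M/4$ character errors in the last $M$ steps. In an error-free round $d^{\mathrm{ok}}_e$ need not catch up immediately, so (ii) also only holds amortized. More fundamentally, $\Phi$ as defined only tracks decoding lag on edges; it says nothing about the rounds of \emph{backup} a party must perform to unwind earlier, now-regretted transmissions, nor about ``propagated errors'' where a party is delayed because an upstream neighbor (not a new channel error) fed her stale data. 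The paper's accounting uses three separate quantities---$RP$ (rounds correctly committed), $B$ (backups), and the walk-based $Y$---precisely to handle these effects, and the induction in Proposition~\ref{prop:rs-careful-analysis} splits into \texttt{progress}, \texttt{justified backup}, \texttt{harmful tree error}, and \texttt{harmful propagated error} cases, the last two being exactly what your $\Phi$ misses. Finally, two details your proposal omits but the paper must handle: (a) the original RS bound~\cite[Lemma~5.1.1]{rs94} is stated in terms of \emph{tree errors} at (party, time) pairs, which can share underlying bit flips; converting this to a count of distinct bit flips requires the refinement to \emph{edge character errors} that is the bulk of Appendix~\ref{apx:rs}; and (b) $RP$ measures correctness of \emph{outgoing} transcripts, whereas success requires correct \emph{incoming} guesses, which is why the paper augments the graph with dummy ``echo'' parties before compiling. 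So: right idea, right constants, but the sketched potential would have to be replaced wholesale by the $RP/B/Y$ machinery to close the argument.
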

Proposition~\ref{prop:rs94} follows easily from the analysis in \cite{rs94}. We defer  proof to Appendix~\ref{apx:rs}, where we prove a much stronger claim about the RS compiler, that is needed for adversaries with per-edge budgets (Theorem~\ref{thm:range-positive-result}).

Since the simulations output by the RS compiler tolerate an error rate of $\Omega(1 / m)$, we can increase error tolerance to $\Omega(1 / \tilde{m})$ by first rerouting messages through a subgraph with $|\tilde{E}| = \tilde{m}$ edges. (See Equation~\ref{eqn:reroute}.) Naturally, we incur some round complexity overhead when we reroute through a sparse subgraph; most of the effort in this section will go toward minimizing this overhead.


\subsubsection{Sparsification}

For a weighted, undirected graph $(G, w)$, let $\mathcal{L}_G(w)$ denote its Laplacian matrix. We will use the following theorem by de Carli Silva, Harvey, and Sato, which builds on
\cite{bss09} (improving in turn on the earlier~\cite{bk96}).

\begin{lem}[{\cite[Corollary 5]{dcshs11}}] \label{lem:spectral-sparsifier}
Suppose $G = (V, E)$ is an undirected graph, $w : E \to \R_+$ is a weight function, and $E = E_1 \cup \dots \cup E_k$ is a partition of the edge set. For any real $\epsilon \in (0, 1)$, there is a deterministic polynomial-time algorithm to find a subgraph $\tilde{G} = (V, \tilde{E})$ of $G$ and a weight function $\tilde{w}: \tilde{E} \to \R_+$ such that
\begin{equation} \label{eqn:spectral-sparsifier}
x^T \mathcal{L}_G(w) x \leq x^T \mathcal{L}_{\tilde{G}}(\tilde{w}) x \leq (1 + \epsilon) x^T \mathcal{L}_G(w) x \quad \text{for all $x \in \R^n$,}
\end{equation}
\begin{equation} \label{eqn:cost-control}
\sum_{e \in E_i} w_e \leq \sum_{e \in \tilde{E} \cap E_i} \tilde{w}_e \leq (1 + \epsilon) \sum_{e \in E_i} w_e \quad \text{for all $1 \leq i \leq k$,}
\end{equation}
and $|\tilde{E}| \in \O\left(\frac{n + k}{\epsilon^2}\right)$. 
\end{lem}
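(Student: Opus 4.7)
The plan is to reduce this statement to a matrix-generalization of the Batson--Spielman--Srivastava (BSS) spectral sparsification theorem. The key idea is to encode \emph{both} the Laplacian quadratic-form constraint~(\ref{eqn:spectral-sparsifier}) and the $k$ partition-sum constraints~(\ref{eqn:cost-control}) inside a single block-diagonal PSD sum living in an $(n+k)$-dimensional ambient space, and then invoke a BSS-style procedure whose sparsity bound scales with the ambient dimension.

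First, for each edge $e = \{u,v\} \in E_i$, let $L_e = w_e(\chi_u - \chi_v)(\chi_u - \chi_v)^T$ denote its edge Laplacian in $\R^{n\times n}$ (where $\chi_v$ is the indicator vector of vertex $v$), and define the block-diagonal PSD matrix
$$ B_e \;=\; L_e \,\oplus\, w_e\, f_i f_i^T \;\in\; \R^{(n+k)\times(n+k)}, $$
where $f_i \in \R^k$ is the $i$-th standard basis vector. Summed over all edges,
$$ \sum_{e \in E} B_e \;=\; \mathcal{L}_G(w) \,\oplus\, \mathrm{diag}(c_1,\dots,c_k), \qquad c_i := \sum_{e \in E_i} w_e. $$
Evaluating the quadratic form at $(x,0)$ recovers $x^T \mathcal{L}_G(w) x$, and at $(0, f_i)$ recovers $c_i$. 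So simultaneously approximating this single block-diagonal sum gives both desired inequalities.

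Next, apply a BSS-style reweighting to find scalars $s_e \geq 0$ supported on $O((n+k)/\epsilon^2)$ edges such that
$$ \sum_{e \in E} B_e \;\preceq\; \sum_{e \in E} s_e B_e \;\preceq\; (1+\epsilon)\sum_{e \in E} B_e. $$
The construction is the usual twin-barrier potential argument: maintain lower and upper barrier functions on the eigenvalues of the partial sum, and at each iteration add a single $s_e B_e$ term that nudges both barriers by a controlled amount. Since each $B_e$ has rank $2$, one must use the PSD-summand version of the argument (which allows higher-rank pieces and enforces that the two rank-one components of a single $B_e$ share the common scalar $s_e$) rather than the plainest rank-one form. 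After $O((n+k)/\epsilon^2)$ iterations the procedure terminates, and the algorithm is deterministic and runs in polynomial time. Setting $\tilde{E} = \{e : s_e > 0\}$ and $\tilde{w}_e = s_e w_e$, the upper/lower bounds, tested at $(x,0)$ and $(0,f_i)$ respectively, yield~(\ref{eqn:spectral-sparsifier}) and~(\ref{eqn:cost-control}), while $|\tilde{E}|$ is $O((n+k)/\epsilon^2)$ as required.

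The main obstacle is Step 2: extending BSS to handle PSD summands of rank $>1$ while preserving the partition structure. The naive workaround of splitting each $B_e$ into its rank-one eigencomponents and feeding them independently into the standard rank-one BSS breaks the link between an edge's Laplacian contribution and its partition-sum contribution, since the two components could be assigned different scalings. The fix is to modify the barrier update rule so that both rank-one pieces of a given $B_e$ move together under a single scalar; proving that a suitable scalar still exists at every iteration, while keeping both barrier potentials bounded, is the technical heart of the argument and is exactly the refinement that \cite{dcshs11} contributes on top of \cite{bss09}.
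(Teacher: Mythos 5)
The paper does not prove Lemma~\ref{lem:spectral-sparsifier}; it is imported as a black-box citation to \cite[Corollary 5]{dcshs11}, so there is no in-paper argument to compare against. That said, your reconstruction is essentially how \cite{dcshs11} derive their Corollary 5 from their main theorem (a BSS-style sparsification theorem for sums of arbitrary-rank PSD matrices): the block-diagonal lift $B_e = L_e \oplus w_e f_i f_i^T$ into $\R^{(n+k)\times(n+k)}$ is exactly their reduction, and testing the resulting two-sided Loewner approximation against the vectors $(x,0)$ and $(0,f_i)$ recovers~(\ref{eqn:spectral-sparsifier}) and~(\ref{eqn:cost-control}), with $|\tilde{E}|$ bounded by the ambient dimension over $\epsilon^2$. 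You also correctly pinpoint why this cannot be done by feeding the rank-one eigencomponents of each $B_e$ into the original \cite{bss09} argument separately (that would decouple an edge's Laplacian contribution from its partition-cost contribution) and why the PSD-summand generalization with a single shared scalar $s_e$ per edge is the actual content of \cite{dcshs11}. Your proposal is correct and faithfully reproduces the cited reference's argument.
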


The following is a straightforward consequence of Lemma~\ref{lem:spectral-sparsifier}.
\begin{lem} \label{lem:cut-sparsifier}
Suppose $G = (V, E)$ is a connected, undirected graph. There exists a subgraph $\tilde{G} = (V, \tilde{E})$ with $|\tilde{E}| \in \O(n)$ such that for every cut $U \subseteq V$,
\begin{equation}
\frac{5m}{n} \left|\tilde{\delta}(U)\right| \geq |\delta(U)|,
\end{equation}
where $\delta(U)$ is the set of edges in $G$ crossing $U$, and $\tilde{\delta}(U)$ is the set of edges in $\tilde{G}$ crossing $U$.
\end{lem}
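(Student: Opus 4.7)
The plan is to apply Lemma~\ref{lem:spectral-sparsifier} to $G$ with uniform unit weights and a carefully chosen partition of $E$, designed so that the cost-control constraint~(\ref{eqn:cost-control}) forces every individual weight $\tilde{w}_e$ in the sparsifier to be at most $\O(m/n)$, not merely on average. I would first dispose of the case $m \leq n$ by taking $\tilde{G} = G$ (using $m \geq n-1$ since $G$ is connected, so $\tfrac{5m}{n} \geq 1$), and henceforth assume $m > n$.

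Partition $E$ arbitrarily into $k$ groups $E_1, \ldots, E_k$, each of size at most $\lceil m/n \rceil$, so that $k \leq n$. Applying Lemma~\ref{lem:spectral-sparsifier} with this partition and constant $\epsilon = 1/2$ produces a weighted subgraph $(\tilde{G}, \tilde{w})$ with $|\tilde{E}| \in \O((n+k)/\epsilon^2) \subseteq \O(n)$. Plugging the indicator vector $\mathbf{1}_U$ into~(\ref{eqn:spectral-sparsifier}) yields $|\delta(U)| \leq \tilde{w}(\tilde{\delta}(U))$, since $\mathbf{1}_U^T \mathcal{L}_G(\mathbf{1})\mathbf{1}_U = |\delta(U)|$ and similarly in $\tilde{G}$. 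The crux is the per-edge weight bound: from~(\ref{eqn:cost-control}), for every group $i$,
\[
\sum_{e \in \tilde{E} \cap E_i} \tilde{w}_e \;\leq\; (1+\epsilon)|E_i| \;\leq\; \tfrac{3}{2} \cdot \tfrac{2m}{n} \;=\; \tfrac{3m}{n},
\]
and since each edge of $\tilde{E}$ lies in some group and weights are nonnegative, $\tilde{w}_e \leq 3m/n \leq 5m/n$ for every $e$ individually. Therefore $|\delta(U)| \leq \tilde{w}(\tilde{\delta}(U)) \leq (\max_e \tilde{w}_e) \cdot |\tilde{\delta}(U)| \leq \tfrac{5m}{n} |\tilde{\delta}(U)|$, as required.

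The main obstacle, absent this partition, would be that a generic spectral sparsifier can have a few edges of weight as large as $\Theta(n)$ (for instance, a heavy ``bridge'' representing many original edges of a small cut), which would break the final conversion from the weighted cut bound to the unweighted one. Choosing groups of size $\lceil m/n \rceil$ is exactly what converts cost control from an aggregate bound into a per-edge bound; once that step is in place, everything else is a routine unpacking of the spectral inequality on indicator vectors.
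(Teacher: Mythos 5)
Your proof is correct and follows essentially the same route as the paper: apply the de Carli Silva–Harvey–Sato sparsifier with unit weights, partition $E$ into roughly $n$ groups of size at most $\lceil m/n \rceil$ so that the cost-control inequality~(\ref{eqn:cost-control}) caps each individual weight $\tilde{w}_e$ at $\O(m/n)$, and then push the indicator vector of a cut through the spectral inequality~(\ref{eqn:spectral-sparsifier}). The only cosmetic difference is that you split off the case $m \leq n$ to simplify the arithmetic bound on $\lceil m/n \rceil$, whereas the paper uses $\lceil m/n \rceil \leq m/n + 1$ and absorbs the additive $1$ into the constant $5$ directly.
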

\begin{proof}
Define $w(e) = 1$ for every $e \in E$. Partition the edge set $E$ into $n$ sets $E = E_1 \cup \dots \cup E_n$, where each $E_i$ has at most $\lceil \frac{m}{n} \rceil$ edges in it. Pick $\epsilon = \frac{1}{2}$, and let $\tilde{G}$ be as in Lemma~\ref{lem:spectral-sparsifier}. Consider an arbitrary $e \in \tilde{E}$, say with $e \in E_i$. By Equation~\ref{eqn:cost-control},
\begin{equation}
\sum_{e \in \tilde{E} \cap E_i} \tilde{w}_e \leq \frac{3}{2}\sum_{e \in E_i} w_e.
\end{equation}
Since $w_e = 1$ for all $e$, the right-hand side is just $\frac{3}{2}|E_i|$, which is $ \leq \frac{3}{2} \lceil m / n \rceil$. Thus, in particular, $\tilde{w}_e \leq \frac{3}{2} \lceil m / n \rceil$.
Now, consider an arbitrary cut $U \subseteq V$. Let $x \in \R^n$ be the indicator function for $U$. By Equation~\ref{eqn:spectral-sparsifier},
\begin{equation}
\sum_{e \in \delta(U)} w_e \leq \sum_{e \in \tilde{\delta}(U)} \tilde{w}_e.
\end{equation}
Since $w_e = 1$, the left sum is just $|\delta(U)|$. Since every $\tilde{w}_e \leq \frac{3}{2} \lceil m / n \rceil$, the right sum is $\leq \frac{3}{2} |\tilde{\delta}(U)| \cdot \lceil m / n \rceil$. Thus,
\begin{equation}
|\delta(U)| \leq \frac{3}{2} \left\lceil \frac{m}{n} \right\rceil \left|\tilde{\delta}(U)\right| \leq \frac{3}{2} \left(\frac{m}{n} + 1 \right) \left|\tilde{\delta}(U)\right| \leq \frac{5m}{n} \left|\tilde{\delta}(U)\right|.
\end{equation}
\end{proof}

\subsubsection{Routing}

Suppose $\mathcal{N}$ is a multicommodity flow network on $G = (V, E)$; let $d_i$ denote the demand of commodity $i$. We say that the \emph{value} of a flow $F$ is the largest number $\lambda \in [0, 1]$ such that for every $i$, $\lambda d_i$ units of commodity $i$ flow from the source of $i$ to the sink of $i$ in $F$. The \emph{maximum concurrent flow} of $\mathcal{N}$ is the largest value of any flow. For any cut $U \subseteq V$, we let $\capacity(U)$ denote the sum of the capacities of edges crossing $U$, and we let $\demand(U)$ denote the sum of the demands of commodities whose sources and sinks are on opposite sides of $U$. We rely on the following approximate max-flow min-cut theorem for multicommodity flow in undirected networks, due to Linial, London, and Rabinovich.

\begin{lem}[{\cite[Theorem 4.1]{llr95}}] \label{lem:multicommodity-max-flow-min-cut}
Let $\mathcal{N}$ be a $k$-commodity undirected flow network on $G = (V, E)$ and 
 $\lambda$ its maximum concurrent flow. 
 There is a deterministic polynomial-time algorithm which, given $\mathcal{N}$, finds a cut $U \subseteq V$ such that
\begin{equation}
\frac{\capacity(U)}{\demand(U)} \leq \O(\log k) \cdot \lambda.
\end{equation}
\end{lem}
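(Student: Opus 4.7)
The plan is to combine LP duality with a low-distortion embedding of finite metrics into $\ell_1$. First I would write the maximum concurrent flow as a linear program (with path variables per commodity, edge-capacity constraints, and per-commodity scaled-demand constraints), take its dual, and reinterpret the dual edge weights as inducing a shortest-path semimetric $d$ on $V$. Strong LP duality together with this reparametrization yields
\[
\lambda \;=\; \min_{d}\; \frac{\sum_{e=(u,v)\in E} c_e \cdot d(u,v)}{\sum_{i=1}^{k} d_i \cdot d(s_i,t_i)},
\]
where the minimum is over all semimetrics on $V$, $c_e$ is the capacity of edge $e$, and $(s_i,t_i,d_i)$ denote source, sink, and demand of commodity $i$.

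Next I would observe that each cut $U \subseteq V$ induces a cut semimetric $d_U(x,y) = |\mathbf{1}_U(x) - \mathbf{1}_U(y)|$ for which the ratio above equals exactly $\capacity(U)/\demand(U)$. Hence $\min_U \capacity(U)/\demand(U)$ is the same objective restricted to cut metrics, and is therefore at least $\lambda$; the flow-cut gap is precisely the factor by which this restriction is worse than the unrestricted minimum. Since any $\ell_1$ metric on a finite point set is a nonnegative combination of cut metrics, whenever $f:(V,d^\ast) \to \ell_1$ is an embedding of the LP-optimal metric $d^\ast$ with distortion $\alpha$, an averaging argument over the cut decomposition of $f$'s image produces a single cut $U$ with $\capacity(U)/\demand(U) \leq \alpha \cdot \lambda$.

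The heart of the proof is then to exhibit such an embedding with $\alpha \in \O(\log k)$. Bourgain's classical theorem gives distortion $\O(\log n)$, but since only the $2k$ terminals $\{s_i,t_i\}$ appear in the ratio's denominator, it suffices to control distances between these points. The refined random-subset construction of \cite{llr95}, using $\O(\log k)$ scales calibrated to the terminal set, delivers the desired $\O(\log k)$ dependence. All ingredients are constructive: the LP is solvable in polynomial time (via a compact multicommodity flow formulation or the ellipsoid method); the Bourgain-style embedding has a polynomial-time randomized construction that can be derandomized by the method of conditional expectations; and the cut decomposition of a finite $\ell_1$ metric is obtained by sorting each coordinate, producing polynomially many candidate cuts to scan for the best ratio.

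The main obstacle I anticipate is the $\O(\log k)$ distortion bound itself. The LP duality manipulation and the passage from $\ell_1$ metrics to cut metrics are essentially mechanical, but the logarithmic embedding into $\ell_1$ is the genuine metric-geometric content of the theorem, and sharpening the dependence from $\log n$ (Bourgain) to $\log k$ is exactly what requires the careful terminal-aware random-subset argument that drives the result.
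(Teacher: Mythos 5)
The paper does not prove this lemma; it is invoked directly as a black box, citing Linial, London, and Rabinovich~\cite{llr95}. Your outline is a faithful reconstruction of the LLR argument: LP duality reduces the max concurrent flow to a minimization over semimetrics, cut metrics are the extreme rays of the $\ell_1$ cone so that restricting to cuts recovers $\min_U \capacity(U)/\demand(U)$, and the flow--cut gap equals the $\ell_1$-embedding distortion of the LP-optimal metric restricted to the terminals. The refinement from Bourgain's $\O(\log n)$ to $\O(\log k)$ by running the random-subset construction only on the terminal set is exactly the right observation, and the derandomization and polynomial cut-enumeration remarks are standard. In short, your proof is correct and is essentially the proof given in the cited work, which the paper uses without reproving.
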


Given just a digraph $G$, we can naturally define an $m$-commodity flow network $\mathcal{N}_G$ on $G$: the commodity associated with edge $(P_i, P_j)$ has source $P_i$, sink $P_j$, and demand $1$; every edge has capacity $1$. Combining Lemma~\ref{lem:cut-sparsifier} with Lemma~\ref{lem:multicommodity-max-flow-min-cut}, we can prove the following lemma.

\begin{lem} \label{lem:sparse-flow}
Suppose $G$ is an undirected graph. There exists a flow for $\mathcal{N}_G$ with value $\Omega\left(\frac{n}{m \log m}\right)$ which uses only $\O(n)$ edges.
\end{lem}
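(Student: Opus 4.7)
The plan is to take $\tilde{G}$ to be the cut sparsifier of $G$ given by Lemma~\ref{lem:cut-sparsifier}, and then route all $m$ commodities of $\mathcal{N}_G$ through $\tilde{G}$. The sparsifier already gives us $|\tilde{E}| \in \O(n)$, so the only thing we have to establish is that the maximum concurrent flow in the modified network has value $\Omega\bigl(\tfrac{n}{m \log m}\bigr)$.

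More precisely, define $\tilde{\mathcal{N}}_G$ to be the $m$-commodity flow network on $\tilde{G}$ with the same commodities as $\mathcal{N}_G$ (one commodity of demand $1$ per edge of $G$) and unit capacities on every edge of $\tilde{G}$. I will bound the max-flow via Lemma~\ref{lem:multicommodity-max-flow-min-cut} applied to $\tilde{\mathcal{N}}_G$. For any cut $U \subseteq V$, the capacity across the cut in $\tilde{G}$ is $|\tilde{\delta}(U)|$, while $\demand(U)$ counts the commodities of $\tilde{\mathcal{N}}_G$ whose source and sink are separated by $U$, i.e., exactly the edges of $G$ crossing $U$, so $\demand(U) = |\delta(U)|$. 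Lemma~\ref{lem:cut-sparsifier} then gives
\begin{equation}
\frac{\capacity(U)}{\demand(U)} \;=\; \frac{|\tilde{\delta}(U)|}{|\delta(U)|} \;\geq\; \frac{n}{5m}
\end{equation}
for every cut $U$. Since $\tilde{G}$ is connected (inheriting the cut-sparsifier property from a connected graph, so $|\tilde{\delta}(U)| \geq 1$ whenever $\emptyset \neq U \neq V$), this holds uniformly.

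By the contrapositive of Lemma~\ref{lem:multicommodity-max-flow-min-cut} with $k = m$, the max concurrent flow $\lambda$ of $\tilde{\mathcal{N}}_G$ satisfies
\begin{equation}
\lambda \;\geq\; \frac{1}{\O(\log m)} \cdot \min_{U} \frac{\capacity(U)}{\demand(U)} \;\geq\; \Omega\!\left(\frac{n}{m \log m}\right).
\end{equation}
Any flow achieving this value (viewed as a flow in $G$) uses only the edges of $\tilde{G}$, of which there are $\O(n)$, finishing the proof.

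The substance of the argument lies entirely in matching up the two sparsification/routing tools, so there is no real obstacle beyond confirming that the cut sparsifier of Lemma~\ref{lem:cut-sparsifier} is precisely what is needed to translate the cut condition required by Lemma~\ref{lem:multicommodity-max-flow-min-cut} into a bound on $\lambda$. The only minor subtlety is checking that the sparsifier guarantees a nonzero capacity on every nontrivial cut (so that $\capacity(U) > 0$ whenever $\demand(U) > 0$), which follows because $G$ is connected and Lemma~\ref{lem:cut-sparsifier} forces $|\tilde{\delta}(U)| \geq |\delta(U)| \cdot \tfrac{n}{5m} > 0$.
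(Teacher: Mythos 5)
Your proof is correct and follows essentially the same approach as the paper: route the commodities of $\mathcal{N}_G$ through the cut sparsifier of Lemma~\ref{lem:cut-sparsifier}, observe that capacity and demand of any cut become $|\tilde{\delta}(U)|$ and $|\delta(U)|$ respectively, and then invoke the approximate max-flow min-cut theorem of Lemma~\ref{lem:multicommodity-max-flow-min-cut} to lower-bound the maximum concurrent flow. The only cosmetic difference is that the paper phrases the application of Lemma~\ref{lem:multicommodity-max-flow-min-cut} by naming the specific cut it produces, while you phrase it as a contrapositive over all cuts; these are equivalent.
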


\begin{proof}
Let $\tilde{G}$ be as in Lemma~\ref{lem:cut-sparsifier}, and let $\mathcal{N}$ denote the $m$-commodity flow network on $\tilde{G}$ with all the same commodities as $\mathcal{N}_G$ (and with every edge in $\tilde{G}$ still having capacity $1$.) For any cut $U$, the capacity $\capacity(U)$ is just the number of edges in $\tilde{G}$ which cross $U$, i.e. $|\tilde{\delta}(U)|$; the demand $\demand(U)$ is just the number of edges in $G$ which cross $U$, i.e. $|\delta(U)|$. Thus, if we let $U$ be that guaranteed by Lemma~\ref{lem:multicommodity-max-flow-min-cut} for $\mathcal{N}$, we have
\begin{equation}
\frac{n}{5m} \leq \frac{|\tilde{\delta}(U)|}{|\delta(U)|} = \frac{\capacity(U)}{\demand(U)} \leq \O(\log m) \cdot \lambda,
\end{equation}
and hence $\lambda \in \Omega(\frac{n}{m \log m})$. Of course, the same flow which achieves this $\lambda$ in $\mathcal{N}$ can be used in $\mathcal{N}_G$, completing the proof.
\end{proof}

Flows are allowed to be fractional, but ultimately, we are interested in integer flows (i.e. collections of paths.) The following lemma quantifies the sense in which fractional flows do not cause too much trouble.

\begin{lem} \label{lem:flow-to-paths}
Suppose $G = (V, E)$ is a digraph, and there is a flow $F$ for $\mathcal{N}_G$ with value $\lambda$ which only uses $s$ edges. Then there exists a set $\mathcal{P}$ of $m$ paths through $G$, containing one path from $P_i$ to $P_j$ for each $(P_i, P_j) \in E$, which uses at most $s$ distinct edges in total and which has congestion at most $9(\frac{1}{\lambda} + \ln m)$.
\end{lem}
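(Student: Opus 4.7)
The plan is randomized rounding of the flow $F$. For each commodity $(P_i, P_j) \in E$, I would decompose the $\lambda$ units of commodity $(i,j)$-flow carried by $F$ into a convex combination of simple $P_i$-to-$P_j$ paths, each supported on edges where $F$ is strictly positive. Dividing by $\lambda$ yields a probability distribution $\mu_{ij}$ on such paths. Independently sample one path $p_{ij} \sim \mu_{ij}$ for each commodity, and set $\mathcal{P} = \{p_{ij} : (P_i, P_j) \in E\}$. By construction $\mathcal{P}$ contains exactly one path per commodity, and since every $p_{ij}$ lies in the support of $F$, the total number of distinct edges used by $\mathcal{P}$ is at most $s$.

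To bound the congestion, fix an edge $e$ and let $X_e$ count the commodities whose sampled path contains $e$. Then $X_e$ is a sum of independent $\{0,1\}$-valued random variables, one per commodity, and
\begin{equation}
\mathbb{E}[X_e] \;=\; \sum_{(i,j) \in E} \Pr[e \in p_{ij}] \;=\; \frac{1}{\lambda}\sum_{(i,j) \in E} (\text{commodity $(i,j)$ flow through $e$}) \;\le\; \frac{1}{\lambda},
\end{equation}
since every edge of $\mathcal{N}_G$ has capacity $1$. I would then apply the multiplicative Chernoff tail bound $\Pr[X_e \ge t] \le (e\,\mathbb{E}[X_e]/t)^t$ with $t = 9(\tfrac{1}{\lambda} + \ln m)$. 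Because $t \ge 9/\lambda$ and $\mathbb{E}[X_e] \le 1/\lambda$, the ratio $e\,\mathbb{E}[X_e]/t \le e/9 < 1/2$, so
\begin{equation}
\Pr[X_e \ge t] \;\le\; 2^{-t} \;\le\; 2^{-9\ln m} \;<\; 1/m.
\end{equation}
A union bound over the at most $m$ edges in $E$ shows that with positive probability no edge carries congestion $\ge 9(\tfrac{1}{\lambda} + \ln m)$, so a path collection $\mathcal{P}$ satisfying all three conditions exists.

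The only step requiring any care is selecting the form of the Chernoff tail so that the constant $9$ in the statement falls out cleanly against the $\ln m$ term from the union bound; the remaining ingredients (flow-to-path decomposition, independence across commodities, support containment, and the union bound itself) are all textbook, so I do not expect any genuine obstacle.
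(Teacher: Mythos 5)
Your proof is correct and follows essentially the same randomized-rounding plus Chernoff plus union-bound argument as the paper. The only cosmetic differences are that the paper samples each $p_{ij}$ by a random walk with edge probabilities proportional to the commodity-$(i,j)$ flow rather than via an explicit path decomposition (both induce $\Pr[e \in p_{ij}] = $ commodity flow on $e$ divided by $\lambda$), and it uses the $\exp(-\epsilon^2\mu/(2+\epsilon))$ form of the Chernoff bound rather than the $(e\mu/t)^t$ form; both yield the stated constant.
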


The proof of Lemma~\ref{lem:flow-to-paths} is a straightforward probabilistic argument, which we defer to Appendix~\ref{apx:flow-to-paths}.

\begin{lem} \label{lem:useful-paths}
Suppose $G = (V, E)$ is a connected, undirected graph. There exists a subgraph $\tilde{G} = (V, \tilde{E})$ with $\O(n)$ edges and a set $\mathcal{P}$ of $m$ simple paths through $\tilde{G}$, such that
\begin{enumerate}[(i)]
\item $\mathcal{P}$ contains one path from $P_i$ to $P_j$ for each $(P_i, P_j) \in E$, and
\item $\mathcal{P}$ has dilation $\O(\frac{m \log n}{n})$ and congestion $\O(\frac{m \log n}{n})$.
\end{enumerate} 
\end{lem}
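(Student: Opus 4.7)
The plan is to combine Lemmas~\ref{lem:sparse-flow} and~\ref{lem:flow-to-paths} to obtain a routing with low congestion in an $O(n)$-edge subgraph, and then fix the dilation by shortcutting a small number of very long paths using the corresponding direct edges of $G$.

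Concretely, I first apply Lemma~\ref{lem:sparse-flow} to get a flow $F$ for $\mathcal{N}_G$ of value $\lambda \in \Omega(n/(m\log m))$ using at most $s \in O(n)$ edges, and then pass $F$ to Lemma~\ref{lem:flow-to-paths} to extract a set $\mathcal{P}_0$ of $m$ simple paths---one per edge of $E$---living in a subgraph $\tilde{G}_0 = (V, \tilde{E}_0)$ with $|\tilde{E}_0| \leq s$ and congestion at most $9\left(\tfrac{1}{\lambda} + \ln m\right) \in O(m\log n / n)$ (using $\log m \in O(\log n)$ and $m \geq n-1$ for connected $G$).

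Next I bound the total path length. Summing the congestion bound over the $O(n)$ edges of $\tilde{E}_0$,
\[
\sum_{p \in \mathcal{P}_0} |p| \;=\; \sum_{e \in \tilde{E}_0} (\text{congestion on } e) \;\in\; O(m \log n).
\]
Set $L := C \cdot m \log n / n$ for a sufficiently large constant $C$. By Markov, at most $n/2$ paths of $\mathcal{P}_0$ can have length exceeding $L$. For each such long path, which routes some commodity $(u,v) \in E$, I replace it in $\mathcal{P}_0$ by the single-edge path consisting of $(u,v)$, adding that edge to the subgraph if it was not already present. The resulting graph $\tilde{G}$ satisfies $|E(\tilde{G})| \leq |\tilde{E}_0| + n/2 \in O(n)$, and the resulting path family $\mathcal{P}$ has dilation $\max(L,1) \in O(m\log n / n)$; the length-$1$ replacements are trivially simple and the surviving paths were simple to begin with. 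The congestion on each edge of $\tilde{E}_0$ is unchanged except possibly for edges that are themselves added as shortcuts, whose congestion rises by at most $1$; each freshly added shortcut edge carries exactly one path. So the final congestion is still $O(m \log n / n)$.

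The main obstacle, and the point where this argument goes beyond Lemmas~\ref{lem:sparse-flow} and~\ref{lem:flow-to-paths}, is the dilation bound: flow-based routing delivers congestion but no a priori control on individual path lengths, and simple paths in an $O(n)$-edge subgraph can have length up to $\Theta(n)$, which in the sparse regime ($m$ close to $n$) exceeds the target $O(m\log n/n)$. The Markov-style averaging above is what makes the shortcutting trick work, because it guarantees that only $O(n)$ paths need to be shortened---exactly the number of extra edges that our $O(n)$ budget on $|E(\tilde{G})|$ can accommodate.
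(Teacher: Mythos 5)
Your proposal is correct and follows essentially the same route as the paper: combine Lemma~\ref{lem:sparse-flow} with Lemma~\ref{lem:flow-to-paths} to get an $\O(n)$-edge, low-congestion path family, bound the total path length by summing congestion over the $\O(n)$ edges, and then shortcut the $\O(n)$ long paths via their direct $G$-edges, which costs only $\O(n)$ extra edges and raises any edge's congestion by at most $1$. The paper states the Markov-style count ("the number of paths of length at least $\frac{m\log m}{n}$ is $\O(n)$") rather than phrasing it with an explicit threshold $L$ and a $n/2$ bound, but the argument is identical.
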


\begin{proof}
From Lemmas~\ref{lem:sparse-flow} and \ref{lem:flow-to-paths}, there exists a set $\mathcal{P}_0$ of $m$ paths, containing one path from $P_i$ to $P_j$ for each $(P_i, P_j) \in E$, which uses $\O(n)$ distinct edges in total and which has congestion $\O(\frac{m \log m}{n})$. Let $p_{ij}$ denote the path from $P_i$ to $P_j$ in $\mathcal{P}_0$. Define a path $p_{ij}'$ from $P_i$ to $P_j$ by
\begin{equation}
p'_{ij} =
\begin{cases}
p_{ij} & \text{if $p_{ij}$ has length $\leq \frac{m \log m}{n}$} \\
(P_i, P_j) & \text{otherwise.}
\end{cases}
\end{equation}
Let $\mathcal{P} = \{p'_{ij} : (P_i, P_j) \in E\}$, and let $\tilde{E}$ be the set of edges used by $\mathcal{P}$. Because of the bounds of $\mathcal{P}_0$, the sum of the lengths of the paths in $\mathcal{P}_0$ must be $\O(m \log m)$. Therefore, in particular, the number of paths in $\mathcal{P}_0$ of length at least $\frac{m \log m}{n}$ is $\O(n)$. Therefore, $\mathcal{P}$ still uses only $\O(n)$ distinct edges in total. Furthermore, by construction, the dilation of $\mathcal{P}$ is no more than $\frac{m \log m}{n}$. Finally, the congestion on an edge $e$ in $\mathcal{P}$ is no more than the congestion of that edge in $\mathcal{P}_0$, plus $1$ for the length-$1$ path across $e$ which may be in $\mathcal{P} \setminus \mathcal{P}_0$. Thus, in particular, the congestion of $\mathcal{P}$ is still $\O\left(\frac{m \log m}{n}\right)$. Of course, $\O(\log m) = \O(\log n)$, so we are done.
\end{proof}

\subsubsection{Scheduling}

We use the following fundamental theorem of Leighton, Maggs and Rao.

\begin{lem}[{\cite[Theorem 3.4]{lmr94}}] \label{lem:magic-scheduling-theorem}
Suppose $G$ is a digraph, and $\mathcal{P}$ is a set of simple paths through $G$ with dilation $\ell$ and congestion $c$. There exists a schedule for routing packets along the paths in $\mathcal{P}$, with at most one packet traversing each edge in each time step, in a total of $\O(c + \ell)$ time steps. 
\end{lem}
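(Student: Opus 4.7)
The plan is to prove Lemma~\ref{lem:magic-scheduling-theorem} by the classical random-delay technique combined with an iterated use of the Lov\'asz Local Lemma (LLL). The starting observation is that if each packet begins traversing its path at time $0$ and advances one edge per time step, the schedule has length exactly $\ell$, but multiple packets may cross the same edge at the same time, with multiplicity up to $c$. To smooth this out, I would assign each packet $P$ an independent uniform initial delay $\delta_P \in \{0, 1, \dots, \lceil \alpha c \rceil\}$ for a suitable constant $\alpha$, and let $P$ start traversing its path at time $\delta_P$. The total schedule length becomes $\alpha c + \ell \in \O(c + \ell)$, and for each edge $e$ and time $t$, the expected number of packets traversing $e$ at time $t$ is $\O(1/\alpha)$.

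Next, for each pair $(e, t)$ define a bad event $A_{e, t}$: more than $b$ packets cross $e$ at time $t$, where $b$ is a target cap. Since $A_{e,t}$ depends only on packets whose paths pass through $e$, and any such packet ``interacts'' only with packets whose paths share at least one edge with its own, $A_{e,t}$ is mutually independent of all but polynomially many other bad events, and its probability can be bounded by a Chernoff-type estimate. A single direct application of the LLL then lets one avoid every $A_{e,t}$ with $b \in \O(\log(m(c+\ell)))$---too weak. To push $b$ down to a constant, I would iterate: partition time into frames of some length $f$, apply the delay argument inside each frame to reduce per-slot congestion by a further constant factor, and recurse. After a bounded number of levels the per-slot congestion is $\O(1)$, at which point a modest constant-factor time expansion converts the result into a genuine one-packet-per-edge-per-step schedule.

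The main difficulty will be controlling the multiplicative blow-up of the schedule length across the recursion. Each level of the LLL reduction pays a constant factor in length per frame in exchange for a constant-factor decrease in per-slot congestion; the parameters (frame size, delay range, congestion target) must be chosen so that the geometric series of penalties telescopes and the total length stays within $\O(c+\ell)$. This careful bookkeeping is precisely what makes the full argument of~\cite{lmr94} lengthy, but the core idea is transparent: random delays spread conflicts uniformly over time, the LLL certifies that sparse conflicts can be eliminated simultaneously, and recursion shrinks worst-case per-slot congestion down to a constant while charging only an $\O(1)$ factor to the overall length.
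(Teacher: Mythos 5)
The paper does not prove this lemma; it is cited verbatim from Leighton, Maggs, and Rao \cite{lmr94}, Theorem 3.4, and used as a black box. Your sketch is a recognizable outline of LMR's actual argument---random initial delays, a Lov\'asz Local Lemma step to bound per-slot (or per-frame) congestion, and a recursion that drives that bound down to a constant---so the structure is sound. Two points of imprecision are worth flagging, since they are exactly where the substance of LMR lives. First, the whole reason to invoke the LLL rather than a union bound is that the dependency structure is \emph{local}: a single application already yields a threshold depending only on $c$ (in the actual argument, one bounds the congestion in time frames of length $\Theta(\log c)$ by $O(\log c)$), not the $O(\log(m(c+\ell)))$ you quote, which is what a naive union bound would give and which would defeat the purpose. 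Second, the recursion does not terminate after ``a bounded number of levels'' in the sense of $O(1)$; it runs for roughly $\log^* c$ levels, and the overall length remains $\O(c+\ell)$ because the per-level stretch factors form a convergent product, not because the depth is constant. A complete proof would need to nail down both of these, but as a roadmap your proposal captures the ideas the paper is outsourcing to~\cite{lmr94}.
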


\begin{proof}[of Theorem~$\ref{thm:sparsifier-error-correction}$]
It suffices to describe $\tilde{\pi}^* = C(\pi^*)$. The compiler $C$ is formed by composing a ``sparsifying compiler'' with the RS compiler, as depicted in Equation~\ref{eqn:reroute}.
\begin{equation} \label{eqn:reroute}
C: \pi^* \quad \stackrel{\text{Sparsifying compiler}}{\longmapsto} \quad \pi' \quad \stackrel{\text{RS compiler}}{\longmapsto} \quad \tilde{\pi}^*
\end{equation}
Let $\tilde{G}$ and $\mathcal{P}$ be as in Lemma~\ref{lem:useful-paths}; the intermediate protocol $\pi'$ runs on $\tilde{G}$. On input $x = (x_1, \dots, x_n)$, each round of $\pi^*$ is simulated by $\O\left(\frac{m \log n}{n}\right)$ rounds in $\pi'$ as follows. Assume inductively that we have already simulated $\tau$ rounds. Based on these simulations, for each $(P_i, P_j) \in E$, there is some bit $b_{ij}$ which $x_i$ instructs $P_i$ to send to $P_j$ during round $\tau + 1$ of $\pi^*$. By Lemma~\ref{lem:magic-scheduling-theorem}, there is a schedule by which the parties can coordinate so that every $b_{ij}$ reaches its destination after $\O\left(\frac{m \log n}{n}\right)$ rounds; the parties follow this schedule. Thus, $\pi'$ successfully simulates $\pi^*$ on a noiseless network, and runs in $\O\left(\frac{m \log n}{n}T\right)$ rounds. Therefore, by Proposition~\ref{prop:rs94}, $\tilde{\pi}^*$ tolerates an error rate of $\Omega(\frac{1}{n})$ as a simulation of $\pi^*$, and still runs in $\O\left(\frac{m \log n}{n}T\right)$ rounds.
\end{proof}

The sparse subgraph can be efficiently constructed, as stated in Lemma~\ref{lem:spectral-sparsifier}. There are efficient algorithms for constructing multicommodity flows that are within a factor of $1 + \epsilon$ of optimal; see e.g. \cite{mad10}. The proofs of Lemmas~\ref{lem:flow-to-paths} and~\ref{lem:useful-paths} can be implemented as efficient randomized algorithms in a straightforward way. Efficient randomized algorithms are also known which construct schedulers with the parameters of Lemma~\ref{lem:magic-scheduling-theorem}~\cite{lmr99}. The RS compiler is not computationally efficient in the presence of adversarial errors.

\subsection{The noise threshold in arbitrary digraphs}
\subsubsection{Positive result (lower bound on tolerable error rates)}
We can now also easily obtain a lower bound on the maximum tolerable error rate on arbitrary \emph{directed} graphs; in this setting, results on undirected sparsification do not help us to reduce the round complexity of the simulation. What is most interesting here is identification of the graph parameter that governs the adversarial noise threshold.

\begin{thm} \label{thm:reachability-preserving-subgraph-error-rate}
Suppose $G = (V, E)$ is a digraph without isolated vertices, and suppose each minimum equivalent digraph of $G$ has $s$ edges. There exists a compiler $C$ such that if $\pi$ is a $T$-round protocol on $G$, then $C(\pi)$ tolerates a bit error rate of $\Omega(\frac{1}{s})$ and has round complexity $\O(mT)$.
\end{thm}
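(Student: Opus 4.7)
The plan is to mirror the proof of Theorem~\ref{thm:sparsifier-error-correction}, with a minimum equivalent digraph $G^*$ playing the role of the cut sparsifier. Let $G^* = (V, E^*)$ be a minimum equivalent digraph of $G$, so $|E^*| = s$. Since $G^*$ is reachability-equivalent to $G$, for every edge $(P_i, P_j) \in E$ (which certifies that $P_j$ is reachable from $P_i$ in $G$) there is a directed path from $P_i$ to $P_j$ in $G^*$, and I would fix one such simple path $p_{ij}$ for each edge of $G$.

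The compiler will be the composition
\[
C \colon \pi^* \;\longmapsto\; \pi' \;\stackrel{\text{RS}}{\longmapsto}\; \tilde{\pi}^*,
\]
where the intermediate protocol $\pi'$ runs on $G^*$ and simulates $\pi^*$ one round at a time by routing, for each $(P_i, P_j) \in E$, the bit $b_{ij}$ that $\pi^*$ would have sent from $P_i$ to $P_j$ along the path $p_{ij}$. The collection $\mathcal{P} = \{p_{ij}\}$ has dilation at most $s$ (each $p_{ij}$ is a simple path in $G^*$) and congestion at most $m$ (trivially, since $|\mathcal{P}| = m$). Applying Lemma~\ref{lem:magic-scheduling-theorem}, one round of $\pi^*$ can be simulated on $G^*$ in $\O(m + s) = \O(m)$ rounds, so $\pi'$ has round complexity $\O(mT)$ and correctly simulates $\pi^*$ on a noiseless network.

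Finally I would apply Proposition~\ref{prop:rs94} to $\pi'$. Because the underlying network $G^*$ has only $s$ edges, the RS compiler yields a simulation $\tilde{\pi}^*$ that tolerates a bit error rate of $\Omega(1/s)$, while inflating the round complexity by only a constant factor, giving the claimed $\O(mT)$ bound.

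No step is really an obstacle: reachability-equivalence supplies the routing paths immediately, and both Lemma~\ref{lem:magic-scheduling-theorem} and Proposition~\ref{prop:rs94} are invoked as black boxes. The hypothesis that $G$ has no isolated vertices only ensures that the reachability-equivalence definition is sensible and that $G^*$ inherits no isolated vertices either. Unlike the undirected case, no sparsification or multicommodity-flow machinery is required here, because a minimum equivalent digraph automatically plays the role of the sparse subnetwork, and its edge count $s$ is precisely the graph parameter that governs the adversarial noise threshold.
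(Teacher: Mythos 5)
Your proposal is correct and matches the paper's own proof essentially verbatim: pick a minimum equivalent digraph, route each bit of $\pi^*$ along a fixed simple path in it (dilation and congestion both $O(m)$, so Lemma~\ref{lem:magic-scheduling-theorem} gives an $O(mT)$-round intermediate protocol on an $s$-edge network), then apply the RS compiler via Proposition~\ref{prop:rs94}. The only cosmetic difference is that the paper bounds the dilation by $n$ rather than $s$, but either bound suffices for the $O(m)$ scheduling estimate.
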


\begin{proof}
Pick some minimum equivalent digraph $\tilde{G} = (V, \tilde{E})$. Define $\mathcal{P}$ to include, for each $(P_i, P_j) \in E$, some simple path from $P_i$ to $P_j$ through $\tilde{G}$. Clearly, $\mathcal{P}$ has dilation no more than $n$ and congestion no more than $m$, and there are at most $s$ distinct edges used by $\mathcal{P}$. The same construction as in the proof of Theorem~\ref{thm:sparsifier-error-correction} works here.
\end{proof}

We remark that finding a minimum equivalent digraph is NP-hard, but there is a polynomial-time approximation algorithm with a performance guarantee of about 1.64 \cite{kry02}.

\subsubsection{Negative result (upper bound on tolerable error rates)} \label{sec:negative-results}

We now show
that the error rate of Theorem~\ref{thm:reachability-preserving-subgraph-error-rate} is within a constant factor of optimal. We begin with the following result by Moyles and Thompson.

\begin{lem}[{\cite[Theorem 1]{mt69}}] \label{lem:acyclic-subgraph}
Suppose $G = (V, E)$ is a directed acyclic graph, and $\tilde{G} = (V, \tilde{E})$ is a minimum equivalent digraph of $G$. Then $\tilde{E}$ is exactly the set of edges $(P_i, P_j) \in E$ such that there is no path from $P_i$ to $P_j$ through $G$ which avoids the edge $(P_i, P_j)$. \qed
\end{lem}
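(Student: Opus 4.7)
Write $e = (P_i, P_j)$ and let $E^\star$ denote the set of edges $e \in E$ for which no path from $P_i$ to $P_j$ in $G$ avoids $e$. The plan is to show $E^\star \subseteq \tilde E$ and $\tilde E \subseteq E^\star$ separately.

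The containment $E^\star \subseteq \tilde E$ is essentially a contrapositive. If $e \notin \tilde E$, then since $\tilde G$ is reachability-equivalent to $G$ and $P_j$ is reachable from $P_i$ in $G$, there is a path from $P_i$ to $P_j$ in $\tilde G \subseteq G \setminus \{e\}$; so $e \notin E^\star$.

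The reverse containment $\tilde E \subseteq E^\star$ is the main content. I would suppose for contradiction that some $e = (P_i, P_j) \in \tilde E$ admits an alternative path $p = (P_i, v_1, v_2, \dots, v_k, P_j)$ in $G$ avoiding $e$ (in particular $v_1 \neq P_j$ and $v_k \neq P_i$), and then exhibit a $P_i$-to-$P_j$ walk in $\tilde G \setminus \{e\}$, contradicting minimality of $\tilde G$. By reachability-equivalence, each consecutive pair along $p$ is connected by a directed path in $\tilde G$: pick paths $q_0$ from $P_i$ to $v_1$, $q_\ell$ from $v_\ell$ to $v_{\ell+1}$ for $1 \le \ell < k$, and $q_k$ from $v_k$ to $P_j$. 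Concatenating them yields a walk $W$ from $P_i$ to $P_j$ in $\tilde G$. It remains to argue $W$ does not traverse $e$.

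The acyclicity of $G$ is the crux here. I would argue that no $q_\ell$ with $\ell < k$ visits $P_j$: if such a $q_\ell$ did, then $v_{\ell+1}$ (or $v_1$ when $\ell = 0$) would be reachable from $P_j$ in $\tilde G \subseteq G$, yet $P_j$ is already reachable from $v_{\ell+1}$ via the tail of $p$, producing a cycle (and the endpoints differ because $v_{\ell+1} \neq P_j$). Similarly, $q_k$ cannot use the edge $e$, because traversing $e$ would require visiting $P_i$ as an intermediate vertex, giving $P_i$ reachable from $v_k$ in $G$ while $v_k$ is reachable from $P_i$ along $p$, again a cycle. Thus none of the $q_\ell$ use $e$, so $W$ is a walk in $\tilde G \setminus \{e\}$; extracting a simple path from $W$ shows that $P_j$ remains reachable from $P_i$ in $\tilde G \setminus \{e\}$, contradicting the minimality of $\tilde E$.

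The main (mild) obstacle is keeping the DAG case-analysis clean: making sure the subpath $q_0$ into $v_1$, the internal subpaths $q_\ell$, and the terminal subpath $q_k$ into $P_j$ are all handled, and that the endpoint coincidences ($v_1 = P_j$, $v_k = P_i$) are correctly ruled out from the hypothesis that $p$ avoids $e$.
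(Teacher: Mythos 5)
The paper does not prove this lemma; it is cited from Moyles and Thompson \cite[Theorem 1]{mt69}, and the \qed{} is built into the statement. Your self-contained argument is correct. The inclusion $E^\star \subseteq \tilde{E}$ is indeed the contrapositive you state, and for $\tilde{E} \subseteq E^\star$, decomposing the alternative path $p = (P_i, v_1, \dots, v_k, P_j)$ into $\tilde{G}$-paths $q_0, \dots, q_k$ and using acyclicity to forbid any $q_\ell$ from traversing $e$ works. Two details worth spelling out in a full write-up: the step $v_{\ell+1} \neq P_j$ uses that $p$ is vertex-simple, which is automatic in a DAG since a repeated vertex would close a cycle; and to contradict minimality you need $\tilde{G} \setminus \{e\}$ to be reachability-equivalent to $G$ in full, not merely that $P_i$ still reaches $P_j$ -- this follows because any $\tilde{G}$-path that used $e$ can be rerouted through the walk $W$. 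You could also streamline the hard direction: it suffices to split $p$ at $v_1$ alone (which exists since $G$ is simple and $p$ avoids $e$), concatenating a $\tilde{G}$-path from $P_i$ to $v_1$ with one from $v_1$ to $P_j$; the same acyclicity argument, using $v_1 \neq P_i$ and $v_1 \neq P_j$, rules out $e$ appearing in either piece.
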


Suppose $G = (V, E)$ is a digraph.
We define the \emph{relative edge connectivity (REC)} of $G$ to be the least $k$ such that there are $k$ edges whose removal from $G$ changes the reachability relation. For example, if $G$ is strongly connected, then its REC is simply its edge connectivity.

\begin{lem} \label{lem:reachability-preserving-subgraph-relative-edge-connectivity}
Suppose $G = (V, E)$ is a digraph with no isolated vertices, with $\rec(G)=k$. Then it has a reachability-equivalent subgraph $\tilde{G} = (V, \tilde{E})$ with no more than $5m / k$ edges.
\end{lem}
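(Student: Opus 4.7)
The plan is to upper-bound $|\tilde E|$ by constructing one particular reachability-equivalent subgraph $H$ of $G$ with few edges. First, I will prove a degree bound: if some vertex $v$ has out-degree $d$ with $0<d<k$, then deleting all $d$ of its out-edges removes fewer than $k$ edges yet strips every out-neighbor from the reachability set of $v$, contradicting $\rec(G)=k$. Hence every non-sink has out-degree $\geq k$ and every non-source has in-degree $\geq k$, giving $m\geq k\cdot|\{v:\deg^+(v)>0\}|$ and $m\geq k\cdot|\{v:\deg^-(v)>0\}|$. Since no vertex is isolated, every vertex is a non-sink or a non-source, so $n\leq 2m/k$.

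Next, decompose reachability using the SCCs $S_1,\dots,S_r$ of $G$, with $|S_i|=n_i$, and the condensation DAG $G^*$. Build $H$ by taking, for each $S_i$, a strongly connected spanning subgraph of $G[S_i]$ with at most $2(n_i-1)$ edges (the union of an out-arborescence and an in-arborescence rooted at any common vertex of $S_i$), together with, for each edge of the transitive reduction of $G^*$, one corresponding $G$-edge between the two SCCs. By Lemma~\ref{lem:acyclic-subgraph} the transitive reduction is the minimum equivalent digraph of $G^*$, so $H$ preserves reachability both within and between SCCs and is therefore reachability-equivalent to $G$.

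The intra-SCC contribution to $|H|$ is at most $\sum_i 2(n_i-1)\leq 2n\leq 4m/k$, by the degree bound. For the inter-SCC part, let $m_{ij}$ denote the number of $G$-edges from $S_i$ to $S_j$; for any edge $(S_i,S_j)$ of the transitive reduction, deleting those $m_{ij}$ edges from $G$ eliminates the only $S_i$-to-$S_j$ route in $G^*$ and so changes reachability in $G$, forcing $m_{ij}\geq k$ by the $\rec$ assumption. Summing, the number of transitive-reduction edges is at most $\tfrac{1}{k}\sum m_{ij}\leq m/k$. Combining the two contributions, $|\tilde E|\leq|H|\leq 4m/k+m/k=5m/k$. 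The subtle points are the degree bound, where one must verify that losing a single out-neighbor's reachability already counts as a reachability change and hence triggers $\rec(G)\geq k$, and the inter-SCC step, where Lemma~\ref{lem:acyclic-subgraph} is invoked to guarantee that each transitive-reduction edge is genuinely a reachability bottleneck between its two SCCs.
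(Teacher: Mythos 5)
Your proof is correct and follows essentially the same route as the paper's: the same SCC/condensation decomposition, the same invocation of Lemma~\ref{lem:acyclic-subgraph} to identify transitive-reduction edges as reachability bottlenecks carrying at least $k$ parallel $G$-edges, and the same in/out-arborescence construction giving at most $2(n_i-1)$ edges per component. The only minor variation is your derivation of $n \leq 2m/k$ via separate out-degree and in-degree lower bounds for non-sinks and non-sources, whereas the paper gets the same inequality from the slightly simpler observation that every vertex has total (in plus out) degree at least $k$.
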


\begin{proof}
Let $G_1, \dots, G_q$ be the strongly connected components of $G$. Let $V^* = \{G_i\}_i$, and let $G^* = (V^*, E^*)$ be the condensation of $G$. Define a weight function $w: E^* \to \N$ by saying that the weight of $(G_i, G_j)$ is the number of edges in $E$ going from $G_i$ to $G_j$. By Lemma~\ref{lem:acyclic-subgraph}, there is a reachability-equivalent subgraph $\tilde{G}^* = (V^*, \tilde{E}^*)$ of $G^*$, such that for each $(G_i, G_j) \in \tilde{E}^*$, every path from $G_i$ to $G_j$ through $G^*$ uses the edge $(G_i, G_j) \in E^*$. Therefore, each edge $(G_i, G_j) \in \tilde{E}^*$ must have weight at least $k$, since removing the edges from $G_i$ to $G_j$ in $G$ would make the vertices in $G_j$ unreachable from the vertices in $G_i$.

We form the subgraph $\tilde{G} = (V, \tilde{E})$ as follows. For each $G_i$, we define $\tilde{E}_i$ to be the set of edges in a minimum equivalent digraph of $G_i$. We define $\tilde{E}^D$ to contain one edge from $G_i$ to $G_j$ for each $(G_i, G_j) \in \tilde{E}^*$. We define $\tilde{E} = \tilde{E}_1 \cup \dots \cup \tilde{E}_q \cup \tilde{E}^D$. By construction, clearly, $\tilde{G}$ is a reachability-equivalent subgraph of $G$.

Say each $G_i$ has $n_i$ vertices. Then $\tilde{E}_i$ has no more than $2(n_i - 1)$ edges, because we can form a reachability-equivalent subgraph of $G_i$ with $2(n_i - 1)$ edges by picking a root vertex $P_i$ in $G_i$ and including all edges in an in-branching of $G_i$ rooted at $P_i$, as well as all edges in an out-branching of $G_i$ rooted at $P_i$. Therefore, the $\tilde{E}_i$s have, in total, no more than $2n$ edges. Furthermore, $\tilde{E}^D$ has no more than $m/k$ edges, since each edge in $\tilde{E}^*$ has weight $k$. Now, $n \leq 2m / k$, because $k$ is no more than the minimum number of edges adjacent to any vertex. Therefore, in total, $\tilde{G}$ has no more than $4m / k + m / k = 5m / k$ edges.
\end{proof}

\begin{lem} \label{lem:relative-edge-connectivity-error-rate}
Suppose $C$ is a compiler and $G = (V, E)$ is a digraph. Suppose that for some $T > 0$, $C(\pi^*[G, T])$ runs on a graph $\tilde{G} = (V, \tilde{E})$ with $\tilde{m} = |\tilde{E}|$ edges. Define $\lambda$ to be $\rec\left(\tilde{G}\right)$ if $\tilde{G}$ is reachability-equivalent to $G$, and $\lambda = 0$ otherwise. Then the failure probability of $C(\pi^*[G, T])$ in the presence of the bit error rate $\lambda / \tilde{m}$ is at least $1 - 2^{-T}$.
\end{lem}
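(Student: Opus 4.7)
The plan is to split on whether $\tilde{G}$ is reachability-equivalent to $G$. If not then $\lambda = 0$, and I must exhibit failure with no adversarial noise at all. A transitivity argument locates an edge $(P_i, P_j) \in E$ with $P_j$ unreachable from $P_i$ in $\tilde{G}$: otherwise any $G$-path from $P_i$ to $P_j$ witnessing the broken equivalence could be lifted edge-by-edge to a $\tilde{G}$-path. (Extra reachability in $\tilde{G}$ does not harm simulation, so I can focus on this direction.) Consider the $2^T$ inputs $\{x^{(s)} : s \in \{0,1\}^T\}$ in which every party other than $P_i$ has the constant-zero transmission function, and $P_i$ transmits the bits of $s$ one per round on the edge to $P_j$. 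Under $\pi^*$ the output at $P_j$ records these $T$ bits, so the $\pi^*$-outputs on the different $x^{(s)}$ are all distinct. But $P_j$'s output under $C(\pi^*)$ on $\tilde{G}$ is causally independent of $x_i$, hence is the same random variable for every $s$, so for each realisation of the compiler's randomness it can agree with the correct $\pi^*$-output for at most one value of $s$. Averaging over a uniform $s$ already yields a specific input with failure probability at least $1 - 2^{-T}$.

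In the main case $\tilde{G}$ is reachability-equivalent to $G$ and $\lambda = \rec(\tilde{G})$. Fix a set $F \subseteq \tilde{E}$ of $\lambda$ edges whose removal changes reachability, and write $\tilde{G}' = \tilde{G} \setminus F$. The same transitivity argument produces $(P_i, P_j) \in E$ with $P_j$ unreachable from $P_i$ in $\tilde{G}'$. Define $A = \{v \in V : P_j \text{ is reachable from } v \text{ in } \tilde{G}'\}$; then $P_j \in A$, $P_i \notin A$, and every $\tilde{G}$-edge from $V \setminus A$ into $A$ lies in $F$ (otherwise it would persist in $\tilde{G}'$ and witness reachability of $P_j$ from its tail). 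Using the same family $\{x^{(s)}\}$, I deploy the standard confusion adversary: she samples $s' \in \{0,1\}^T$ uniformly in private, runs a ghost simulation of the $V \setminus A$-side of $C(\pi^*)$ on input $x^{(s')}$ using her own fresh randomness, feeds to this ghost the genuine uncorrupted bits actually transmitted by $A$-side parties across $A \to V \setminus A$ edges, and on each of the at most $\lambda$ edges of $F$ overwrites the real transmitted bit with the corresponding ghost output. Since each round contributes at most $\lambda$ corruptions out of $\tilde{m}$ transmissions, the overall bit error rate is at most $\lambda / \tilde{m}$, which meets the budget.

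The crux, which I would establish by induction on the round index, is that the joint distribution of the $A$-side parties' views in the real execution on true input $x^{(s)}$ coincides with that in an adversary-free execution of $C(\pi^*)$ on input $x^{(s')}$ using $A$'s own randomness and the adversary's ghost randomness in place of $V \setminus A$'s: the inputs agree on $A$ because $x^{(s)}$ and $x^{(s')}$ differ only at $P_i \in V \setminus A$, the bits $A$ transmits into $V \setminus A$ are determined by $A$'s view and so coincide in both worlds, and the bits $A$ receives across the cut are by construction the ones the ghost execution on $x^{(s')}$ would produce. Consequently $P_j$'s simulation output is distributed as the clean output of $C(\pi^*)$ on $x^{(s')}$. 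When $s \ne s'$ the $\pi^*$-outputs at $P_j$ on $x^{(s)}$ and $x^{(s')}$ are distinct, so for each fixed $s'$ the probabilities $P[C(\pi^*)(x^{(s')})\text{ agrees with }\pi^*(x^{(s)})\text{ at }P_j]$ name disjoint events and sum over $s$ to at most $1$. Averaging over uniform $s$ and uniform $s'$ bounds the expected success probability by $2^{-T}$, and since the maximum of a distribution dominates its mean, there exist specific $s, s'$ giving failure probability $\geq 1 - 2^{-T}$. The step I expect to be most delicate is formalising the inductive coupling that underlies the distributional equality of the two $A$-side views.
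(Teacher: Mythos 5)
Your proof is correct, but it reaches for heavier machinery than the paper needs. The paper simply has the adversary zero out every bit transmitted on the $\lambda$ disconnecting edges (your $F$, the paper's $S$). Over $\tilde{T}$ rounds this flips at most $\lambda\tilde{T}$ of the $\tilde{m}\tilde{T}$ transmitted bits, so it stays exactly within the global budget $\lambda/\tilde{m}$. With those edges silenced, $P_j$'s transcript is a function only of the inputs of parties that can reach $P_j$ in $\tilde{G}\setminus S$, and $P_i$ is not among them; hence on a uniformly random input the $T$ bits that $P_i$ would have sent $P_j$ under $\pi^*$ are uniform and independent of $P_j$'s view, giving failure probability exactly $1-2^{-T}$ with no coupling argument at all. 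Your ghost-simulation/confusion adversary is essentially the attack the paper later develops for Theorem~\ref{thm:signal-diameter-negative-result}, where the budget is per-edge and the adversary genuinely cannot afford to kill a whole edge, so she must substitute plausible alternative-reality messages; that technique does also work here, and your inductive coupling is sound, but under a global budget it buys you nothing that the simpler zeroing attack does not already give. One small wording fix if you keep your version: phrase the attack as overwriting only the $V\setminus A\to A$ edges (all of which you correctly show lie in $F$, so there are at most $\lambda$ of them). For any member of $F$ that is internal to $A$, internal to $V\setminus A$, or crosses $A\to V\setminus A$, the ``corresponding ghost output'' is not defined, and disturbing an intra-$A$ edge would in fact break the very coupling on $A$-side views that you are establishing.
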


\begin{proof}
Say $(P_i, P_j) \in E$ and $S$ is a set of $\lambda$ edges in $\tilde{E}$ such that after removing all the edges in $S$ from $\tilde{G}$, there is no path from $P_i$ to $P_j$. Consider the adversary $\mathcal{A}$ who zeroes out all messages sent across every edge $e \in S$. Consider choosing an input $x$ uniformly at random. For any transcript at $P_j$, the probability of that transcript conditioned on any input that $P_i$ might receive is equally likely. Thus, $P_j$ has only a $2^{-T}$ chance of correctly guessing the $T$ bits that $P_i$ would have sent $P_j$ if they had followed $\pi^*[G, T]$.
\end{proof}

Observe that Lemma~\ref{lem:relative-edge-connectivity-error-rate} shows that on undirected networks with bounded edge connectivity, the error rate $\Omega(\frac{1}{m})$ is optimal, among simulations which run on that same network.

\begin{thm} \label{thm:error-rate-negative-result}
Suppose $C$ is a compiler and $G$ is a digraph without isolated vertices, for which each minimum equivalent digraph has $s$ edges. Then for all $T > 0$, the failure probability of $C(\pi^*[G, T])$ in the presence of the bit error rate $5 / s$ is at least $1 - 2^{-T}$.
\end{thm}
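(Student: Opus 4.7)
The plan is to combine Lemma~\ref{lem:relative-edge-connectivity-error-rate} with Lemma~\ref{lem:reachability-preserving-subgraph-relative-edge-connectivity}. Let $\tilde{G} = (V, \tilde{E})$ be the subgraph of $G$ on which $C(\pi^*[G, T])$ runs, and let $\tilde{m} = |\tilde{E}|$. Define $\lambda$ as in Lemma~\ref{lem:relative-edge-connectivity-error-rate}. By that lemma, the failure probability at bit error rate $\lambda/\tilde{m}$ is already at least $1 - 2^{-T}$. Since an adversary equipped with a larger error budget can replicate any attack available to an adversary with a smaller budget, the failure probability is monotone in the error rate, so it suffices to establish $\lambda/\tilde{m} \leq 5/s$.

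I then split into two cases. If $\tilde{G}$ is not reachability-equivalent to $G$, then $\lambda = 0$ by definition, and the inequality is immediate. Otherwise, $\lambda = \rec(\tilde{G})$, and I first observe that $\tilde{G}$ has no isolated vertices: if some $v \in V$ were isolated in $\tilde{G}$, then since $G$ has no isolated vertices, $v$ is an endpoint of some edge in $G$, but reachability-equivalence between $G$ and $\tilde{G}$ would force $v$ to have a corresponding out- or in-edge in $\tilde{G}$, a contradiction.

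Now I apply Lemma~\ref{lem:reachability-preserving-subgraph-relative-edge-connectivity} to $\tilde{G}$ with $k = \lambda$, obtaining a reachability-equivalent subgraph $H$ of $\tilde{G}$ with at most $5\tilde{m}/\lambda$ edges. Since reachability-equivalence on a fixed vertex set is transitive, and $\tilde{G}$ is a subgraph of $G$ reachability-equivalent to $G$, it follows that $H$ is a subgraph of $G$ which is itself reachability-equivalent to $G$. By definition of $s$ as the number of edges in any minimum equivalent digraph of $G$, this gives $|E(H)| \geq s$, hence $s \leq 5\tilde{m}/\lambda$, i.e., $\lambda/\tilde{m} \leq 5/s$, as desired.

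The main (mild) obstacle is bookkeeping: verifying that the hypothesis of Lemma~\ref{lem:reachability-preserving-subgraph-relative-edge-connectivity} (no isolated vertices) still holds after passing from $G$ to $\tilde{G}$, and carefully chaining the reachability-equivalences $H \sim \tilde{G} \sim G$ so that the edge count of $H$ is indeed bounded below by $s$. Everything else amounts to unwrapping the definitions and combining the two earlier lemmas.
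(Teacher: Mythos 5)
Your proof is correct and follows essentially the same route as the paper's: split on whether $\tilde{G}$ is reachability-equivalent to $G$, then bound $\lambda/\tilde{m} \leq 5/s$ by invoking Lemma~\ref{lem:reachability-preserving-subgraph-relative-edge-connectivity} on $\tilde{G}$ and transferring the bound back to $G$ by transitivity of reachability-equivalence. The only (mild) improvement over the paper's exposition is that you explicitly verify that $\tilde{G}$ has no isolated vertices before applying Lemma~\ref{lem:reachability-preserving-subgraph-relative-edge-connectivity} — a hypothesis the paper's proof uses silently — and you note the monotonicity of failure probability in the error budget; both are small but genuine bookkeeping points that make the argument airtight.
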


\begin{proof}
Let $\tilde{G} = (V, \tilde{E})$ be the graph on which $C(\pi^*[G, T])$ runs. Say $\tilde{k} = \rec\left(\tilde{G}\right)$, and $\tilde{m} = \left|\tilde{E}\right|$. If $\tilde{G}$ is not reachability-equivalent to $G$, we are done by Lemma~\ref{lem:relative-edge-connectivity-error-rate}. Otherwise, $s \leq \tilde{s}$, where $\tilde{s}$ is the number of edges in each minimum equivalent digraph of $\tilde{G}$. By Lemma~\ref{lem:reachability-preserving-subgraph-relative-edge-connectivity}, $\tilde{s} \leq 5 \tilde{m} / \tilde{k}$. Thus, $5 / s \geq \tilde{k} / \tilde{m}$; an application of Lemma~\ref{lem:relative-edge-connectivity-error-rate} completes the proof.
\end{proof}

\subsection{Lower bound on the round complexity of robust simulations}

If $G$ has a small relative edge connectivity, then simulations of protocols on $G$ must run on subgraphs to achieve optimal error tolerance. Naturally, there is a round complexity cost associated with moving to a sparse subgraph. These two ideas prove the following theorem.

\begin{thm} \label{thm:runtime-lower-bound}
Suppose $C$ is a compiler, $\rho \in [0, 1]$, and $G$ is a digraph with $\rec(G) = k$. Suppose the round complexity of $C(\pi^*[G, T])$ is less than $\frac{m \rho}{k} T$. Then the failure probability of $C(\pi^*[G, T])$ in the presence of the bit error rate $\rho$ is at least $\frac{1}{2}$.
\end{thm}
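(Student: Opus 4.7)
The plan is to combine Lemma~\ref{lem:relative-edge-connectivity-error-rate} with an information-theoretic lower bound on the edge--round volume of any simulation of the universal protocol. Let $\tilde{G} = (V, \tilde{E})$ be the subnetwork of $G$ on which $\tilde{\pi}^* := C(\pi^*[G, T])$ runs, with $\tilde{m} = |\tilde{E}|$ edges and round complexity $\tilde{T}$; by hypothesis $\tilde{T} < \frac{m\rho}{k} T$.

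First I would dispose of the case in which $\tilde{G}$ is not reachability-equivalent to $G$: Lemma~\ref{lem:relative-edge-connectivity-error-rate} with $\lambda = 0$ already gives failure probability at least $1 - 2^{-T} \geq \frac{1}{2}$ (for $T \geq 1$) using no errors at all. So I will assume $\tilde{G}$ is reachability-equivalent to $G$, and first observe that $\tilde{k} := \rec(\tilde{G}) \leq k$: if $S \subseteq E$ is any reachability-changing set of size $k$ in $G$, then since $\tilde{G} \subseteq G$ has the same reachability relation, the set $S \cap \tilde{E} \subseteq \tilde{E}$ is itself reachability-changing in $\tilde{G}$ and has size at most $k$.

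The key step is an information-theoretic lower bound $\tilde{T}\tilde{m} \geq Tm$. Consider the input distribution for $\pi^*$ in which each transmission function emits an independent uniform random bit on each outgoing edge in each round. Then each $P_j$'s output in $\pi^*$ is a string of $T \cdot d_j^-$ uniform random bits, independent of $P_j$'s own input $x_j$ (where $d_j^-$ is $P_j$'s indegree in $G$). Since $P_j$'s output in $\tilde{\pi}^*$ is a function of $x_j$ together with the $\tilde{T} \cdot \tilde{d}_j^-$ bits it receives (with $\tilde{d}_j^-$ the indegree in $\tilde{G}$), a standard entropy calculation yields $\tilde{T} \cdot \tilde{d}_j^- \geq T \cdot d_j^-$; summing over $j$ gives $\tilde{T}\tilde{m} \geq Tm$. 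If $\tilde{\pi}^*$ does not simulate $\pi^*$ this well on noiseless inputs, its failure probability is already at least $\frac{1}{2}$, and since the noiseless adversary is a special case of the rate-$\rho$ adversary we are done.

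Combining $\tilde{T} < \frac{m\rho}{k} T$ with $\tilde{T}\tilde{m} \geq Tm$ yields $\tilde{m} > k/\rho \geq \tilde{k}/\rho$, so the adversary's total budget of $\rho \tilde{m}\tilde{T}$ bit flips is at least $\tilde{k}\tilde{T}$: exactly enough to zero out every message across a $\tilde{k}$-edge reachability-changing set of $\tilde{G}$ for all $\tilde{T}$ rounds, as in the proof of Lemma~\ref{lem:relative-edge-connectivity-error-rate}. That lemma then guarantees failure probability at least $1 - 2^{-T} \geq \frac{1}{2}$. The main delicate point is the information-theoretic step for compilers whose output is only approximately correct in the noiseless case; I would handle this by the case split above (already-failing noiseless simulation on one side, near-perfect noiseless simulation on the other), which is enough since we only need to cross the $\frac{1}{2}$ failure threshold rather than prove an exact bound.
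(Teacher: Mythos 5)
Your proof is correct and essentially matches the paper's: both combine Lemma~\ref{lem:relative-edge-connectivity-error-rate} with the bit-counting observation that if $\tilde{T}\tilde{m} < Tm$ then some party receives too few bits to reproduce her noiseless output, so the failure probability is already at least $\frac{1}{2}$. The only difference is the order of the case analysis: you lead with the bit-counting case and derive $\rho > \tilde{k}/\tilde{m}$ when it does not apply, whereas the paper leads with the case $\rho \geq \tilde{k}/\tilde{m}$ and derives $\tilde{T}\tilde{m} < Tm$ in the remaining case; given the round-complexity hypothesis these are equivalent routes.
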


\begin{proof}
Suppose $C(\pi^*[G, T])$ runs on a subgraph $\tilde{G}$ of $G$, with $\tilde{m}$ edges and with $\rec\left(\tilde{G}\right) = \tilde{k}$. By Lemma~\ref{lem:relative-edge-connectivity-error-rate}, if $\rho \geq \tilde{k} / \tilde{m}$, we are done, so assume $\rho < \tilde{k} / \tilde{m}$. We are also done if $\tilde{G}$ is not reachability-equivalent to $G$, so assume that it is, which implies that $\tilde{k} \leq k$, and hence $\tilde{m} < k / \rho$.

Say $\tilde{\pi}^*[G, T]$ runs in $\tilde{T}$ rounds, with $\tilde{T} < Tm/\tilde{m}$. Observe that the average indegree in $\tilde{G}$ is no more than $\tilde{m} / m$ times the average indegree in $G$, so there is some party $P_i$ whose indegree $\tilde{d}_i^-$ in $\tilde{G}$ is no more than $\tilde{m} / m$ times her indegree $d_i^-$ in $G$. Fix some input $x_i$ for $P_i$, and choose every other party's input uniformly at random. At the end of the execution of the simulation protocol, $P_i$ must guess $d_i^- T$ bits based on $\tilde{d}_i^- \tilde{T}$ bits that she receives. Since $\tilde{d}_i^- \tilde{T} < d_i T$, the probability of success is no more than $\frac{1}{2}$.
\end{proof}

When $G$ is connected and undirected, taking $\rho \in \Omega(\frac{1}{n})$ in Theorem~\ref{thm:runtime-lower-bound} shows that the round complexity blowup of Theorem~\ref{thm:sparsifier-error-correction} is within a factor of $\O(k \log n)$ of optimal, where $k$ is now just the edge connectivity of $G$. On highly connected graphs, this leaves a sizable gap. It is quite possible that there are compilers with optimal error tolerance and with round complexity lower than that achieved in Theorem~\ref{thm:sparsifier-error-correction}. The following theorem establishes that this is at least true if the parties share access to a common random string (unavailable to the adversary), and if we make a strong assumption on connectivity.
\suppress{
As some evidence, we give the following theorem, which applies if we make the extra assumption that the parties have access to a common random string, to which the adversary does not have access.} (We defer the proof to Appendix~\ref{apx:magi}.)

\begin{thm} \label{thm:magi}
Suppose $G = (V, E)$ is an undirected graph such that for every $(P_i, P_j) \in E$, the endpoints $P_i$ and $P_j$ have $\Omega(n)$ common neighbors. There exists a shared-randomness compiler $C$ such that if $\pi$ is a $T$-round protocol on $G$, then $C(\pi)$ tolerates a bit error rate of $\Omega(\frac{1}{n})$ with failure probability $e^{-\Omega(T)}$, and $C(\pi)$ has round complexity $\O(T \log n)$.
\end{thm}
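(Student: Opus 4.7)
The plan is to build a compiler in two stages: first a shared-randomness rerouting step that maps each logical bit of $\pi^*$ onto $\lambda=\Theta(\log n)$ random two-hop paths through common neighbors, and then a packing of these paths into $\O(\log n)$ compiled rounds per logical round using the Leighton--Maggs--Rao scheduling theorem (Lemma~\ref{lem:magic-scheduling-theorem}). The common-neighbor density hypothesis is what keeps the resulting edge congestion at $\O(\log n)$, and the shared randomness is what prevents the adversary from concentrating its global budget onto any single bit's $\lambda$ copies.

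Concretely, for each $(i,j)\in E$ and each original round $t\in[T]$, use the shared randomness to draw helpers $k_1,\dots,k_\lambda$ independently and uniformly from $N(i)\cap N(j)$. Form the path set $\mathcal{P}$ that, for each triple $(i,j,t)$, contains the direct edge $i\to j$ together with the two-hop paths $i\to k_\ell\to j$; every path is used to transmit the bit $b_{ij}^{(t)}$ that $\pi^*$ would have $i$ send to $j$ in round $t$, and receiver $j$ later decodes by majority vote over the $\lambda+1$ copies. For any fixed edge $(u,v)\in E$, the probability that $v$ is among the $\lambda$ helpers chosen for a bit $b_{uj}^{(t)}$ is $\O(\lambda/n)$ by the common-neighbor assumption (each $|N(u)\cap N(j)|=\Omega(n)$), and since there are at most $|N(u)\cap N(v)|\le n$ candidate $j$'s, the expected first-hop load on $(u,v)$ per logical round is $\O(\lambda)$; a symmetric argument handles second-hop and direct contributions. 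A Chernoff bound with a union bound over the $\O(m)$ edges and $T$ rounds then establishes that, with probability $1-o(1)$ over the shared randomness, $\mathcal{P}$ has congestion $\O(\log n)$ per logical round and dilation at most $2$. Lemma~\ref{lem:magic-scheduling-theorem} packs each logical round into $\O(\log n)$ compiled rounds, for a total round complexity of $\O(T\log n)$.

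For the error analysis, the helper assignment (and hence the map from compiled transmission slots to original-bit copies) is determined entirely by the hidden shared randomness, so the adversary's budget of $\rho\cdot\O(mT\log n)=\O(mT\log n/n)$ flips is effectively spread at random with respect to that mapping. For any fixed bit $b_{ij}^{(t)}$ the expected number of its $\lambda$ copies flipped is $\O(\log n/n)\ll\lambda/2$, so a Chernoff bound gives per-bit failure probability $e^{-\Omega(\lambda)}=n^{-\Omega(1)}$, and summing over the $\O(mT)$ bits leads to overall failure probability $e^{-\Omega(T)}$ once $\lambda$ is chosen with a sufficiently large constant. The main technical obstacle is that the rushing adversary observes every physical transmission and could in principle try to infer the helper schedule from what it sees; the rigorous proof must argue that these observations carry no useful information about the shared randomness beyond what the protocol inputs and earlier choices already determine, effectively reducing the analysis to that of an oblivious adversary, to which the Chernoff argument above cleanly applies.
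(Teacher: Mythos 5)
Your proposal takes a genuinely different route from the paper, and it has a gap that I do not think can be repaired within your framework. The paper's construction is a \emph{two-layer} composition: it first applies the RS compiler and only then reroutes via what the authors call ``magi coding'' (a fixed, highly structured one-bit-per-edge-per-round schedule, randomized via a single shared index $r$ and one-time pads). The reason the RS layer is there is precisely the failure-probability target $e^{-\Omega(T)}$: magi coding is only shown to make it unlikely that \emph{many segments} suffer a simulated bit error (Lemma~\ref{lem:tiny-failure-probability} gives failure probability $e^{-\Omega(T')}$ for the event that more than $4\alpha T'$ segments are corrupted), and the RS layer, via Proposition~\ref{prop:rs-rer}, tolerates any outcome in which at most an $\eta$-fraction of segments are corrupted. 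No attempt is made to make every segment succeed.

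Your scheme drops the RS layer and relies on a per-bit majority vote together with a union bound over all $\O(mT)$ logical bits. With $\lambda=\Theta(\log n)$ copies per bit, even a perfectly oblivious adversary leaves you with per-bit failure probability at best $n^{-c}$ for a constant $c$ (you cannot push it below that without increasing $\lambda$, and $\lambda$ is pinned down by the $\O(T\log n)$ round budget). Union-bounding over $\O(mT)$ bits then gives total failure probability $\O(mT\cdot n^{-c})$, which does not decay like $e^{-\Omega(T)}$; in fact it grows linearly in $T$, so the bound is vacuous for large $T$. Choosing the constant in $\lambda$ larger changes $c$ but not the linear dependence on $T$, so ``sufficiently large constant'' does not fix this. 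To reach $e^{-\Omega(T)}$ without a back-stop error-correcting layer you would need $\lambda=\Omega(T)$, which blows up the round complexity to $\O(T^2)$. This is the core reason the paper's proof is structured the way it is: the composition lets you tolerate a constant fraction of segment failures rather than demanding that every bit be decoded correctly.

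A secondary concern, which you correctly flag but leave unresolved, is the reduction to an oblivious adversary. The paper's scheme has a fixed physical communication pattern (every edge carries exactly one symbol per iteration), and the one-time pads $w_{ij},w'_{ij}$ guarantee that the transmitted bits carry no information about which logical bit is being forwarded. Under your LMR-scheduled routing, the physical schedule itself (which edges are active when, and in what order packets traverse two-hop paths) is a function of the shared randomness, and the adversary observes it directly. It is not at all clear that this observable schedule reveals nothing about which (edge,time) slot carries which logical bit; absent a proof, the Chernoff bound for an oblivious adversary does not apply. The paper avoids this by design rather than by argument. I would suggest you rework the proposal around the two-layer composition, at which point both issues disappear.
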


\section{The Noise Threshold for Adversaries with a Per-Edge Budget} \label{sec:edge-error-rate}
In this section, we are interested in \emph{per-edge error rates}, where we restrict the distribution of errors as well as the total number. Specifically, we say that an adversary stays within the per-edge error rate $\rho$ budget if on each edge, the fraction of bits transmitted on that edge which are flipped is no more than $\rho$. Note that when we are considering per-edge error rates, we can assume without loss of generality that simulations run on the same graphs as the original protocols.

For a digraph $G$, we define the \emph{signal diameter} $D$ of $G$ to be the maximum finite distance between any two vertices in $G$. That is, the signal diameter of $G$ is the maximum, over all $P_i, P_j$ for which $P_j$ is reachable from $P_i$, of the length of the shortest path from $P_i$ to $P_j$. For example, if $G$ is strongly connected, then the signal diameter of $G$ is just the ordinary diameter of $G$.

\subsection{Optimal per-edge error rates, ignoring round complexity} 
\paragraph{Positive result:}

\begin{thm} \label{thm:signal-diameter-positive-result}
Suppose $G$ is a digraph with signal diameter $D$. For every $\epsilon > 0$, there exists a compiler $C$ such that if $\pi$ is a protocol on $G$, then $C(\pi)$ tolerates the per-edge error rate $\frac{1}{4D} - \epsilon$, and $C(\pi)$ has round complexity $\O(D^2 nT 2^{nT})$.
\end{thm}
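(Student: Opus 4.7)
The plan is to reduce the problem to a reliable broadcast primitive and then have every party simulate $\pi$ locally. Specifically, the compiler will have each party $P_i$ broadcast a description of its input to every reachable party, after which each party can simulate $\pi$ on the resulting input vector and read off its own output. Since the signal diameter of $G$ is $D$, every reachable pair $(P_i, P_j)$ admits a path of length at most $D$, so the broadcast primitive can be implemented by flooding through such paths.

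I would implement the broadcast by a flooding-with-majority scheme: for each bit that some party wants to transmit, the bit is sent many times along all outgoing edges, and each intermediate party repeatedly retransmits the majority of the copies it has received on its incoming edge(s). With $\Theta(D/\epsilon)$ repetitions per hop and at most $D$ hops along any source-to-destination path, an adversary with per-edge error rate $\rho < \frac{1}{4D} - \epsilon$ should be unable to corrupt the final majority at any reachable party. The factor $\frac{1}{4D}$ (rather than $\frac{1}{2D}$) leaves a constant slack needed so that the majority at every intermediate hop has a strictly positive gap, which is in turn needed so that the retransmitted bits propagated to the next hop are not themselves mostly corruptible within the adversary's remaining budget. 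For the round-complexity bound, the $\O(D^2)$ cost of a single reliable broadcast (repetition $\Theta(D)$ per hop, propagated across $\Theta(D)$ hops) is multiplied by the $\O(nT)$ bits needed to describe the behavior of each party across the $T$ rounds and by the $2^{nT}$ candidate input profiles, which the compiler processes by an enumerate-and-confirm procedure.

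The main obstacle will be the tight error analysis of the flooding primitive against a per-edge adversary who may concentrate its entire budget on a single edge and on the specific rounds devoted to a single transmission. A hop-by-hop majority argument by itself gives only a threshold independent of $D$; to obtain the $\Theta(1/D)$ scaling one has to track the adversary's power across all $D$ hops simultaneously and show that it cannot flip the relayed majority at any one of them. I would expect the argument to go by induction on hop count, maintaining a lower bound on the voting margin reaching each intermediate vertex and showing that each successive hop decreases this margin by an additive amount proportional to $\rho$, so that for $\rho < \frac{1}{4D} - \epsilon$ the margin remains positive all the way to distance $D$.
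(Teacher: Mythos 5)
The mechanism you propose---flooding each bit with $\Theta(D/\epsilon)$ repetitions per hop and taking majorities---does not survive a per-edge adversary, and this is precisely the obstacle you flag near the end but do not resolve. The per-edge budget is $\rho$ times the \emph{total} number of bits ever sent on that edge. If $B$ bits are each repeated $r$ times per hop, an edge carries roughly $DBr$ transmissions, so the adversary has $\rho D B r$ errors to spend there; flipping the majority for a single bit costs only $r/2$, so she can destroy the relay of about $2\rho D B$ of the $B$ bits on that one edge, which at $\rho \approx \tfrac{1}{4D}$ is $\Theta(B)$ bits. A constant fraction of the relayed bits are then simply wrong, and the downstream parties cannot reconstruct the input vector. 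The ``margin decreases by an additive $\rho$ per hop'' picture implicitly assumes the adversary spreads its errors evenly across bits, which per-edge adversaries are under no obligation to do; your own enumeration of $\Theta(D/\epsilon)$ repetitions per hop shows that no single bit's relay can be protected against a fully concentrated attack.

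The fix used in the paper is to \emph{aggregate}, not repeat. Divide the $\tilde{T}$ rounds into $D$ segments of length $\ell$. In segment $j$, each party encodes, with a single Gilbert--Varshamov code of relative distance $\approx \tfrac12$, the entire list of transmission functions of all parties at distance $< j$ from it, and sends that one codeword on every outgoing edge for all $\ell$ rounds. Now the adversary's \emph{total} per-edge budget over the whole protocol is $(\tfrac{1}{4D}-\epsilon)\cdot D\ell \approx \tfrac{\ell}{4}$, while corrupting the decoding of even a \emph{single} segment on that edge already requires $\approx \tfrac{\ell}{4}$ errors (half the minimum distance of a rate-$\Omega(1)$ code on $\ell$ symbols). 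So the adversary cannot afford to corrupt any segment anywhere, all decodings succeed, and after $D$ segments every party knows the transmission functions of all parties that can reach it and simulates $\pi^*$ locally. That is where the factor $4$ really comes from (distance $\tfrac12$ times the unique-decoding factor $\tfrac12$), not from a voting-margin slack, and the $\O(D^2 nT2^{nT})$ bound is just $D$ segments times $\ell = \O(D\cdot nT2^{nT})$, the encoded length of $D$ transmission functions; there is no enumerate-and-confirm over input profiles.

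In short, your high-level plan (propagate inputs along $\le D$ hops, then simulate locally) matches the paper, but the reliability primitive must be a per-segment block code over the whole accumulated message rather than a per-bit repetition-and-majority scheme; the latter is provably defeated by a concentrating per-edge adversary.
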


\begin{proof}
We describe $C(\pi^*[G, T])$. By the Gilbert-Varshamov bound, there is some family of error correcting codes $\chi$ with positive asymptotic rate and minimum relative distance at least $\frac{1}{2} - D \epsilon$. Let $\ell$ be sufficiently long so that for any length-$D$ list $F$ of $T$-round transmission functions on $G$, $\chi(F)$ has length no more than $\ell$. Set $\tilde{T} = D \ell$. Divide the $\tilde{T}$ rounds into $D$ segments of length $\ell$. In the $j$th segment, $P_i$ transmits (to all of her out-neighbors) the encoding under $\chi$ of the list $(x_{k_1}, \dots, x_{k_m})$ of transmission functions of parties $P_{k_s}$ such that there is a path of length $< j$ from $P_{k_s}$ to $P_i$.

First, suppose some decoding operation failed. Because of the minimum relative distance property, the adversary must have introduced at least $\frac{1}{4} \ell (1 - D\epsilon)$ bit errors on some edge, which is a per-edge error rate of $\frac{1}{4D} - \frac{1}{4}\epsilon$, which exceeds the specified budget. Suppose instead that all decoding operations succeed. Then every party $P_i$ knows $x_j$ for every party $P_j$ from which $P_i$ is reachable. Using this information, $P_i$ can infer all of the bits that she would have received if the parties had followed $\pi^*$ on a noiseless network. Finally, for the round complexity estimate, note that trivially every party has degree at most $n$. Hence, to specify a $T$-round transmission function, it suffices to specify the $nT$ bits that a party would send, given any arbitrary length-$n$ list of $T$-bit incoming strings. Hence, a list of $D$ such transmission functions can be specified with $D nT 2^{nT}$ bits, so $\ell$ is $\O(Dn T 2^{nT})$.
\end{proof}

\paragraph{Negative result:}
We give a matching (up to a factor of $2$) negative result, showing that the error rate $\frac{1}{2D}$ cannot be tolerated.

\begin{thm} \label{thm:signal-diameter-negative-result}
Suppose $C$ is a compiler and $G$ is a digraph with signal diameter $D$. Then for all sufficiently large $T$, there exists a $T$-round protocol $\pi$ such that the failure probability of $C(\pi)$ in the presence of the per-edge error rate $\frac{1}{2D}$ is at least $\frac{1}{4}$.
\end{thm}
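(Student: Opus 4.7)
The plan is to exhibit, for any sufficiently large $T$, a $T$-round protocol $\pi$ together with an adversary strategy of per-edge error rate at most $\frac{1}{2D}$ that makes $C(\pi)$ err with probability at least $\frac{1}{4}$. By the signal diameter hypothesis, $G$ contains a shortest directed path of length exactly $D$; relabel its vertices along the path as $P_0, P_1, \ldots, P_D$ and its edges as $e_k = (P_{k-1}, P_k)$. Take $\pi$ to be any $T$-round protocol in which $P_0$ holds a uniformly random input bit $b \in \{0,1\}$ and whose noiseless execution on $G$ makes $P_D$ output $b$. Writing $\tilde T$ for the round complexity of $C(\pi)$, for each $b \in \{0,1\}$ and each $k \in \{1, \ldots, D\}$ let $\sigma_k^b \in \{0,1\}^{\tilde T}$ denote the bit sequence appearing on $e_k$ in the noiseless execution of $C(\pi)$ on input $b$, and set $d_k := |\sigma_k^0 \oplus \sigma_k^1|$.

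The first attempt is a ``switch-point attack''. On input $b$, the adversary flips exactly the $d_{k^*}$ positions on edge $e_{k^*}$ where $\sigma_{k^*}^0$ and $\sigma_{k^*}^1$ disagree; then $P_{k^*}$'s received bits are $\sigma_{k^*}^{1-b}$, and since each downstream party's transmission function depends only on the bits it received, the entire downstream execution replicates that of input $1-b$, so $P_D$ outputs $1-b$. The per-edge rate used is $d_{k^*}/\tilde T$, so if any $k^*$ satisfies $d_{k^*} \leq \tilde T/(2D)$ the switch-point attack alone yields failure probability $1$. In the harder case $d_k > \tilde T/(2D)$ for every $k$, I would use a two-adversary argument: construct $\mathcal{A}_0$ and $\mathcal{A}_1$ (to be applied when the input is $0$ or $1$ respectively), each respecting the per-edge budget $\tilde T/(2D)$, so that $P_D$'s view is identical under $(b=0, \mathcal{A}_0)$ and $(b=1, \mathcal{A}_1)$. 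Writing $u_k^{(b)}$ for $P_k$'s view and $a_k^{(b)}$ for the adversary's flips on $e_k$, the recursion is $u_k^{(b)} = f_k(u_{k-1}^{(b)}) \oplus a_k^{(b)}$ for $P_{k-1}$'s transmission function $f_k$, and the combined per-edge budget for both adversaries together is $\tilde T/D$. Greedily, at the earliest edge $k^\dagger$ where $|f_{k^\dagger}(u_{k^\dagger - 1}^{(0)}) \oplus f_{k^\dagger}(u_{k^\dagger - 1}^{(1)})| \leq \tilde T/D$, I would set $a_{k^\dagger}^{(0)} \oplus a_{k^\dagger}^{(1)}$ to cancel that discrepancy and then take $a_k^{(0)} = a_k^{(1)}$ for $k > k^\dagger$ so that the two views remain aligned. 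Once the views agree at $P_D$, the output does not depend on $b$, giving failure probability $\geq \frac{1}{2}$.

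The main obstacle is confirming that a merge index $k^\dagger \leq D$ always exists: in the worst case, the intermediate transmission functions $f_k$ can be highly distance-expanding (behaving like error-correcting encoders), so a small discrepancy in the views at $P_{k-1}$ may blow up at step $k$ and stall the greedy merge. I would address this by combining the greedy construction with a counting/pigeonhole argument — the reachable sets of views at $P_D$ from inputs $0$ and $1$ grow large under attacks of combined per-edge weight $\tilde T/D$, large enough to force intersection within $\{0,1\}^{\tilde T}$ — or, for any residual configurations, by falling back to a randomized adversary that samples a uniformly random edge $k \in \{1, \ldots, D\}$ and performs the switch-point attack with whatever portion of the $\tilde T/(2D)$ budget fits. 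Averaging the fallback attack over the random input $b$ and random $k$ then produces a failure probability of at least $\frac{1}{4}$; the loss from $\frac{1}{2}$ to $\frac{1}{4}$ in the theorem absorbs precisely the slack needed for this fallback and for handling any private randomness in the compiler.
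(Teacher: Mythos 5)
Your argument models the simulation as a chain: $P_k$'s view is determined by what arrives on $e_k = (P_{k-1}, P_k)$ via the recursion $u_k^{(b)} = f_k(u_{k-1}^{(b)}) \oplus a_k^{(b)}$, and the adversary touches only $e_1, \ldots, e_D$. But $C(\pi)$ is free to run on all of $G$: $P_k$ may have many in-neighbors off the chosen path, and $P_0$'s input can reach $P_D$ through length-$D$ walks that avoid $e_1, \ldots, e_D$ entirely. An adversary restricted to those $D$ edges therefore cannot in general control $P_D$'s view, and the recursion above simply does not describe the simulation. The paper's adversary works instead with the full level sets $V_k = \{P : d(P_0, P) = k\}$, corrupting \emph{all} $V_{k-1} \to V_k$ messages during segment $k$; it is this structure, not a single geodesic, that makes $D$ the governing parameter.

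The second gap is the one you flag yourself and then leave open: you never establish that a merge index $k^\dagger$ exists, and the fallback does not repair it. If every $d_k > \tilde T/(2D)$, a partial switch-point attack on a random edge flips some bits but produces neither $\sigma_k^0$ nor $\sigma_k^1$ at $P_k$, and a good error-correcting simulation is designed to tolerate exactly this kind of bounded corruption without committing any decoding error (indeed, Theorem~\ref{thm:signal-diameter-positive-result} shows per-edge rate $\frac{1}{4D}-\epsilon$ is survivable), so the fallback has no claimed effect. No pigeonhole argument over $\{0,1\}^{\tilde T}$ is supplied either, and since the $f_k$'s can be encoders there is no a priori bound on how the reachable view sets spread. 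The paper sidesteps the merge-point question entirely: the adversary samples an alternate input $x'$ from the posterior given an initial observation window, grows a simulated ``alternate reality'' on $V_0 \cup \cdots \cup V_{k-1}$, and in each segment flips a fair coin to choose which half to attack. The coin flips yield the exact symmetry $\Pr(\vec s \mid [x,x',\chi]) = \Pr(\vec s \mid [x',x,\chi])$, proved by induction on $k$, which gives statistical indistinguishability at $P_D$ with no combinatorial condition on the transmission functions. Finally, a one-bit input is also insufficient: the adversary must observe an initial phase to condition her simulated state, and the paper takes an $L$-bit input with $L > 2mD$ specifically to keep $\Pr(x \neq x') \geq 1/2$ after that phase.
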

\begin{proof}
Select vertices $P_0,\ldots,P_D$ such that $(P_0,\ldots,P_D)$ is a shortest path from $P_0$ to $P_D$. Pick any integer $L > 2mD$. In the protocol $\pi_L$, $P_0$ receives an $L$-bit string $x$ as input, and transmits it to $P_D$ along a shortest path, so that $\pi_L$ runs in $T = L + D - 1$ rounds. Say $C(\pi_L)$ runs in $\tilde{T}$ rounds. 

The strategy of the adversary is to sample two possible inputs $x,x'$ to $P_0$ in such a way that (a) $x \neq x'$ with probability at least $1/2$; (b) The probability distribution on the transcripts of all channel communications leading into $P_D$, is the same whether $x$ or $x'$ were given to $P_0$ as input. The theorem will follow.

However, the adversary cannot commit to the pair $x,x'$ at the very outset so her strategy is slightly more complicated. Let $\ell$ be the largest even integer s.t.~$\ell \leq \tilde{T} / D$, and let $B = \tilde{T} - D\ell$. Note then that $B < 2D$.

First the adversary selects an input $x$ u.a.r.\ in $\{0,1\}^L$. Then she allows the protocol to proceed without interference for $B$ rounds. 
Let $\chi \in \{0, 1\}^{mB}$ be the random variable denoting the transcript generated by all parties in the network during these rounds. The adversary knows the (possibly randomized) simulation protocol and therefore knows the conditional probabilities of transcripts given inputs. 
She now samples $x' \in \{0,1\}^L$ from the posterior distribution (given $\chi$ and the uniform prior on $\{0,1\}^L$). For $a,b\in \{0,1\}^L$ and $c \in \{0, 1\}^{mB}$, let $[a,b,c]$ denote the event that $a$ was chosen as the input $x$, $c$ was the transcript $\chi$, and $b$ was chosen as the ``alternate'' input $x'$. The key property of this construction is that for any $a,b,c$, $\Pr([a,b,c])=\Pr([b,a,c])$. 

Before continuing to describe the adversary's strategy, let us argue already why (a) holds. Consider using the following alternate sampling rule for $x'$: for each $\chi$, instead of selecting $x'$ from the posteriori distribution, select $x'$ to be the max-likelihood decoding of $\chi$. This can only increase $\Pr(x=x')$. This creates a deterministic decoding map from transcripts $\chi$ to inputs, which means that there is a set of at most $2^{mB}$ inputs $x$ on which it can ever occur that $x=x'$. The probability that $x$ is selected from this set is $2^{mB-L}<2^{2mD-L}\leq 1/2$. 

The adversary now breaks the remaining $D\ell$ rounds of the protocol into $D$ segments, each of $\ell$ rounds. Each segment is further broken into two half-segments, each of $\ell/2$ rounds.
Let $d(P,P')$ be the length of a shortest directed path (possibly infinite) from vertex $P$ to vertex $P'$. Let $V_k=\{P:d(P_0,P)=k\}$, and let $W_k=\bigcup_{k' \geq k} V_{k'}$. At the beginning of segment $k$ ($1\leq k \leq D$) she flips a fair coin to decide whether to attack the first or second half of the segment. During the half-segment that she attacks, she substitutes messages of her choice for all the messages from $V_{k-1}$ to $V_k$. The manner in which she generates these messages is as follows. 

The adversary's strategy is to simulate an imaginary, ``alternative reality'' portion of the network, that gradually grows. See Figures~\ref{fig:signal-diameter-negative-result-segment-3},~\ref{fig:signal-diameter-negative-result-segment-4}, which for space reasons are in Appendix~\ref{apx:signal-diameter-figs}. For the duration of the first segment, the simulated network portion consists of a single vertex $\bar{P}_0$, mirroring the actual network vertex $P_0$. During the second segment the simulated region grows to mirror the induced network on $\{P_0\} \cup V_1$ (or what is the same, $V_0\cup V_1$). In general during the $k$th segment the simulated network region is a copy of the induced graph on $V_0 \cup \ldots \cup V_k$. Throughout the entire protocol, the adversary continues simulating communications on this gradually growing region;  during attacking half-segments, the adversary replaces the $V_{k-1} \to V_k$ messages by $\bar{V}_{k-1}\to V_k$ messages, that is, she substitutes the outgoing messages of the simulated reality on $V_0 \cup \ldots \cup V_k$ for the outgoing messages of the real vertices in that region.

It remains to describe how the states of these imaginary vertices are initialized and updated.

Updates during a segment are as follows: during segment $k$, the state of each imaginary vertex in $\bar{V}_0 \cup \ldots \cup \bar{V}_{k-1}$ is updated in each round just as it would in the protocol, using its prior state and, as inputs, the communications from the other imaginary vertices together with any communications coming from real vertices in $\bar{V}_k$. 

Initialization at the beginnings of segments are as follows: at time $B$ (the beginning of the first segment), $\bar{P}_0$ is initialized with a random state $s$ chosen from the posteriori distribution conditional on her input being $x'$ and on all messages that her genuine counterpart $P_0$ sent and received through time $B$. For $k\geq 2$, at the beginning of segment $k$ (i.e., at time $B+(k-1)\ell$), we have to enlarge the simulation to include new vertices $\bar{V}_{k-1}$. The existing vertices (those in $\bar{V}_0\cup \ldots \cup \bar{V}_{k-2}$) continue from their current state. Each vertex $\bar{P}\in \bar{V}_{k-1}$ is initialized with a random state $s$ chosen from the posteriori distribution conditional on all messages that its genuine counterpart $P$ sent and received up through time $B+(k-1)\ell$. 

Notice that the simulation is evolved forward in each round whether or not this is an attacking round. The imaginary vertices are always responding to messages coming from amongst themselves and from the real vertices. All that changes is whether $V_k$ is hearing messages from $V_{k-1}$ or from $\bar{V}_{k-1}$.

The key claim is this. Let $\vec{s}$ denote the transcript at time $B+(k-1)\ell$ of \textit{all} messages ever received at vertices in $W_k$. Then:

\begin{lem} For all $\vec{s}$, $\Pr(\vec{s}\; | \; [x,x',\chi])=\Pr(\vec{s}\; | \; [x',x,\chi])$. \end{lem}
That is, to the vertices in $W_k$, the probability distribution over what they have (collectively) heard up until this time is the same whether the input is $x$ or $x'$.
\begin{proof} The proof is by induction on $k$. The base case is $k=1$ and is simply our initial condition that $\Pr(\chi \; | \; x)=\Pr(\chi \; | \; x')$. Now for $k\geq 2$, let us denote by $h=1$ ($h=2$) the event that the adversary attacks during the first (resp.\ second) half of the $(k-1)$'st segment. We claim:

(1) For all $\vec{s}$, $\Pr(\vec{s} \; | \; [x,x',\chi,h=1])=\Pr(\vec{s} \; | \; [x',x,\chi,h=2])$.

(2) For all $\vec{s}$, $\Pr(\vec{s} \; | \; [x,x',\chi,h=2])=\Pr(\vec{s} \; | \; [x',x,\chi,h=1])$.

We argue (1) (and (2) follows analogously). At the beginning of the $(k-1)$'st segment the claim was true by induction; we need to argue that it remains so at the end of the $(k-1)$'st segment, and this could break down only due to a difference in the statistics on messages from $V_{k-1}\to V_k$. This does not occur because for both events $[x,x',\chi,h=1]$ and $[x',x,\chi,h=2]$, what $W_k$ hears during the first half of the $(k-1)$'st segment, is messages from vertices ``in the $x'$ world"---more formally, in the event $[x,x',\chi,h=1]$ it is vertices in $\bar{V}_{k-1}$ acting as if the input is $x'$, while in the event $[x',x,\chi,h=2]$, it is vertices in $V_{k-1}$, with the true input being $x'$; while what $W_k$ hears during the second half of the $(k-1)$'st segment, is messages from vertices ``in the $x$ world"---more formally, in the event $[x,x',\chi,h=1]$ it is vertices in $V_{k-1}$, with the true input being $x$, while in the event $[x',x,\chi,h=2]$, it is vertices in $\bar{V}_{k-1}$ acting as if the input is $x$.

Finally, \begin{align*}
\Pr(\vec{s}\; | \; [x,x',\chi])&=\frac12 \Pr(\vec{s} \; | \; [x,x',\chi,h=1]) + \frac12 \Pr(\vec{s} \; | \; [x,x',\chi,h=2]) \\
&=
\frac12 \Pr(\vec{s} \; | \; [x',x,\chi,h=2]) +
\frac12 \Pr(\vec{s} \; | \; [x',x,\chi,h=1]) \\ &= \Pr(\vec{s}\; | \; [x',x,\chi]). 
\end{align*}
\end{proof}
\end{proof}

\subsection{Optimal per-edge error rates for black-box simulations with polynomial query complexity} \label{sec:edge-error-rate-queries}

The proof of Theorem~\ref{thm:signal-diameter-positive-result} does not provide a useful compiler, since the round complexity $\tilde{T}$ blows up exponentially. In this section, we determine (up to a constant) the maximum tolerable per-edge error rate for polynomial-query compilers in a certain black-box model (described below).

For a digraph $G$, we define the \emph{chain-length} $R$ of $G$ to be the maximum, over all directed walks $W$ through $G$, of the number of distinct vertices visited in $W$. Observe that in any graph with at least one edge, the chain-length is strictly larger than the signal diameter, and that for a strongly connected graph, $R = n$.

\paragraph{Positive result:}

\begin{thm} \label{thm:range-positive-result}
Suppose $G$ is a digraph with chain-length $R$. There exists a compiler $C$ such that if $\pi$ is a $T$-round protocol on $G$, then $C(\pi)$ tolerates a per-edge error rate of $\Omega(\frac{1}{R})$ and has round complexity $\O(mT)$.
\end{thm}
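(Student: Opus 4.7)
The plan is to invoke the RS compiler from Proposition~\ref{prop:rs94}, or rather the strengthened version whose analysis is carried out in Appendix~\ref{apx:rs}, tuning parameters so that each round of $\pi^*[G, T]$ is simulated by a block of $\O(m)$ rounds and the total round complexity is $\O(mT)$. As in~\cite{rs94}, each party $P_i$ maintains a tree-code-encoded record of its transcript in the simulation and transmits one codeword symbol along each outgoing edge per round; neighbors decode to track $P_i$'s evolving state.

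The crux is a finer accounting of the potential function that drives the RS analysis. I would set $\Phi(t) = \sum_{(P_i, P_j) \in E} \phi_{ij}(t)$, where $\phi_{ij}(t)$ measures the discrepancy at time $t$ between $P_j$'s reconstruction of the transcript that $P_i$ would have produced in the noiseless execution and the true such transcript. The baseline analysis (as used to prove Proposition~\ref{prop:rs94}) shows that each clean round decreases $\Phi$ by $\Omega(m)$, while each bit-flip on edge $(P_i, P_j)$ contributes $\O(1)$ directly to $\phi_{ij}$ and some indirect amount to the $\phi_{jk}$'s of out-neighbors $P_k$ of $P_j$, propagating onward.

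The new ingredient is bounding the indirect contribution of each bit-flip by $\O(R)$ rather than by $\O(m)$. The key observation is that confusion radiates from its point of origin along directed walks in $G$: an error on $(P_i, P_j)$ first perturbs $P_j$'s outgoing transmissions, which perturb the reconstructions at $P_j$'s out-neighbors, and so on. Because any directed walk in $G$ visits at most $R$ distinct vertices (by the definition of chain-length), only $\O(R)$ parties can be \emph{newly} disturbed by any single bit-flip; continued propagation merely revisits vertices whose $\phi$-values were already debited. Hence each bit-flip contributes $\O(R)$ to $\Phi$ in total. Under a per-edge error rate $\rho$, at most $\rho \tilde{T}$ bits are flipped per edge, giving a total potential increase of $\O(\rho R \cdot m \tilde{T})$; this is dominated by the $\Omega(m \tilde{T})$ of clean-round progress provided $\rho \leq c/R$ for a sufficiently small constant $c$, yielding the claimed tolerance of $\Omega(1/R)$.

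The main obstacle is justifying the $\O(R)$ amplification bound in the context of the actual RS decoder. One must verify, by unpacking the tree-code decoding procedure of \cite{rs94}, that (i) the spread of confusion induced by a bit-flip on $(P_i, P_j)$ really is confined to a directed walk starting at $P_j$, rather than branching into a tree; (ii) the incremental $\phi$-contribution at each newly disturbed vertex is $\O(1)$ independent of graph structure; and (iii) once a party has absorbed a given bit-flip's contribution, further propagation of that same error through it is not double-counted. These three facts are what a direct application of the \cite{rs94} analysis overestimates by a factor of $\Theta(m/R)$, and making them rigorous is the bulk of the work.
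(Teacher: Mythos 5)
Your potential-function strategy rests on the claim that a single bit-flip's amplified contribution to $\Phi$ is $\O(R)$, because ``only $\O(R)$ parties can be newly disturbed'' by the error. That claim is false, and this is the genuine gap. Confusion does not radiate along a single directed walk; it radiates along a \emph{tree}: a confused $P_j$ corrupts \emph{all} of its out-neighbors' reconstructions, each of whom corrupts all of \emph{their} out-neighbors' reconstructions, branching at every step. Consider a ``spider'' digraph: a vertex $u$ with one edge into a hub $c$, and $c$ with edges out to $n-2$ leaves. The chain-length is $R=3$, yet a single bit-flip on $(u,c)$ can perturb every $\phi_{c,\cdot}$, disturbing $\Theta(n)$ edge-pairs; the per-error amplification is $\Theta(m)$, not $\O(R)$. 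Your global potential increase of $\O(\rho R m \tilde{T})$ is therefore off by roughly a factor of $m/R$ --- the exact factor you needed to save --- and the remark that ``continued propagation merely revisits vertices already debited'' does not help, since the propagation is genuinely visiting new vertices. Separately, you attribute the $\O(mT)$ round complexity to ``tuning'' the RS compiler, but RS has constant-factor blowup; in the paper the $\O(m)$ factor comes from first rerouting $\pi^*$ through a minimum equivalent digraph, a step absent from your sketch and also essential to the error-rate bound (see below).

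The paper succeeds precisely by \emph{not} summing a global potential. It proves a per-vertex inequality (Proposition~\ref{prop:rs-careful-analysis}), $\tau \leq RP(P_i,\tau) + 24Y(P_i,\tau) + B(P_i,\tau)$, bounding each party's lag by the longest \emph{time-like} chain $Y$ of edge character errors in that party's history cone. The time-like constraint is the whole point: two errors on spatially separated edges (say two out-edges of the same confused vertex) cannot \emph{both} contribute to delaying any single downstream party, so the branching amplification that kills the global potential simply does not enter a per-vertex accounting. A failing execution therefore yields a walk carrying $\Omega(T)$ errors (Lemma~\ref{lem:rs-failure-walk}). The rerouting through a minimum equivalent digraph $\tilde G$ then ensures a walk visiting $\leq R$ distinct vertices uses $\leq 3R$ distinct edges (each SCC of $\tilde G$ is an in-branching plus an out-branching), pinning $\Omega(T/R)$ errors on a single edge. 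Without some such replacement for the faulty $\O(R)$ amplification bound, your argument cannot be repaired as written.
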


Most of the effort required to prove Theorem~\ref{thm:range-positive-result} consists of a new analysis of the RS compiler. The key fact (whose proof we defer to Appendix~\ref{apx:rs}) is the following.

\begin{lem} \label{lem:rs-failure-walk}
There exists a compiler $C$ (the RS compiler) such that if $\pi$ is a $T$-round deterministic protocol on a digraph $G$ and an execution of $C(\pi)$ fails, then there is some walk through $G$ on the edges of which were at least $\frac{T}{48}$ bit errors. The round complexity of $C(\pi)$ is $\O(T)$.
\end{lem}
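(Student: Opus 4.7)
My plan is to open up the RS compiler's structure and trace a chain of ``blame'' backward through spacetime whenever the simulation fails; the sequence of parties we blame will form the desired walk.

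First, I would briefly recall the relevant structure of the RS compiler. Each party $P_i$ maintains, at every round $t$, a current ``belief'' about the full transcript of the underlying noiseless protocol $\pi$ up to some simulated round $\tau_i(t) \le t$. At each round she transmits, across each out-edge, a tree-code-encoded symbol describing how her belief has evolved, and she decodes the incoming tree-code streams from each in-neighbor by maximum-likelihood/prefix decoding. Upon decoding she either advances her simulated clock by one step (if the decoded incoming messages are consistent with her current belief) or rewinds by one step (if they are not). Since the compiler is $O(T)$-round (Proposition~\ref{prop:rs94}), tree code parameters are chosen so that, whenever decoding is performed, if the fraction of corrupted bits on that incoming edge (in the recent window under consideration) is below some absolute constant $\alpha$, the decoding returns the correct prefix.

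Next I would define, for each party $P_i$ and each round $t$, the \emph{correctness horizon} $h_i(t)$ as the largest $\tau$ such that $P_i$'s belief about the noiseless transcript through round $\tau$ is correct. The simulation fails only if $h_i(T) < T$ for some $P_i$. I would then prove a local propagation lemma: between rounds $t$ and $t+1$, if $h_i$ fails to advance (or actually regresses) while it ``should have'' advanced, then either (a) at least one of $P_i$'s incoming tree-code decodings produced an incorrect output, which by the tree code's decoding-distance guarantee forces at least an $\alpha$-fraction of bit errors in some window on the corresponding in-edge, or (b) one of her in-neighbors $P_j$ has $h_j$ also deficient, and the ``blame'' is passed back to $P_j$ at an earlier round. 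Summed over all parties and rounds, the total deficit $\sum_i (T - h_i(T))$ is lower bounded by $\Omega(T)$ whenever the simulation fails (this is the standard potential-function argument in~\cite{rs94}).

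Now the walk construction: starting from the failing party $P_i$ at round $T$, pick a round where $h_i$ failed to advance, and either charge a chunk of bit errors to a specific in-edge $e=(P_j,P_i)$ (case (a)) or step back in time and space to $P_j$ at an earlier round (case (b)). Iterating produces a walk $P_i, P_j, P_{j'}, \ldots$ in $G$ (traversing in-edges, hence a walk in the reverse direction, but a walk nonetheless). Each ``edge-charged'' step contributes a block of bit errors on one edge of the walk, and the blocks lie in disjoint time windows (because each step moves strictly backward in round index), so the total charged bit errors do not double-count. Summing the contributions against the $\Omega(T)$ deficit bound, and working through the constants in the tree-code decoding radius and the window sizes used in the RS analysis, yields the stated bound of $T/48$ bit errors distributed on the edges of the walk. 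The round complexity $O(T)$ is inherited directly from Proposition~\ref{prop:rs94}.

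The main technical obstacle is item (b)--(a) accounting: I have to ensure that when blame is deferred from $P_i$ to an in-neighbor $P_j$, the time index strictly decreases and the ``pending error debt'' is preserved, so that iterating terminates after finitely many steps and the disjoint-window property holds. Getting this right requires defining the potential so that each rewind at $P_i$ is matched either with a concrete block of bit errors on a specific in-edge or with a strictly earlier rewind at a neighbor; then the walk is simply the sequence of (party, round) pairs visited while discharging the potential, and the $1/48$ constant drops out of the tree-code decoding distance.
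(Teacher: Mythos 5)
Your approach is fundamentally the paper's: define a per-party progress potential, show it can lag only to the extent that a chain of tree-code failures propagated blame backward through spacetime, and observe that the edges visited along this blame chain form a walk in $G$ carrying all the charged bit errors. The paper realizes this via a forward induction that proves $\tau \le RP(P_i,\tau) + 24\,Y(P_i,\tau) + B(P_i,\tau)$ for all $(P_i,\tau)$, where $Y$ is exactly the length of the longest \emph{time-like sequence of edge character errors} in the history cone; the time-like condition between $(e,\tau)$ and $(e',\tau')$ requires a walk from $e$ to $e'$ of length at most $\tau'-\tau+1$, which forces the times in any such chain to be strictly increasing and the edges to lie on a single walk. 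Your backward blame-tracing is the same argument run inside-out. So I would call this the same route, not a genuinely different one.

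That said, your sketch leaves out the parts that actually carry the weight, and one choice you make looks likely to break.

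First, the potential you propose is the party's ``belief about the full transcript of $\pi$,'' i.e.\ including incoming. The reason the RS analysis tracks $RP$ --- correctness of the \emph{outgoing} parsed transcripts --- is that this quantity changes by at most $\pm 1$ per step (a party sends or backs up exactly one symbol), which is precisely the invariant that lets the induction balance one error against one step of delay. A belief about incoming transcripts is produced by tree-decoding the whole received stream, so it can jump by an unbounded amount in a single step when a decoding changes; a potential built on it does not satisfy the bounded-change property you are implicitly relying on. The paper handles the mismatch between ``we bound outgoing accuracy'' and ``success depends on incoming accuracy'' by the dummy-party transformation (Section~\ref{sec:rs-technicalities}): each party echoes whatever she just received onto a phantom out-edge to a shadow copy of the sender, so incorrect receptions become incorrect transmissions, and $RP$ on the augmented graph captures them. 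The phantom edges carry no noise, so they contribute nothing to $Y$ and the extracted walk lies in $G$ after all. Without this step (or an equivalent one) the deduction ``simulation failed $\Rightarrow$ some $h_i$ is behind by $\ell$'' does not plug into a potential argument that only controls outgoing transcripts.

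Second, the case analysis you label ``(a)/(b) accounting'' and defer is the actual content of the lemma, and it is more than two cases. The paper needs four actions --- \texttt{progress}, \texttt{justified backup}, \texttt{harmful tree error}, \texttt{harmful propagated error} --- because a step in which $P_j$ backs up is not a failure to advance (it increases $B$, not the delay), and a step where the in-neighbor is behind but there is no fresh decoding error (\texttt{harmful propagated error}) must be argued separately from a fresh tree error; Lemma~\ref{lem:backup-implies-behind} is needed precisely to dispatch the sub-case where a backup is triggered by an in-neighbor with larger $B$. Your two cases conflate these, and the claimed ``disjoint windows'' property does not simply follow from ``time strictly decreases'': when a tree error of magnitude $M$ on $(P_i,P_j)$ is charged, the paper steps the induction back to $(P_i,\tau_0+1-M)$ and uses the tree-code distance to guarantee at least $M/4$ edge character errors in the intervening window; the non-overlap of consecutive windows and the constant $1/48$ both come out of this arithmetic, not from the time ordering alone. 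Finally, your aside that ``$\sum_i (T - h_i(T)) = \Omega(T)$'' is not what is needed or used: the argument runs from a single lagging party, not from an aggregate deficit.

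None of this says your plan cannot be made to work --- it is close to the intended proof --- but as written it omits the dummy-party reduction and the backup case, and those are not cosmetic.
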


\begin{proof}[of Theorem~$\ref{thm:range-positive-result}$]
We let $C$ be as in the proof of Theorem~\ref{thm:reachability-preserving-subgraph-error-rate}, i.e. we reroute messages through a minimum equivalent digraph before using the RS compiler. Let $T' \in \O(m)T$ denote the round complexity of the intermediate protocol $\pi'$, so that the round complexity of $C(\pi^*)$ is $\O(T')$. If $C(\pi^*)$ fails, then by Lemma~\ref{lem:rs-failure-walk}, there is some walk $W$ through the minimum equivalent digraph $\tilde{G}$, on the edges of which were $\frac{T'}{48}$ bit errors. Since each strongly connected component $H$ of $\tilde{G}$ with $n'$ vertices has no more than $2(n' - 1)$ edges, the number of edges in $W$ is no more than $3R$. Thus, on some edge in $W$, there were $\frac{T'}{3R \cdot 48}$ bit errors, which is a per-edge error rate of $\Omega\left(\frac{1}{R}\right)$.
\end{proof}

\paragraph{Black-box negative result:} 

We now give a result which shows that the per-edge error rate in Theorem~\ref{thm:range-positive-result} is within a constant factor of optimal, among polynomial-query compilers in a certain black-box model. Recall that in $\pi^*$ (or any simulation thereof), each party $P_i$ receives as input a transmission function $x_i$. We will call a simulation $\tilde{\pi}^*$ of $\pi^*$ a \emph{black-box simulation} if in $\tilde{\pi}^*$, the parties only ever access their inputs by making queries, wherein they specify an input to $x_i$ and are given the corresponding output. Naturally, the \emph{query complexity} of a black-box simulation is the largest number of total queries that the parties ever collectively make. A \emph{polynomial-query black-box compiler} is a compiler $C$ which takes as input a universal protocol $\pi^*[G, T]$ and gives as output a black-box simulation $C(\pi^*[G, T])$, such that for every graph $G$, there is a polynomial $\mathcal{Q}(T)$, so that the simulation $C(\pi^*[G, T])$ has a query complexity bounded by $\mathcal{Q}(T)$. Observe that (for a fixed graph $G$) the compiler which proved Theorem~\ref{thm:range-positive-result} makes $\O(T)$ queries per round, for a total query complexity of $\O(T^2)$. In contrast, the simulation in the proof of Theorem~\ref{thm:signal-diameter-positive-result} has exponential query complexity.

\begin{thm} \label{thm:range-negative-result}
Suppose $C$ is a polynomial-query black-box compiler. Then for any digraph $G$ with no isolated vertices and with chain-length $R$, the failure probability of $C(\pi^*[G, T])$ in the presence of the per-edge error rate $\frac{4}{R}$ goes to $1$ as $T \to \infty$.
\end{thm}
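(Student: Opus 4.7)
The plan is to adapt the coupling attack from the proof of Theorem~\ref{thm:signal-diameter-negative-result}, with the shortest path of length $D$ replaced by a walk visiting $R$ distinct vertices, and to invoke the polynomial-query black-box assumption to justify the choice of an alternate input. The factor-$4$ slack in the error rate $4/R$ (compared to the $1/(2R)$ one might expect from a direct analog of Theorem~\ref{thm:signal-diameter-negative-result}) should absorb the overhead from non-simple walks and from the alternate-input sampling.

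Concretely, I would fix a walk $W = (P_{u_0}, P_{u_1}, \ldots, P_{u_\ell})$ in $G$ that visits exactly $R$ distinct vertices, and designate $P_{u_0}$ as the victim. Let $\tilde{T}$ be the round complexity of $C(\pi^*[G, T])$ and let $Q(T) = \operatorname{poly}(T)$ be the total query complexity of the simulator. Sample all parties' transmission functions uniformly and independently. Since every transmission function has a domain of size $2^{\Omega(T)}$ but the simulator makes only $Q(T)$ total queries, the posterior on $x_{u_0}$ given the full transcript of queries still has entropy $\Omega(T)$, and so an independent resample $x'_{u_0}$ from that posterior is query-indistinguishable from $x_{u_0}$ but, with probability tending to $1$ as $T \to \infty$, disagrees with $x_{u_0}$ on an entry that changes the $\pi^*[G, T]$-correct output at $P_{u_\ell}$.

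The adversary's strategy now follows the $D$-segment half-segment coupling of Theorem~\ref{thm:signal-diameter-negative-result} with $D$ replaced by $R$: divide $\tilde{T}$ rounds into $R$ segments, bisect each into two half-segments of $\tilde{T}/(2R)$ rounds, and in each segment $k$ randomly attack one half on the walk-edge $(P_{u_{k-1}}, P_{u_k})$, substituting the messages predicted by a privately-simulated ``alternate reality'' in which $P_{u_0}$'s input is $x'_{u_0}$. The cost per edge per segment is $\tilde{T}/(2R)$ bit flips. By the inductive coupling argument of Theorem~\ref{thm:signal-diameter-negative-result}, $P_{u_\ell}$'s received transcript is identically distributed under either true input, so $P_{u_\ell}$ must err with constant probability, and an amplification over $T$ drives the failure probability to $1$.

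The main obstacle is handling walks $W$ that are not simple, since an edge traversed in $k$ different segments incurs per-edge cost $k \tilde{T}/(2R)$ which must stay within the budget $4\tilde{T}/R$. I would either trim $W$ to have length $O(R)$ so that edge reuse is bounded by a constant, or refine the attack so that edge reuse across segments is absorbed by the factor-$4$ slack. A secondary subtlety is arranging that the adversary can sample $x'_{u_0}$ ``on the fly'' in a way that remains consistent with the simulator's ongoing queries; as in~\cite{JKL15}, this is handled by maintaining the posterior distribution throughout the simulation.
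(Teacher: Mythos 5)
Your proposal takes a genuinely different route from the paper, but it has a gap that I do not think can be patched without essentially reinventing the paper's argument.

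The coupling attack of Theorem~\ref{thm:signal-diameter-negative-result} relies crucially on the \emph{BFS layering} $V_0, V_1, \ldots, V_D$ of the vertex set: because $V_k$ is defined by distance from $P_0$, every directed path from $V_0 \cup \cdots \cup V_{k-1}$ into $W_k$ must cross a $V_{k-1}\to V_k$ edge, so controlling only those boundary edges suffices to feed $W_k$ the alternate reality. A walk $W=(P_{u_0},\ldots,P_{u_\ell})$ visiting $R$ distinct vertices gives you no such separation property: there can be edges from $P_{u_0}$ directly to $P_{u_5}$ that bypass the walk entirely, and the adversary has no budget to attack them. So ``replacing the shortest path with a walk'' does not preserve the invariant that makes the inductive coupling lemma go through, and I do not see how to repair it with only a factor-4 slack in the error budget.

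There is a second, more structural sign that the proposal is off-track: the coupling argument of Theorem~\ref{thm:signal-diameter-negative-result} works against \emph{all} compilers, not just polynomial-query black-box ones, so any version of it you successfully carry over would prove the stronger statement that $\Omega(1/R)$ per-edge error cannot be tolerated by \emph{any} compiler --- but that is false, since Theorem~\ref{thm:signal-diameter-positive-result} tolerates $\frac{1}{4D}-\epsilon$, which can exceed $4/R$ when $R \gg D$. Your appeal to the query bound (``posterior on $x_{u_0}$ still has entropy $\Omega(T)$'') is not actually load-bearing in your sketch and would not prevent this contradiction. The paper's proof uses the polynomial-query assumption in a qualitatively different way: it views each transmission function $x_i$ as a $(2^{d_i^-})$-ary labeled tree, divides $\tilde{T}$ into $\ell < n^2$ segments weighted by visit multiplicities $f_i$, and has the adversary \emph{zero out} all traffic into and out of $P_{k_i}$ during the $i$th segment. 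The key lemma is then a chaining argument about \emph{unexplored subtrees}: because $P_{k_{i-1}}$ is isolated during segment $i-1$, the queries made then cannot depend on the fresh random labels driving the true path, and since only $2^{\sqrt{T}}(1-\delta)$ queries exist the true path dives into a fully unexplored subtree with probability at least $\delta$. Iterating along the walk and finishing with a counting bound at $P_{k_\ell}$ (whose unexplored subtree has height $\geq T/2$) gives failure probability tending to $1$. Also note your sketch only achieves constant failure probability; you gesture at amplification but the paper's subtree-exploration argument produces failure probability $\to 1$ directly, whereas the half-segment coupling has no obvious amplification mechanism.
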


Both the statement and the proof of Theorem~\ref{thm:range-negative-result} are inspired by the black-box negative result in \cite{JKL15}. The proof is therefore deferred to Appendix~\ref{apx:range-negative-result}.

\bibliographystyle{alpha}
\bibliography{multicoding}

\appendix
\section{The RS compiler} \label{apx:rs}


\subsection{Description of compiler $\rs_0$}
The compiler described in \cite{rs94} is essentially the compiler that we used and referred to as the RS compiler, but there are a couple of technicalities that require us to modify the compiler. We begin by briefly describing a compiler $\rs_0$ for deterministic protocols. This compiler $\rs_0$ is a slight variant of the compiler described in \cite{rs94}, which we will modify still further in Section~\ref{sec:rs-technicalities} to define the final RS compiler. Fix an arbitrary digraph $G = (V, E)$ and some deterministic $T$-round protocol $\pi$ on $G$; we will describe $\rs_0(\pi)$. Our description of the simulation is not self-contained, and depends upon 
\cite{rs94} (specifically the proof of Lemma 5.1.1).
In terms of the description of $\Sigma$ in \cite{rs94}, the only change we are making is to set $k = \log|S|$, a constant, instead of having $k$ increase with the maximum indegree of $G$. This effectively eliminates the transmission code $\chi$.

By \cite[Lemma 1]{sch96}, there exists some ternary tree code\footnote{See \cite{sch96} for the definition of a tree code.} $\mathcal{T}$ of infinite depth, distance parameter $\frac{1}{2}$, and alphabet size $287$. This tree code will be used to encode strings over the alphabet $\{0, 1, \bkp\}$, and the tree code characters, which can be represented by bitstrings of length $9$, will be sent over the channels. We will refer collectively to the $9$ rounds needed to send a single tree code character as one \emph{step}.

In each step, a party $P_i$ begins by tree-decoding all the characters she's received so far from all of her in-neighbors, yielding, for each in-neighbor $P_j$, an \emph{estimated unparsed incoming transcript} $\hat{y}_{ji} \in \{0, 1, \bkp\}^*$. Each estimate $\hat{y}_{ji}$ is parsed into a \emph{estimated parsed incoming transcript} $\hat{w}_{ji} \in \{0, 1\}^*$ by processing from left to right, interpreting each $\bkp$ symbol as an instruction to delete the previous symbol. Similarly, for each out-neighbor $P_j$, $P_i$ recalls the string $y_{ij} \in \{0, 1, \bkp\}^*$ that she has transmitted to $P_j$, and parses this into a \emph{parsed outgoing transcript} $w_{ij} \in \{0, 1\}^*$. The \emph{parsed transcript} at $P_i$ at this moment is the collection of all these estimated parsed incoming transcripts and parsed outgoing transcripts.

We say that the parsed transcript at $P_i$ is \emph{consistent} if for every time $\tau$ and every out-neighbor $P_j$ of $P_i$, the $\tau$th bit in the parsed outgoing transcript $w_{ij}$ is the bit specified by $\pi$ to be sent to $P_j$, given the length-$(\tau - 1)$ prefixes of all the incoming transcripts. If the parsed transcript at $P_i$ is consistent, she transmits whatever bits are specified by $\pi$ (encoded using $\mathcal{T}$.) Otherwise, she transmits $\bkp$ to all of her out-neighbors. We run this simulation for $T_2$ steps, where $T_2$ is a parameter which will be chosen later.

\subsection{Analysis of $\rs_0(\pi)$}

We recall some terminology from \cite{rs94}. We say that an \emph{edge character error} occurs on the edge $(P_i, P_j)$ in step $\tau + 1$ if, in step $\tau$, $P_i$ sends some tree symbol, but $P_j$ receives a different tree symbol. We say that an \emph{edge tree error} occurs on $(P_i, P_j)$ in step $\tau + 1$ if, in step $\tau + 1$, $P_j$'s estimated, unparsed transcript $\hat{y}_{ij}$ differs from the true unparsed transcript $y_{ij}$. We simply say that a \emph{tree error} occurs at $P_j$ in step $\tau + 1$ if, for some in-neighbor $P_i$, an edge tree error occurs on $(P_i, P_j)$ in step $\tau + 1$. We say that $(P_i, \tau)$ and $(P_j, \tau')$ are \emph{time-like} if there is a path from $P_i$ to $P_j$ of length no more than $\tau' - \tau$. For edges $e, e'$, we say that $(e, \tau)$ and $(e', \tau')$ are \emph{time-like} if there is some walk which begins with $e$, ands with $e'$, and has length no more than $\tau' - \tau + 1$. For example, $(e, \tau)$ and $(e, \tau')$ are time-like for any $\tau' \geq \tau$; if $e = (P_i, P_j)$ and $e' = (P_j, P_k)$, then $(e, \tau)$ and $(e', \tau + 1)$ are time-like. Finally, we say that $(e, \tau)$ and $(P_i, \tau')$ are time-like if there is a walk which begins with $e$, ends at $P_i$, and has length no more than $\tau' - \tau + 1$. For example, if $e = (P_i, P_j)$, then $(e, \tau)$ and $(P_j, \tau)$ are time-like. A \emph{time-like sequence} is a sequence where each pair of successive elements is a time-like pair. The \emph{time history cone} of $(P_i, \tau)$ is the set of all $(P_j, \tau')$ such that $(P_j, \tau')$ and $(P_i, \tau)$ are time-like, unioned with the set of all $(e, \tau')$ such that $(e, \tau')$ and $(P_i, \tau)$ are time-like.

For a party $P_i$ and a time $\tau$, $RP(P_i, \tau)$ is the number of rounds $t$ such that the first $t$ rounds in all of $P_i$'s parsed outgoing transcripts (at the end of step $\tau$) match what she would send if all the parties followed $\pi$ on a noiseless network. As in \cite{rs94}, our goal is to show that if the number of errors is sufficiently small, then $RP(P_i, \tau)$ will be close to $\tau$ for every $P_i$ and every $\tau$.\footnote{Observe that this will not immediately mean that the simulation is successful, since the criterion for success is that the parties give the correct outputs, which depends on their estimated incoming transcripts, not their outgoing transcripts. We deal with this small technicality in Section~\ref{sec:rs-technicalities}.}

\paragraph{Intuition.} A key part of the intuition in \cite{rs94} is that if two errors have a space-like separation, then they should cause no more delay than if only one of them occurred. The analysis in \cite{rs94} establishes a quantitative version of this idea with regard to tree errors. Specifically, it is shown \cite[Lemma 5.1.1]{rs94} that if $RP(P_i, \tau) = \tau - \ell$, then there is a time-like sequence of at least $\ell/2$ tree errors in the time history cone of $P_i$ at $\tau$. But the intuition still applies if we look at the underlying edge tree errors, instead of just tree errors.

For example, suppose $G$ is a star graph, with all edges but one directed inward; say there are $q$ edges pointing inward. Consider the adversarial strategy of dividing the rounds of the simulation into $q$ equal segments, and spending the $i$th segment zeroing out the bits sent across the $i$th inward-facing edge. As tree errors (rather than edge tree errors), this is a time-like sequence of errors, simply because they all have the same recipient. Thus, the analysis in \cite{rs94} does not show that the simulation will succeed in the face of this adversary. However, if we pay attention to the underlying edge tree errors, we see that these errors do not have a time-like separation; signals sent along one incoming edge can never affect signals sent along another incoming edge. This suggests that for sufficiently large $q$, the simulation will succeed (for all round complexities $T$), since the longest sequence of time-like edge tree errors is only of length about $T/q$.

And indeed, this suggestion is easily seen to be true. After the central party recovers from the edge tree errors on the first couple of incoming edges, she is sufficiently far behind the other incoming parties in the simulation that further edge errors do not affect her; by the time a symbol actually affects the central party's transmissions, it has been ``cleaned up'' by the tree code mechanism, so she makes no further mistakes.

For a party $P_i$ and a time $\tau_0$, we define $Y(P_i, \tau_0)$ to be the maximum length of any time-like sequence of edge character errors in the time history cone of $P_i$ at $\tau_0$. (Compare $Y(P_i, \tau)$ to the quantity $X(P_i, \tau)$ analyzed in \cite{rs94}.) As in \cite{rs94}, we let $B(P_i, \tau)$ denote the number of times that $P_i$ has transmitted $\bkp$ up to step $\tau$, and we let $AT(P_i, \tau) = \tau - 2B(P_i, \tau)$, so that $AT(P_i, \tau)$ is the length of every outgoing transcript of $P_i$'s at time $\tau$. Most of our effort will go toward proving the following proposition, analogous to \cite[Proposition 5.2.1]{rs94}.

\begin{prop} \label{prop:rs-careful-analysis}
For any party $P_i$ and any time $\tau$,
\begin{equation} \label{eqn:rs-careful-analysis}
\tau \leq RP(P_i, \tau) + 24Y(P_i, \tau) + B(P_i, \tau).
\end{equation}
\end{prop}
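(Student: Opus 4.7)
The plan is to prove the inequality by strong induction on $\tau$, mirroring the structure of \cite[Proposition 5.2.1]{rs94} but replacing the tree-error accounting with one based on the underlying edge character errors. The base case $\tau = 0$ is immediate, since all three quantities on the right-hand side are $0$.

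For the inductive step I would split on $P_i$'s action in step $\tau+1$. If she transmits according to $\pi$, then $AT(P_i, \tau+1) = AT(P_i,\tau)+1$ and either $RP$ grows in step (so the inequality only becomes slacker), or the newly appended bit disagrees with what $\pi$ would produce on a noiseless network; in the latter case we trace the discrepancy either to a pre-existing outgoing-transcript error (absorbed by the inductive hypothesis) or to a wrong estimated incoming transcript at $P_i$, which is by definition a fresh tree error at $P_i$ in step $\tau+1$. If instead $P_i$ transmits $\bkp$, then $B$ grows by $1$ and $AT$ shrinks by $1$; this occurs precisely when her parsed transcript is inconsistent, again signalling either a pre-existing mistake or a fresh tree error on some in-edge.

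The key lemma to invoke is the distance-$\frac{1}{2}$ property of the ternary tree code $\mathcal{T}$: if at step $\tau+1$ the estimate $\hat{y}_{ji}$ first diverges from the true transcript $y_{ji}$ at position $k$, then at least $\lceil (\tau+1-k+1)/2 \rceil$ edge character errors have occurred on edge $(P_j, P_i)$ during steps $k, k+1, \dots, \tau+1$. All of these errors sit on the same edge with monotonically increasing timestamps, so they are automatically time-like with one another and time-like with $(P_i, \tau+1)$. This lets us convert every fresh tree error charged by the inductive step into a tail of character errors that we prepend to the time-like witness sequence supplied by the inductive hypothesis at the originating party (an in-neighbor of $P_i$ at some earlier step). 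The constant $24$ absorbs the multiplicative loss from (i) the tree-code distance factor $\tfrac{1}{2}$, (ii) the bit-versus-$\bkp$ bookkeeping implicit in $AT(P_i,\tau) = \tau - 2B(P_i,\tau)$, and (iii) the linear blowup between tree-level events and the character-level events that drive them.

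The hardest part, I expect, will be the bookkeeping during runs of consecutive $\bkp$ transmissions: $B(P_i,\cdot)$ can grow quickly across such a \emph{recovery episode}, but the underlying tree errors driving those $\bkp$s may have been triggered by character errors well in the past, and naively one risks either double-counting character errors or breaking time-likeness by stitching together subsequences from unrelated in-edges. My remedy is to attribute each recovery episode to a single witness tree error on one specific in-edge and to append only the (same-edge, monotonically increasing, hence automatically time-like) character errors guaranteed for that tree error, concatenating with the time-like sequence already maintained inductively at the other endpoint of that edge. This construction preserves the time-like chain by design and ensures that each character error is charged at most once across the whole induction, yielding the bound $\tau+1 - RP(P_i,\tau+1) - B(P_i,\tau+1) \leq 24\,Y(P_i,\tau+1)$.
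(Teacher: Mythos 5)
Your high-level plan — induction on $\tau$ with the base case trivial, a case split on the action taken in step $\tau+1$, and invoking the distance-$\tfrac12$ property of the tree code $\mathcal{T}$ to convert a decoding failure into a burst of edge character errors on a single edge, which you then prepend to a time-like witness obtained from the inductive hypothesis applied at the relevant in-neighbor at an earlier step — is essentially the structure the paper uses. The paper formalizes the case split via the four actions \texttt{progress}, \texttt{justified backup}, \texttt{harmful tree error}, and \texttt{harmful propagated error}, and the quantitative step you describe is Lemma~\ref{lem:tree-error-cases}.

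There is, however, a genuine gap in how you trace the discrepancy. You assert that a wrong estimated incoming parsed transcript ``is by definition a fresh tree error.'' That is not true, and it is precisely the distinction the paper's action taxonomy is built to capture. The estimated transcript $\hat{w}_{ji}$ coming into your party can be inaccurate (in the sense of differing from the noiseless execution) for two very different reasons: (i) the tree decoding of the incoming character stream failed, so $\hat{w}_{ji}$ disagrees with the true outgoing transcript $w_{ji}$ of the in-neighbor — a tree error; or (ii) the decoding succeeded, $\hat{w}_{ji} = w_{ji}$, but the in-neighbor had already transmitted bits that are wrong relative to the noiseless execution — the paper's \texttt{harmful propagated error}. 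In case (ii) there may be \emph{no} fresh edge character errors on that edge in the current step, so the tree-code distance lemma gives you nothing to prepend, and your charging scheme has nothing to charge. The paper closes the induction here (Lemma~\ref{lem:hpe}) by first using Lemma~\ref{lem:rp-k} to deduce $RP(\text{in-neighbor},\tau_0) \leq RP(\text{you},\tau_0+1)-1$, then applying the inductive hypothesis \emph{directly} at the in-neighbor at time $\tau_0$ and noting that $Y$ can only grow when moving to you at $\tau_0+1$ by time-likeness across the connecting edge; the subcase where the in-neighbor has strictly more backups is then split off and routed through Lemma~\ref{lem:tree-error-cases} (an implicit tree error of sufficient magnitude must exist) or Lemma~\ref{lem:backup-implies-behind}. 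Without an analogue of this propagated-error branch, your induction does not close. A second, minor point: the tree-code distance $\tfrac12$ yields a character-error rate of at least $\tfrac14$ (not $\tfrac12$) over the suffix where the decoding diverges, though as you note the constant $24$ has slack to absorb this.
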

Toward proving Proposition~\ref{prop:rs-careful-analysis}, we make the following definitions.
\begin{defn}
For an edge $(P_i, P_j) \in E$ and an alleged (parsed) transcript $z_{ij} \in \{0, 1\}^*$, we define the \emph{accuracy} of $z_{ij}$ to be the length of the longest prefix of $z_{ij}$ which is a prefix of the sequence of bits that $P_i$ would send $P_j$ on a noiseless network.
\end{defn}
\begin{defn}
The \emph{action} of $P_j$ in step $\tau + 1$ is defined as follows.\footnote{This notion is analogous to, but not the same as, the function {\sc Action} in \cite{rs94}.}
\begin{itemize}
\item Suppose $RP(P_j, \tau + 1) > RP(P_j, \tau)$. Then the action of $P_j$ in step $\tau + 1$ is \texttt{progress}.
\item Suppose $RP(P_j, \tau + 1) = RP(P_j, \tau)$, and $P_j$ backs up in step $\tau + 1$. Then the action of $P_j$ in step $\tau + 1$ is \texttt{justified backup}.
\item Suppose $RP(P_j, \tau + 1) < RP(P_j, \tau)$, or $RP(P_j, \tau + 1) = RP(P_j, \tau)$ and $P_j$ transmits data in step $\tau + 1$. Say that an in-neighbor $P_i$ of $P_j$ is \emph{accuracy minimizing} if the accuracy of the estimated, parsed transcript $\hat{w}_{ij}$ is minimized at $P_i$ among in-neighbors of $P_j$.
\begin{itemize}
\item If, for every accuracy minimizing $P_i$, the accuracy of the true transcript $w_{ij}$ is strictly more than the accuracy of the estimated transcript $\hat{w}_{ij}$, we say that the action of $P_j$ in step $\tau + 1$ is \texttt{harmful tree error}.
\item Otherwise, we say that the action of $P_j$ in step $\tau + 1$ is \texttt{harmful propagated error}.
\end{itemize}
\end{itemize}
\end{defn}

Lemmas~\ref{lem:rp-k},~\ref{lem:tree-error-cases}, and~\ref{lem:backup-implies-behind} are fairly technical, and are motivated only by the fact that they will be useful for proving Proposition~\ref{prop:rs-careful-analysis}.
\begin{lem} \label{lem:rp-k}
Suppose that in step $\tau_0 + 1$, the action of $P_j$ is either \emph{\texttt{harmful tree error}} or \emph{\texttt{harmful propagated error}}. Let $P_i$ be accuracy minimizing, and say that the accuracy of $\hat{w}_{ij}$ is $k - 1$. Then if $P_j$ transmitted data in step $\tau_0 + 1$, then $RP(P_j, \tau_0) \geq k$, while if $P_j$ backed up in step $\tau_0 + 1$, then $RP(P_j, \tau_0) \geq k + 1$.
\end{lem}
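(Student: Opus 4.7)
The plan is to treat the two cases (data transmission and backup) separately, using in each case (i) the consistency or inconsistency of $P_j$'s parsed transcript at step $\tau_0+1$ and (ii) the hypothesis that every incoming $\hat{w}_{i'j}$ has accuracy at least $k-1$ (since $P_i$ is the accuracy-minimizer). A preliminary observation that simplifies everything: because $P_j$ broadcasts the same kind of symbol (data or $\bkp$) to all out-neighbors in each step, the parsed outgoing transcripts $w_{jk}$ share a common length $\ell$, and in particular $RP(P_j,\tau_0)\leq \ell$.

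In the transmit-data case, consistency of the parsed transcript implies that for every out-neighbor $P_k$ and every $\tau \leq \ell$, the $\tau$-th bit of $w_{jk}$ equals $\pi$'s output on the length-$(\tau-1)$ estimated incoming prefixes. Those prefixes agree with the noiseless ones as long as $\tau-1 \leq k-1$, so the first $\min(\ell,k)$ bits of each $w_{jk}$ are already correct. The crux is showing $\ell \geq k$: if instead $\ell \leq k-1$, then every bit of every $w_{jk}$ is correct, and the new bit transmitted in step $\tau_0+1$ is $\pi$'s output on length-$\ell$ estimated prefixes, which are also correct; so the new bit is correct and $RP(P_j,\tau_0+1) = \ell+1 > \ell = RP(P_j,\tau_0)$, making the action \texttt{progress} rather than harmful, a contradiction. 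Hence $\ell \geq k$ and therefore $RP(P_j,\tau_0) \geq k$.

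In the backup case, the harmful-action hypothesis combined with $P_j$ transmitting $\bkp$ forces $RP(P_j,\tau_0+1) < RP(P_j,\tau_0)$, since a backup that leaves $RP$ unchanged is classified as \texttt{justified backup}. A short analysis of how $\bkp$ interacts with parsing (it deletes precisely the last symbol of each $w_{jk}$) shows that a strict drop in $RP$ can occur only when $RP(P_j,\tau_0) = \ell$, i.e., every $w_{jk}$ is entirely correct. The backup also implies the parsed transcript is inconsistent, so for some $\tau$ and out-neighbor $P_{k^*}$ the $\tau$-th bit of $w_{jk^*}$ disagrees with $\pi$'s output on the estimated incoming prefixes; but since $w_{jk^*}$ agrees with $\pi$'s output on the noiseless prefixes, the discrepancy must come from an estimated prefix of length $\tau-1$ that differs from noiseless, meaning some $\hat{w}_{i'j}$ has accuracy at most $\tau-2$. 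Applied to the accuracy-minimizing $P_i$, this yields $k-1 \leq \tau-2$, so $\tau \geq k+1$, and since the $\tau$-th bit of $w_{jk^*}$ exists, $\ell \geq \tau \geq k+1$ and $RP(P_j,\tau_0) = \ell \geq k+1$.

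The main obstacle is the bookkeeping around $\bkp$ symbols and parsed versus unparsed transcripts: one has to verify carefully that $\bkp$ shrinks each parsed length by exactly one, that the common length $\ell$ really is the same across all out-neighbors, and that a strict drop in $RP$ under a backup forces the fully-correct regime $RP=\ell$. Once that is settled, both cases reduce to a direct comparison between $\pi$'s output on estimated versus noiseless incoming prefixes, controlled immediately by the accuracy hypothesis.
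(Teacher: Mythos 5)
Your proof is correct and follows essentially the same two-case argument as the paper: in the data-transmission case, consistency of $P_j$'s parsed transcript together with the accuracy of all incoming estimates implies the first $k$ outgoing bits are correct; in the backup case, the strict drop in $RP$ forces $RP=AT$, and then the "regretted" correct bit at position $\tau$ forces $\tau-1 > k-1$. The one place you go beyond the paper's wording is the explicit contradiction argument ruling out $\ell \le k-1$ in the transmit-data case (showing this would imply the action \texttt{progress}), a subcase the paper's proof passes over tacitly; this is a welcome bit of rigor but not a different route.
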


\begin{proof}
First, suppose $P_j$ transmitted data in step $\tau_0 + 1$. Then $P_j$ did not ``regret'' any of her transmissions, i.e. her parsed transcript was consistent. Thus, the length-$k$ prefixes of her outgoing parsed transcripts must be correct, since they match the accurate length-$(k - 1)$ prefixes of her estimated incoming parsed transcripts. Next, suppose $P_j$ backed up in step $\tau_0 + 1$. From the action, we must have $RP(P_j, \tau_0) = AT(P_j, \tau_0)$. Furthermore, since $P_j$ ``regrets'' a transmission which is, in fact, correct, we must have $k \leq AT(P_j, \tau_0) - 1$. Therefore, $RP(P_j, \tau_0) \geq k + 1$ as claimed.
\end{proof}

Recall that the \emph{magnitude} of an edge tree error on $(P_i, P_j)$ is the length of the suffix of the affected estimated unparsed transcript $\hat{y}_{ij}$ which begins with the first symbol which differs from the corresponding symbol of the true unparsed transcript $y_{ij}$.

\begin{lem} \label{lem:tree-error-cases}
Suppose Equation~$\ref{eqn:rs-careful-analysis}$ holds for every party $P_i$ and every $\tau \leq \tau_0$. Suppose that in step $\tau_0 + 1$, there is an edge tree error of magnitude $M > 0$ on $(P_i, P_j)$. Suppose that $RP(P_i, \tau_0) \leq RP(P_j, \tau_0) + 2M$, and $B(P_i, \tau_0) \leq B(P_j, \tau_0) + M$. Then Equation~$\ref{eqn:rs-careful-analysis}$ holds for $P_j$ and $\tau = \tau_0 + 1$.
\end{lem}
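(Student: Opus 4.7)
The plan is to apply the inductive hypothesis at $P_i$ at the earlier time $\tau^* := \tau_0 - M$ and concatenate the time-like sequence it yields with the new edge character errors on $(P_i, P_j)$ that must have produced the magnitude-$M$ tree error. Since the magnitude of a tree error cannot exceed the length of the current unparsed transcript, $\tau^* \geq 0$, so the IH is available at $(P_i, \tau^*)$.

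First I would extract the new character errors. A tree error of magnitude $M$ on $(P_i, P_j)$ at step $\tau_0+1$ means that $\hat{y}_{ij}$ and $y_{ij}$ disagree in their last $M$ positions, corresponding to an $M$-step window of characters transmitted on $(P_i, P_j)$ that ends no later than step $\tau_0+1$. Because $\mathcal{T}$ has distance parameter $\frac{1}{2}$, the triangle inequality applied to the transmitted codeword $y_{ij}$, the received word, and the decoded codeword $\hat{y}_{ij}$ (all restricted to those $M$ positions) forces at least $M/4$ edge character errors on $(P_i, P_j)$ in that window, at steps $\tau_1' < \tau_2' < \cdots$ with $\tau_\ell' \geq \tau^* + 1$. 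They share an edge and are time-ordered, so they form a time-like sequence $\sigma_{\mathrm{new}}$ of length $\geq M/4$ lying in the time history cone of $(P_j, \tau_0+1)$ via the length-$1$ walk $(P_i, P_j)$.

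Next I would glue $\sigma_{\mathrm{new}}$ onto the end of a time-like sequence $\sigma_{\mathrm{old}}$ of length $Y(P_i, \tau^*)$ produced by the IH at $(P_i, \tau^*)$. Writing $(e_k, \tau_k)$ for the last element of $\sigma_{\mathrm{old}}$, the walk from $e_k$ to $P_i$ of length at most $\tau^* - \tau_k + 1$ extends by the edge $(P_i, P_j)$ to a walk ending with $(P_i, P_j)$ of length at most $\tau^* - \tau_k + 2 \leq \tau_1' - \tau_k + 1$; together with the (easy) transitivity of ``time-like'', this shows $\sigma_{\mathrm{old}} \cdot \sigma_{\mathrm{new}}$ is a time-like sequence inside $P_j$'s cone at $\tau_0+1$. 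Consequently $Y(P_j, \tau_0+1) \geq Y(P_i, \tau^*) + M/4$.

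Finally I would close the algebra. Since $RP(P_k, \cdot) + B(P_k, \cdot)$ is non-decreasing in time for every party $P_k$ (a data-bit step leaves $B$ unchanged and can only raise $RP$; a $\bkp$ step raises $B$ by $1$ while lowering $RP$ by at most $1$), chaining the IH at $(P_i, \tau^*)$, the monotonicity $RP(P_i, \tau^*)+B(P_i, \tau^*)\leq RP(P_i, \tau_0)+B(P_i, \tau_0)$, the two hypotheses of the lemma, the $Y$-bound, and the monotonicity of $RP(P_j,\cdot)+B(P_j,\cdot)$ yields $\tau^* \leq RP(P_j, \tau_0+1) + B(P_j, \tau_0+1) + 24\, Y(P_j, \tau_0+1) - 3M$; substituting $\tau^* = \tau_0 - M$ and using $M \geq 1$ gives the desired inequality with slack $2M - 1$. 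The main obstacle is the tree-code decoding bound itself: one must confirm that distance $\frac{1}{2}$ really does force at least $M/4$ character errors (and not some smaller fraction of $M$), so that the coefficient $24$ comfortably absorbs the $3M$ of slack furnished by the two stated hypotheses; the time-like gluing and the monotonicity of $RP+B$ are routine bookkeeping from the definitions.
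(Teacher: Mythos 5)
Your proposal is correct and follows essentially the same route as the paper: apply the inductive hypothesis at $P_i$ at an earlier time (you use $\tau_0 - M$, the paper uses $\tau_0 + 1 - M$), extract $M/4$ new edge character errors on $(P_i,P_j)$ from the tree-code distance property, glue them onto the time-like sequence furnished by the IH to bound $Y(P_j,\tau_0+1)$ from below, and close with bookkeeping via the lemma's two hypotheses. The only differences are cosmetic: you invoke monotonicity of $RP+B$ where the paper invokes "each of $RP,B$ changes by at most one per step," and you spell out the triangle-inequality justification for the $M/4$ count and the time-like concatenation in more detail than the paper (which cites these as known from \cite{rs94}); the resulting slack is slightly different but both suffice.
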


\begin{proof}
Since $RP$ and $B$ change at most one each round, we have
\begin{align}
RP(P_i, \tau_0 + 1 - M) &\leq RP(P_i, \tau_0) + M - 1 \\
&\leq RP(P_j, \tau_0) + 3M - 1 \\
&\leq RP(P_j, \tau_0 + 1) + 3M, \label{eqn:tree-error-rp}
\end{align}
and similarly
\begin{equation}\label{eqn:tree-error-b}
B(P_i, \tau_0 + 1 - M) \leq B(P_j, \tau_0 + 1) + 2M.
\end{equation}
From the tree code condition, we can be sure that in the $M$ steps preceding and including step $\tau_0 + 1$, the character error rate on $(P_i, P_j)$ was at least $\frac{1}{4}$. Therefore, we have
\begin{equation}\label{eqn:tree-error-y}
Y(P_i, \tau_0 + 1 - M) \leq Y(P_j, \tau_0 + 1) - \frac{1}{4} M.
\end{equation}
Applying Equation~\ref{eqn:rs-careful-analysis} to $P_i$ at $\tau = \tau_0 + 1 - M$ and using Equations~\ref{eqn:tree-error-rp}, \ref{eqn:tree-error-b}, and \ref{eqn:tree-error-y}, we have
\begin{align}
\tau_0 + 1 - M &\leq RP(P_i, \tau_0 + 1 - M) + 24Y(P_i, \tau_0 + 1 - M) + B(P_i, \tau_0 + 1 - M) \\
&\leq RP(P_j, \tau_0 + 1) + 3M + 24Y(P_j, \tau_0 + 1) - 6M + B(P_j, \tau_0 + 1) + 2M \\
&= RP(P_j, \tau_0 + 1) + 24Y(P_j, \tau_0 + 1) + B(P_j, \tau_0 + 1) - M,
\end{align}
which completes the proof.
\end{proof}

The following lemma is proven in exactly the same way as an analogous statement in \cite{rs94}.
\begin{lem} \label{lem:backup-implies-behind}
Suppose Equation~$\ref{eqn:rs-careful-analysis}$ holds for all $P_i$ and all $\tau \leq \tau_0$. Suppose that $RP(P_j, \tau_0) = AT(P_j, \tau_0)$, and for some in-neighbor $P_i$, $B(P_i, \tau_0) > B(P_j, \tau_0)$. Then Equation~$\ref{eqn:rs-careful-analysis}$ holds for $P_j$ and $\tau = \tau_0 + 1$.
\end{lem}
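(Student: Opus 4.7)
My plan is to verify Equation~\ref{eqn:rs-careful-analysis} at $(P_j, \tau_0 + 1)$ by invoking the inductive hypothesis on the in-neighbor $P_i$ at a carefully chosen earlier step, transporting the resulting lower bound on $Y$ through the time history cone of $P_j$, and then finishing by case analysis on what $P_j$ does in step $\tau_0 + 1$.

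Since $B(P_i, \tau_0) > B(P_j, \tau_0)$, I will let $\tau^* \leq \tau_0$ be the first step at which $B(P_i, \tau^*) = B(P_j, \tau_0) + 1$. Applying Equation~\ref{eqn:rs-careful-analysis} to $P_i$ at time $\tau^*$ (available inductively since $\tau^* \leq \tau_0$), together with the trivial bound $RP(P_i, \tau^*) \leq AT(P_i, \tau^*) = \tau^* - 2B(P_i, \tau^*)$, yields
\begin{equation*}
\tau^* \leq \tau^* - 2B(P_i, \tau^*) + 24Y(P_i, \tau^*) + B(P_i, \tau^*),
\end{equation*}
which simplifies to $B(P_i, \tau^*) \leq 24Y(P_i, \tau^*)$, i.e., $B(P_j, \tau_0) + 1 \leq 24Y(P_i, \tau^*)$.

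Next, since $P_i$ is an in-neighbor of $P_j$ and $\tau^* \leq \tau_0$, the length-one path $(P_i, P_j)$ witnesses that $(P_i, \tau^*)$ and $(P_j, \tau_0 + 1)$ are time-like, and so by transitivity the time history cone of $(P_i, \tau^*)$ is contained in that of $(P_j, \tau_0 + 1)$; hence $Y(P_j, \tau_0 + 1) \geq Y(P_i, \tau^*)$, and consequently $24Y(P_j, \tau_0 + 1) \geq B(P_j, \tau_0) + 1$.

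To finish, I will split on $P_j$'s behavior in step $\tau_0 + 1$. If $P_j$ backs up, then $B(P_j, \tau_0 + 1) = B(P_j, \tau_0) + 1$, and since $RP(P_j, \tau_0) = AT(P_j, \tau_0)$ the deleted bit is correct, so $RP(P_j, \tau_0 + 1) = \tau_0 - 2B(P_j, \tau_0) - 1$; adding up gives $RP + 24Y + B \geq (\tau_0 - 2B(P_j, \tau_0) - 1) + (B(P_j, \tau_0) + 1) + (B(P_j, \tau_0) + 1) = \tau_0 + 1$. If instead $P_j$ transmits data, then $B(P_j, \tau_0 + 1) = B(P_j, \tau_0)$ and $RP(P_j, \tau_0 + 1) \geq RP(P_j, \tau_0) = \tau_0 - 2B(P_j, \tau_0)$, so again $RP + 24Y + B \geq \tau_0 + 1$. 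The only real subtlety, such as it is, is confirming that under the hypothesis $RP(P_j, \tau_0) = AT(P_j, \tau_0)$ a single backup drops $RP$ by exactly one and not more (so no additional $Y$ budget is consumed), which is immediate from the definitions of $RP$ and $AT$.
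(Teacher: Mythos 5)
Your proof is correct, and it rests on the same underlying idea as the paper's: apply the inductive hypothesis to the in-neighbor $P_i$, exploit $RP(P_i,\cdot) \leq AT(P_i,\cdot)$, and transport the resulting $Y$ lower bound into the time history cone of $(P_j, \tau_0 + 1)$ via the edge $(P_i,P_j)$. The two differences from the paper's proof are organizational rather than substantive. First, the introduction of $\tau^*$ as ``the first step at which $B(P_i,\tau^*) = B(P_j,\tau_0)+1$'' is unnecessary: taking $\tau^* = \tau_0$ already yields $B(P_j,\tau_0)+1 \leq B(P_i,\tau_0) \leq 24Y(P_i,\tau_0) \leq 24Y(P_j,\tau_0+1)$ by exactly the same algebra, and nothing in your argument actually uses the minimality of $\tau^*$. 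Second, you finish with a two-case analysis on whether $P_j$ backs up or transmits, whereas the paper replaces both cases with the single observation that $RP(P_j,\cdot) + B(P_j,\cdot)$ is non-decreasing over a step: if $P_j$ backs up, $B$ increases by one while $RP$ drops by at most one, and if $P_j$ transmits, $B$ is unchanged and $RP$ does not drop. That monotonicity fact collapses your case split into one line and makes the proof a bit shorter, but your explicit case check is sound and reaches the same conclusion.
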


\begin{proof}
By hypothesis, $RP(P_j, \tau_0) = \tau_0 - 2B(P_j, \tau_0)$. We also have $RP(P_i, \tau_0) \leq \tau_0 - 2B(P_i, \tau_0)$. Subtracting gives $RP(P_j, \tau_0) + 2B(P_j, \tau_0) \geq RP(P_i, \tau_0) + 2B(P_i, \tau_0)$. Using our assumption $B(P_i, \tau_0) \geq B(P_j, \tau_0) + 1$, this implies
\begin{equation}
RP(P_j, \tau_0) + B(P_j, \tau_0) \geq RP(P_i, \tau_0) + B(P_i, \tau_0) + 1.
\end{equation}
Now, observe that $RP(P_j, \tau_0 + 1) + B(P_j, \tau_0 + 1) \geq RP(P_j, \tau_0) + B(P_j, \tau_0)$, because the $B$ term can only increase in step $\tau_0 + 1$, while the $RP$ term can only decrease by at most $1$ and in that case the $B$ term increased. Therefore, we have
\begin{equation}
RP(P_i, \tau_0) + B(P_i, \tau_0) < RP(P_j, \tau_0 + 1) + B(P_j, \tau_0 + 1).
\end{equation}
Of course, $Y(P_i, \tau_0) \leq Y(P_j, \tau_0 + 1)$, so an application of Equation~\ref{eqn:rs-careful-analysis} at $P_i$ at $\tau = \tau_0$ completes the proof.
\end{proof}

\begin{proof}[of Proposition~$\ref{prop:rs-careful-analysis}$]
We proceed by induction on $\tau$. At $\tau = 0$, all terms of Equation~\ref{eqn:rs-careful-analysis} are zero. For the inductive step, assume that Equation~\ref{eqn:rs-careful-analysis} holds for all $\tau < \tau_0$; we will prove that it holds for $\tau = \tau_0 + 1$. Fix some party $P_i$. If the action of $P_i$ in step $\tau + 1$ is \texttt{progress}, then $RP(P_i, \tau_0 + 1) = RP(P_i, \tau_0) + 1$, and the $B$ and $Y$ terms do not decrease from $\tau_0$ to $\tau_0 + 1$, so we are done. Similarly, if the action of $P_i$ is \texttt{justified backup}, then $B(P_i, \tau_0 + 1) = B(P_i, \tau_0) + 1$, and the $RP$ and $Y$ terms do not decrease. The final two cases, \texttt{harmful tree error} and \texttt{harmful propagated error}, are treated in Lemmas~\ref{lem:hte} and~\ref{lem:hpe} below.

\begin{lem} \label{lem:hte}
Suppose Equation~$\ref{eqn:rs-careful-analysis}$ holds for all $P_i$ and all $\tau \leq \tau_0$. Suppose that in step $\tau_0 + 1$, the action of $P_j$ is \emph{\texttt{harmful tree error}}. Then Equation~$\ref{eqn:rs-careful-analysis}$ holds for $P_j$ and $\tau = \tau_0 + 1$.
\end{lem}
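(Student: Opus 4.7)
The plan is to identify a particular edge tree error on some $(P_i, P_j)$ using the \texttt{harmful tree error} hypothesis, and then apply Lemma~\ref{lem:tree-error-cases} to it. I would fix an accuracy-minimizing in-neighbor $P_i$ of $P_j$, with accuracy of $\hat{w}_{ij}$ equal to some value $k - 1$. By definition of \texttt{harmful tree error}, $w_{ij}$ has accuracy at least $k$, so $\hat{y}_{ij} \neq y_{ij}$, giving an edge tree error on $(P_i, P_j)$ at step $\tau_0 + 1$ of some positive magnitude $M$. Simultaneously, Lemma~\ref{lem:rp-k} supplies $RP(P_j, \tau_0) \geq k$ (with an extra $+1$ when $P_j$ backs up).

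The main intermediate step is a lower bound on $M$. I expect to show $M \geq AT(P_i, \tau_0) - k + 1$ by the following reasoning: once $AT(P_i, \cdot)$ first reaches $k$, the first $k$ bits of $w_{ij}$ are ``locked in'' (the parsing never subsequently drops the length of $w_{ij}$ below $k$, so the $k$-th bit of $w_{ij}$ is fixed to its correct value, witnessed by the accuracy of $w_{ij}$ being at least $k$); since $\hat{w}_{ij}$ disagrees with $w_{ij}$ by position $k$ at the latest, tracing the discrepancy back through the parse forces the unparsed transcripts to diverge on a suffix of length at least $AT(P_i, \tau_0) - k + 1$. This bound makes the first hypothesis of Lemma~\ref{lem:tree-error-cases} immediate, via $RP(P_i, \tau_0) \leq AT(P_i, \tau_0) \leq M + k - 1 \leq RP(P_j, \tau_0) + 2M$.

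The main obstacle will be the second hypothesis, $B(P_i, \tau_0) \leq B(P_j, \tau_0) + M$, which is equivalent to $AT(P_j, \tau_0) \leq AT(P_i, \tau_0) + 2M$. My plan is to combine two observations: (a) since $\hat{y}_{ij}$ and $y_{ij}$ agree on all but the last $M$ characters, the number of \texttt{bkp}s in them can differ by at most $M$, so $|\hat{w}_{ij}| \leq AT(P_i, \tau_0) + 2M$; and (b) when $P_j$ transmits data in step $\tau_0 + 1$, consistency of her parsed transcript forces $|\hat{w}_{ij}| \geq AT(P_j, \tau_0) - 1$. Combining (a) and (b) gives the required inequality after a small parity cleanup (since $AT(P_j,\tau_0)-AT(P_i,\tau_0)$ is even). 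For the sub-case where $P_j$ backs up and (b) does not apply directly, I would fall back on Lemma~\ref{lem:backup-implies-behind}, which handles precisely the scenario $B(P_i, \tau_0) > B(P_j, \tau_0)$ when $P_j$'s outgoing transcript remains fully correct. Once both hypotheses are verified, Lemma~\ref{lem:tree-error-cases} applied to the edge tree error of magnitude $M$ immediately yields Equation~\ref{eqn:rs-careful-analysis} for $(P_j, \tau_0 + 1)$.
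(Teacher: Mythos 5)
Your plan follows the same route as the paper's proof (identify the edge tree error via the \texttt{harmful tree error} hypothesis, invoke Lemma~\ref{lem:rp-k}, verify the two hypotheses of Lemma~\ref{lem:tree-error-cases}, and fall back on Lemma~\ref{lem:backup-implies-behind} in the backup subcase), and it ultimately reaches the conclusion, but one intermediate claim is wrong by a factor of $2$. You assert $M \geq AT(P_i,\tau_0) - k + 1$ on the grounds that ``once $AT(P_i,\cdot)$ first reaches $k$, the first $k$ bits of $w_{ij}$ are locked in.'' Nothing is locked in: $P_i$ can transmit $\bkp$, which shrinks $w_{ij}$. The correct statement is that $w_{ij}$ and $\hat{w}_{ij}$ agree on their first $|w_{ij}| - 2M$ positions (a magnitude-$M$ tree error affects the last $M$ unparsed characters, and parsing the differing suffix can both fail to append up to $M$ symbols and delete up to $M$ additional symbols from the agreed-upon prefix); hence $|w_{ij}| \leq k + 2M - 1$, i.e., $M \geq (AT(P_i,\tau_0) - k + 1)/2$. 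Concretely, with $y_{ij} = 0^{10}$ and $\hat{y}_{ij} = 0^{8}\bkp\bkp$ you get $M = 2$, $|w_{ij}| = 10$, and first disagreement at position $7$, so $AT(P_i,\tau_0) - k + 1 = 4 > M$. Your chain of inequalities $RP(P_i,\tau_0) \leq AT(P_i,\tau_0) \leq M + k - 1 \leq RP(P_j,\tau_0) + 2M$ happens to survive because the last step is slack by a factor of $2$ exactly where the bound is weak: replacing $M + k - 1$ by $2M + k - 1$ still gives $\leq 2M + RP(P_j,\tau_0) - 1 < RP(P_j,\tau_0) + 2M$. So the proof stands, but the heuristic you gave for the bound is incorrect.

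A second, smaller gap: in the backup subcase you jump directly to Lemma~\ref{lem:backup-implies-behind}, but that lemma needs $B(P_i,\tau_0) > B(P_j,\tau_0)$. You should note explicitly that when $B(P_i,\tau_0) \leq B(P_j,\tau_0)$ the second hypothesis of Lemma~\ref{lem:tree-error-cases} holds trivially (indeed $B(P_i,\tau_0) \leq B(P_j,\tau_0) \leq B(P_j,\tau_0) + M$), and Lemma~\ref{lem:tree-error-cases} applies; only when $B(P_i,\tau_0) > B(P_j,\tau_0)$ do you need Lemma~\ref{lem:backup-implies-behind}, whose hypothesis $RP(P_j,\tau_0) = AT(P_j,\tau_0)$ is forced because a harmful action coinciding with a backup can only occur when $RP$ actually drops.
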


\begin{proof}
Let $P_i$ be an accuracy minimizing in-neighbor of $P_j$, and say that the accuracy of $\hat{w}_{ij}$ is $k - 1$. As the name of the action indicates, because the $k$th symbol of the true transcript $w_{ij}$ is correct and present while the $k$th symbol of the estimated transcript $\hat{w}_{ij}$ is not, there must have been an edge tree error on $(P_i, P_j)$. Say that the tree error was of magnitude $M$. Then $w_{ij}$ and $\hat{w}_{ij}$ agree in their first $(|w_{ij}| - 2M)$ positions. In particular, $|w_{ij}| \leq k + 2M - 1 < k + 2M$.

By Lemma~\ref{lem:rp-k}, $RP(P_j, \tau_0) \geq k$. Therefore, $|w_{ij}| \leq RP(P_j, \tau_0) + 2M$. Of course, $|w_{ij}| = AT(P_i, \tau_0) \geq RP(P_i, \tau_0)$, so
\begin{equation}
RP(P_i, \tau_0) \leq RP(P_j, \tau_0) + 2M.
\end{equation}
Again because $w_{ij}$ and $\hat{w}_{ij}$ agree in their first $(|w_{ij}| - 2M)$ positions, $|w_{ij}| \geq |\hat{w}_{ij}| - 2M$.

First, suppose that $P_j$ transmitted data in step $\tau_0 + 1$. Then $|\hat{w}_{ij}| \geq AT(P_j, \tau_0)$, so $AT(P_i, \tau_0) \geq AT(P_j, \tau_0) - 2M$, which implies that
\begin{equation}
B(P_i, \tau_0) \leq B(P_j, \tau_0) + M.
\end{equation}
An application of Lemma~\ref{lem:tree-error-cases} completes the proof in this case. Next, suppose that $P_j$ backed up in step $\tau_0 + 1$. If $B(P_i, \tau_0) \leq B(P_j, \tau_0)$, then once again, Lemma~\ref{lem:tree-error-cases} completes the proof. But if $B(P_i, \tau_0) > B(P_j, \tau_0)$, then Lemma~\ref{lem:backup-implies-behind} completes the proof, since the action implies that $AT(P_j, \tau_0) = RP(P_j, \tau_0)$.
\end{proof}

\begin{lem} \label{lem:hpe}
Suppose Equation~$\ref{eqn:rs-careful-analysis}$ holds for every party $P_i$ and every $\tau \leq \tau_0$. Suppose that in step $\tau_0 + 1$, the action of $P_j$ is \emph{\texttt{harmful propagated error}}. Then Equation~$\ref{eqn:rs-careful-analysis}$ holds for $P_j$ and $\tau = \tau_0 + 1$.
\end{lem}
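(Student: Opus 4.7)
The strategy will parallel Lemma~\ref{lem:hte}: fix an accuracy-minimizing in-neighbor $P_i$ whose estimated transcript $\hat{w}_{ij}$ has accuracy $k - 1$, and then reduce to Lemma~\ref{lem:tree-error-cases} or Lemma~\ref{lem:backup-implies-behind}, with a direct argument handling the sub-case of no tree error. By the defining property of the harmful propagated error action, the accuracy of the true transcript $w_{ij}$ is at most $k - 1$, so $RP(P_i, \tau_0) \leq k - 1$; meanwhile Lemma~\ref{lem:rp-k} gives $RP(P_j, \tau_0) \geq k$ if $P_j$ transmits data and $RP(P_j, \tau_0) \geq k + 1$ if $P_j$ backs up. In particular, $RP(P_i, \tau_0) \leq RP(P_j, \tau_0) + 2M$ for every $M \geq 0$, so the $RP$ hypothesis of Lemma~\ref{lem:tree-error-cases} is automatic, and the remaining work is to compare $B(P_i, \tau_0)$ with $B(P_j, \tau_0) + M$.

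The key new ingredient will be a length inequality $|\hat{w}_{ij}| - |w_{ij}| \leq 2M$, where $M$ is the magnitude of any edge tree error on $(P_i, P_j)$ entering step $\tau_0 + 1$ (using $M = 0$ if there is none). This should hold because each step appends exactly one tree-code symbol to both $y_{ij}$ and $\hat{y}_{ij}$, so $|y_{ij}| = |\hat{y}_{ij}| = \tau_0$; the two unparsed transcripts therefore share a common prefix of length $\tau_0 - M$ and differ on suffixes of length exactly $M$, and parsing each differing suffix against the shared parsed prefix can change the parsed length by at most $M$.

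Next I split into cases. If $P_j$ transmits data in step $\tau_0 + 1$, her parsed transcript was consistent, which forces $|\hat{w}_{ij}| \geq AT(P_j, \tau_0)$. Combining with the length inequality gives $AT(P_i, \tau_0) = |w_{ij}| \geq AT(P_j, \tau_0) - 2M$, i.e.\ $B(P_i, \tau_0) \leq B(P_j, \tau_0) + M$. For $M \geq 1$, Lemma~\ref{lem:tree-error-cases} finishes. For $M = 0$, we get $B(P_i, \tau_0) \leq B(P_j, \tau_0) = B(P_j, \tau_0 + 1)$ and $RP(P_j, \tau_0 + 1) = RP(P_j, \tau_0) \geq k \geq RP(P_i, \tau_0) + 1$, while $Y(P_i, \tau_0) \leq Y(P_j, \tau_0 + 1)$ since $(P_i, \tau_0)$ lies in the time history cone of $(P_j, \tau_0 + 1)$; applying the inductive hypothesis at $(P_i, \tau_0)$ and adding $1$ yields Equation~\ref{eqn:rs-careful-analysis} at $(P_j, \tau_0 + 1)$.

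If instead $P_j$ backs up in step $\tau_0 + 1$, then because an outgoing-transcript accuracy can strictly drop under a backup only when that transcript was fully accurate, the action-induced condition $RP(P_j, \tau_0 + 1) < RP(P_j, \tau_0)$ forces $RP(P_j, \tau_0) = AT(P_j, \tau_0)$. If $B(P_i, \tau_0) > B(P_j, \tau_0)$, Lemma~\ref{lem:backup-implies-behind} applies directly. Otherwise $B(P_i, \tau_0) \leq B(P_j, \tau_0) \leq B(P_j, \tau_0) + M$, and I either invoke Lemma~\ref{lem:tree-error-cases} (for $M \geq 1$) or observe (for $M = 0$) that $B(P_j, \tau_0 + 1) = B(P_j, \tau_0) + 1$ together with $RP(P_j, \tau_0 + 1) \geq RP(P_j, \tau_0) - 1 \geq k \geq RP(P_i, \tau_0) + 1$ leaves two units of slack relative to the inductive hypothesis at $(P_i, \tau_0)$. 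The hardest part will be verifying the length inequality $|\hat{w}_{ij}| - |w_{ij}| \leq 2M$ and the claim that a strict drop in $RP$ during a backup forces $RP(P_j, \tau_0) = AT(P_j, \tau_0)$; the rest is bookkeeping around the preceding lemmas.
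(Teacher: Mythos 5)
Your proof is correct and follows essentially the same route as the paper's: fix an accuracy-minimizing in-neighbor $P_i$ with accuracy$(w_{ij}) \leq$ accuracy$(\hat{w}_{ij}) = k-1$, use Lemma~\ref{lem:rp-k} to get $RP(P_j, \tau_0+1) \geq RP(P_i, \tau_0) + 1$, derive the comparison $B(P_i, \tau_0) \leq B(P_j, \tau_0) + M$ via the parsed-length inequality $|\hat{w}_{ij}| - |w_{ij}| \leq 2M$ (which the paper establishes inside Lemma~\ref{lem:hte} and reuses here), and reduce to Lemma~\ref{lem:tree-error-cases} or Lemma~\ref{lem:backup-implies-behind}, with a one-step direct application of the inductive hypothesis when $M=0$. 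The only difference is cosmetic: you branch first on transmit-vs-backup and then on $M$, whereas the paper branches first on whether $B(P_i,\tau_0) \leq B(P_j,\tau_0+1)$ (dispatching that case in one line, without needing to distinguish $M$) and only then on transmit-vs-backup; the lemmas invoked and the underlying estimates are identical.
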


\begin{proof}
Let $P_i$ be an accuracy minimizing in-neighbor such that the accuracy of $w_{ij}$ is no more than the accuracy of $\hat{w}_{ij}$, and say that the accuracy of $\hat{w}_{ij}$ is $k - 1$. Then $RP(P_i, \tau_0) < k$. We claim that
\begin{equation} \label{eqn:rp-behind}
RP(P_i, \tau_0) \leq RP(P_j, \tau_0 + 1) - 1.
\end{equation}
To see why, first suppose $P_j$ transmitted data in step $\tau_0 + 1$; then $RP(P_j, \tau_0 + 1) = RP(P_j, \tau_0)$, and by Lemma~\ref{lem:rp-k}, $RP(P_j, \tau_0) \geq k$, which completes the proof of Equation~\ref{eqn:rp-behind}. Next, suppose $P_j$ backed up in step $\tau_0 + 1$; then $RP(P_j, \tau_0 + 1) = RP(P_j, \tau_0) - 1$, and by Lemma~\ref{lem:rp-k}, $RP(P_j, \tau_0) \geq k + 1$, so $RP(P_j, \tau_0 + 1) \geq k$, again completing the proof of Equation~\ref{eqn:rp-behind}.

Now, for a first case, suppose $B(P_i, \tau_0) \leq B(P_j, \tau_0 + 1)$. Then applying Equation~\ref{eqn:rs-careful-analysis} to $P_i$ at $\tau = \tau_0$ gives
\begin{align}
\tau_0 &\leq RP(P_i, \tau_0) + 24Y(P_i, \tau_0) + B(P_i, \tau_0) \\
&\leq RP(P_j, \tau_0 + 1) - 1 + 24Y(P_j, \tau_0 + 1) + B(P_j, \tau_0 + 1),
\end{align}
completing the proof. For the second case, suppose $B(P_i, \tau_0) = B(P_j, \tau_0 + 1) + \ell$, with $\ell > 0$.
\begin{itemize}
\item For the first subcase, suppose that $P_j$ transmitted data in step $\tau_0 + 1$; then $B(P_i, \tau_0) = B(P_j, \tau_0) + \ell$, so there was an edge tree error of magnitude $M \geq \ell$ on edge $(P_i, P_j)$ in step $\tau_0 + 1$. Therefore, we can apply Lemma~\ref{lem:tree-error-cases} to complete the proof, since $RP(P_i, \tau_0) \leq RP(P_j, \tau_0 + 1) - 1$.
\item Finally, for the second subcase, suppose that $P_j$ backed up in step $\tau_0 + 1$. Then $RP(P_j, \tau_0) = AT(P_j, \tau_0)$ and $B(P_i, \tau_0) > B(P_j, \tau_0 + 1) >  B(P_j, \tau_0)$, so we can apply Lemma~\ref{lem:backup-implies-behind}.
\end{itemize}
This completes the proofs of Lemma~\ref{lem:hpe} and Proposition~\ref{prop:rs-careful-analysis}.
\end{proof}
\end{proof}

The following lemma follows easily from Proposition~\ref{prop:rs-careful-analysis}; it is analogous to \cite[Lemma 5.1.1]{rs94}.
\begin{lem} \label{lem:delay-walk}
Suppose $RP(P_i, \tau) = \tau - \ell$. Then there is some time-like sequence of $\frac{\ell}{48}$ edge character errors in the time history cone of $P_i$ at $\tau$.
\end{lem}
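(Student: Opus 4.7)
The plan is to deduce this almost immediately from Proposition~\ref{prop:rs-careful-analysis}, after observing a trivial bound relating $B(P_i, \tau)$ to $\ell$. Setting $\ell = \tau - RP(P_i, \tau)$, the proposition gives
\begin{equation}
\ell \leq 24\, Y(P_i, \tau) + B(P_i, \tau).
\end{equation}
To turn this into a lower bound on $Y(P_i, \tau)$ alone, I would absorb the $B$ term by using the fact that every backup consumes two rounds of the outgoing transcript: since $AT(P_i, \tau) = \tau - 2B(P_i, \tau)$ and $RP(P_i, \tau) \leq AT(P_i, \tau)$, we get $B(P_i, \tau) \leq (\tau - RP(P_i,\tau))/2 = \ell/2$.

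Plugging this back in yields $\ell/2 \leq 24\, Y(P_i, \tau)$, i.e., $Y(P_i, \tau) \geq \ell/48$. By the very definition of $Y$, this says there is a time-like sequence of at least $\ell/48$ edge character errors lying in the time history cone of $(P_i, \tau)$, which is the desired conclusion.

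Since Proposition~\ref{prop:rs-careful-analysis} has already been established, there is no real obstacle here; the main thing to be careful about is just pairing the $RP \leq AT$ inequality with the definition $AT = \tau - 2B$ to eliminate the $B$ term cleanly. All of the hard work --- tracking how edge character errors, through the tree code, propagate into tree errors and then into lost progress or backups --- was done in the induction of Proposition~\ref{prop:rs-careful-analysis}; this lemma is essentially its user-facing corollary.
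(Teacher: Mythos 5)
Your proof is correct and is essentially the same argument as the paper's: the paper doubles Equation~\ref{eqn:rs-careful-analysis} and rearranges to get $48Y(P_i,\tau) \geq \ell + (AT(P_i,\tau) - RP(P_i,\tau)) \geq \ell$, while you isolate $B(P_i,\tau) \leq \ell/2$ from $RP \leq AT = \tau - 2B$ and substitute; both hinge on the same inequality $RP(P_i,\tau) \leq AT(P_i,\tau)$.
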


\begin{proof}
Doubling Equation~\ref{eqn:rs-careful-analysis} and rearranging gives
\[
RP(P_i, \tau) - \tau + 48Y(P_i, \tau) \geq \tau - RP(P_i, \tau) - 2B(P_i, \tau) = AT(P_i, \tau) - RP(P_i, \tau).
\]
The right-hand side is nonnegative, so $Y(P_i, \tau) \geq \frac{\ell}{48}$.
\end{proof}

\subsection{The final RS compiler} \label{sec:rs-technicalities}
All the work we have done to analyze $\rs_0$ has focused on $RP$ as a measure of progress, but $RP$ is not directly related to our success criterion, which is that the parties give the correct outputs at the end of the simulation, or equivalently that the parties are able to correctly guess the bits they would have \emph{received} if the parties had followed the original protocol on a noiseless network. To address this technicality, before applying $\rs_0$, we will modify the protocol so that any bits that the parties receive are immediately retransmitted over dummy channels to dummy parties. The resulting compiler is the RS compiler, which we denote $\rs$.

We describe $\rs(\pi)$ for arbitrary deterministic protocols $\pi$. Let $\bar{V} = V \cup \{P_{n + i} : 1 \leq i \leq n\}$ be two disjoint copies of $V$, where $P_{n + i}$ is a copy of $P_i$. Let $\bar{E} = E \cup \{(P_i, P_{n + j}) : (P_j, P_i) \in E\}$. Let $\bar{G} = (\bar{V}, \bar{E})$. From the $T$-round protocol $\pi$ on $G$, we define a $(T + 1)$-round protocol $\bar{\pi}$ on $\bar{G}$ as follows. The inputs for $\bar{\pi}$ are exactly the same as the inputs for $\pi$. When $P_i \in V$ receives an input $x_i$, she uses the ``ordinary'' edges in $G$ to do what $\pi$ instructs her. (In the extra round at the end, she just sends a zero on every ordinary edge.) On each ``dummy'' edge $(P_i, P_{n + j})$, in round $\tau$, she transmits the bit that she received in round $\tau - 1$ on the corresponding ordinary edge $(P_j, P_i)$. (In the first round, she just sends a zero on every dummy edge.)

The protocol $\rs(\pi)$, which runs on the original graph $G$, is essentially $\rs_0(\bar{\pi})$, with $T_2 = 2T + 1$. Technically, $\rs_0(\bar{\pi})$ runs on $\bar{G}$; naturally, the parties in $\rs(\pi)$ do not literally send bits across the dummy edges, but they keep track of which bits they would have sent across the dummy edges, and on the ordinary edges, they behave exactly as in $\rs_0(\bar{\pi})$. At the end of the simulation, $P_i$ gives as output whatever $\pi$ instructs her to give as output, under the assumption that her parsed outgoing transcripts accurately reflect the (incoming and outgoing) transcripts that would have occurred if the parties had followed $\pi$ on a noiseless network.

We can now prove Lemma~\ref{lem:rs-failure-walk} and Proposition~\ref{prop:rs94}.

\begin{proof}[of Lemma~$\ref{lem:rs-failure-walk}$]
The round complexity of $\rs(\pi^*[G, T])$ is $9T_2 \leq 27T$. If the execution fails, then there must be some party $P_i$ such that $RP(P_i, 2T + 1) < T + 1$. Therefore, by Lemma~\ref{lem:delay-walk}, there was some time-like sequence of $\frac{T}{48}$ edge character errors. Each character error is associated with at least one bit error.
\end{proof}

\begin{proof}[of Proposition~$\ref{prop:rs94}$]
By Lemma~\ref{lem:rs-failure-walk}, if $\rs(\pi^*)$ fails, there were at least $\frac{T}{48}$ bit errors, which is a bit error rate of at least $\rho = \frac{T/48}{27Tm} = \frac{1}{1296m}$.
\end{proof}

\section{Proof of Lemma~\ref{lem:flow-to-paths}} \label{apx:flow-to-paths}

Without loss of generality, assume that precisely $\lambda$ units of each commodity flow in $F$. For each $(P_i, P_j)$, randomly form a path $p_{ij}$ from $P_i$ to $P_j$ as follows. Initially, $p_{ij}$ is just $P_i$; in each step, extend $p_{ij}$ by randomly selecting an outgoing edge, with the probability of selecting an edge being proportional to the amount of flow of commodity $(P_i, P_j)$ which goes across that edge. Repeat until the path reaches $P_j$. Let $\mathcal{P}$ be the set of paths $p_{ij}$ formed in this way.

Fix some edge $e \in E$. For each $(P_i, P_j) \in E$, the probability that $e \in p_{ij}$ is equal to the flow of commodity $(P_i, P_j)$ across $e$ in $F$ divided by $\lambda$, and these are independent events. Therefore, the expected value of the congestion $c_e$ of $e$ is exactly equal to $f_e / \lambda$, where $f_e$ is the total flow across $e$ in $F$, and by the Chernoff bound, for any $\epsilon > 0$,
\begin{equation}
\Pr\left(c_e \geq (1 + \epsilon) \frac{f_e}{\lambda}\right) \leq \exp\left(-\frac{\epsilon^2}{2 + \epsilon} \frac{f_e}{\lambda} \right) \leq \exp\left(-\frac{\epsilon^2}{(2 + \epsilon)^2} \cdot (1 + \epsilon) \frac{f_e}{\lambda}\right).
\end{equation}
If we define $\epsilon$ so that $(1 + \epsilon) \frac{f_e}{\lambda} = 9(\frac{1}{\lambda} + \ln m)$, then certainly $\epsilon > 1$ simply because $f_e \leq 1$, and hence $\frac{\epsilon^2}{(2 + \epsilon)^2} > \frac{1}{9}$. Therefore,
\begin{equation}
\Pr\left(c_e \geq 9\left(\frac{1}{\lambda} + \ln m\right)\right) < \exp\left(-\frac{1}{9} \cdot 9\left(\frac{1}{\lambda} + \ln m\right)\right) = \frac{1}{m} e^{-1/\lambda}.
\end{equation}
Taking a union bound over all $m$ edges $e$, we see that with positive probability, the total congestion of $\mathcal{P}$ is no more than $9(\frac{1}{\lambda} + \ln m)$. \qed

\section{Proof of Theorem~\ref{thm:magi}} \label{apx:magi}

The compiler $C$ used to prove Theorem~\ref{thm:magi} is formed by composing the RS compiler with a \emph{magi coding}\footnote{To avoid King Herod, the biblical Magi went home along a different route than they had planned. (Matthew 2:12)} compiler, as in Equation~\ref{eqn:magi}. Note that the RS compiler comes first in this composition, in contrast to the compilers used to prove Theorems~\ref{thm:sparsifier-error-correction} and~\ref{thm:reachability-preserving-subgraph-error-rate}.
\begin{equation} \label{eqn:magi}
C: \pi^* \quad \stackrel{\text{RS compiler}}{\longmapsto} \quad \pi' \quad \stackrel{\text{Magi coding compiler}}{\longmapsto} \quad \tilde{\pi}^*
\end{equation}

We will need the following fact about the RS compiler, which is stronger than Proposition~\ref{prop:rs94}.
\begin{prop} \label{prop:rs-rer}
There exists $\eta > 0$ such that if an execution of $\rs(\pi^*)$ fails, then in that execution, the fraction of rounds in which bit errors occurred was at least $\eta$.
\end{prop}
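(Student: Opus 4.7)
The plan is to reduce Proposition~\ref{prop:rs-rer} to Lemma~\ref{lem:delay-walk} by exploiting a combinatorial property of time-like sequences that the coarser bound of Proposition~\ref{prop:rs94} throws away: in a time-like sequence of distinct edge character errors the associated step indices must be strictly increasing.

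First I would unpack the failure condition exactly as in the proof of Lemma~\ref{lem:rs-failure-walk}. If $\rs(\pi^*)$ fails, then some party $P_i$ has $RP(P_i, 2T+1) \leq T$, i.e.\ the deficit $\ell = (2T+1) - RP(P_i, 2T+1)$ satisfies $\ell \geq T+1$. Applying Lemma~\ref{lem:delay-walk} at $P_i$ and $\tau = 2T+1$ then produces a time-like sequence $(e_1,\tau_1),(e_2,\tau_2),\ldots$ of at least $\ell/48 \geq T/48$ distinct edge character errors.

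The key step is the observation that the step indices $\tau_1, \tau_2, \ldots$ are pairwise distinct. Indeed, for any two successive elements $(e,\tau)$ and $(e',\tau')$ of the sequence, the time-like condition asks for a walk from $e$ to $e'$ of length at most $\tau' - \tau + 1$. If $e\neq e'$, any such walk has length $\geq 2$, forcing $\tau' \geq \tau + 1$; if $e = e'$, distinctness of the two physical errors forces $\tau' > \tau$ directly. Either way $\tau' > \tau$, so all step indices in the sequence are distinct. Because a single edge character error in step $\tau$ is produced by at least one bit flip in one of the nine consecutive rounds that constitute step $\tau$, and because the nine-round windows associated with distinct steps are disjoint, we obtain at least $T/48$ pairwise distinct rounds in which a bit error occurred.

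Finally, the total number of rounds of $\rs(\pi^*)$ is $9T_2 = 9(2T+1) \leq 27T$ for $T \geq 1$, so the fraction of rounds carrying at least one bit error is at least $(T/48)/(27T) = 1/1296$, and any $\eta \leq 1/1296$ works. There is no serious obstacle; the one non-obvious ingredient is the strict monotonicity of step indices in a time-like sequence, which is what upgrades the ``$\Omega(1/m)$ fraction of bit transmissions'' bound of Proposition~\ref{prop:rs94} into the desired ``constant fraction of rounds'' bound, independent of $m$.
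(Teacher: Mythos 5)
Your proof is correct and takes essentially the same route as the paper, which simply notes that the argument for Lemma~\ref{lem:rs-failure-walk} already yields the claim because time-like edge character errors must occur in distinct steps. You have merely spelled out that monotonicity of step indices in detail and done the constant-tracking explicitly, arriving at the same $\eta = 1/1296$ implicit in the paper's argument.
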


\begin{proof}
This is established already by the proof of Lemma~\ref{lem:rs-failure-walk}, since time-like character errors must occur in distinct steps.
\end{proof}

The idea of the magi coding compiler is straightforward: parties send bits to randomly chosen third parties, who deliver them to their recipients.

\subsection{Description of magi coding}

Since $\pi'$ is deterministic, immediately upon receiving her input, every party $P_i$ can compute the transmission function $x_i'$ which describes her behavior in $\pi'$. Magi coding works by simulating each round of $\pi'$ individually; the simulation of a single round is given by Algorithm~\ref{alg:magi}. That is, at the beginning of an execution of Algorithm~\ref{alg:magi}, for each edge $(P_i, P_j)$, $P_i$ has in mind a bit $b_{ij}$ that she would like to send to $P_j$; at the end of the execution, $P_j$ has a guess $\hat{b}_{ij}$ about the value of $b_{ij}$. If we have inductively simulated $\tau$ rounds of $\pi'$ in this way, then in round $\tau + 1$, each bit $b_{ij}$ is determined by $x_i'$ under the assumption that the previous estimates $\{\hat{b}_{j'i}\}$ were all correct.

Say that every adjacent pair of vertices have at least $\epsilon n$ common neighbors. Algorithm~\ref{alg:magi} makes reference to a number $\ell$. We define
\begin{equation}
\ell = \frac{24}{\epsilon} \log\left(\frac{2m}{\alpha}\right),
\end{equation}
where $\alpha$ is a parameter to be chosen later. For our purposes, it will suffice to take $\alpha = \frac{1}{4} \eta$, where $\eta$ is that given in Proposition~\ref{prop:rs-rer}.

\begin{algorithm} \label{alg:magi}
\Loop{$\ell$ times}
{
Using shared randomness, pick a random number $1 \leq r \leq n$, as well as two random bits $w_{ij}, w_{ij}'$ for each edge $(P_i, P_j) \in E$. \;
\For{each edge $(P_i, P_k) \in E$, all simultaneously}
{
  Define $j$ so that $i + j + k \equiv r \pmod{n}$. \;
  $P_i$ sends bit $b_{ij} \oplus w_{ij}$ to $P_k$, who receives $y_{ij} = b_{ij} \oplus w_{ij} \oplus \texttt{noise}$. \;
}
\For{each edge $(P_k, P_j) \in E$, all simultaneously}
{
  Define $i$ so that $i + j + k \equiv r \pmod{n}$. \;
  $P_k$ sends bit $y_{ij} \oplus w_{ij}'$ to $P_k$, who receives $z_{ij} = b_{ij} \oplus w_{ij} \oplus \texttt{noise} \oplus w_{ij}' \oplus \texttt{more noise}$. \;
  $P_j$ casts a vote for $z_{ij} \oplus w_{ij} \oplus w_{ij}'$, in an election for the office of $\hat{b}_{ij}$ with candidates $\{0, 1\}$.
}
}
By majority vote, for each $(P_i, P_j) \in E$, $P_j$ decides on an estimate $\hat{b}_{ij}$.
\caption{A single segment of magi coding.}
\end{algorithm}

We refer to the $2\ell$ rounds of communication needed to execute Algorithm~\ref{alg:magi} as one \emph{segment}. The simulation runs for $T'$ segments, where $T'$ is the round complexity of $\pi'$.

\subsection{Analysis of magi coding}

We say that a \emph{simulated bit error} occurs in segment $t$ if, at the end of segment $t$, for some edge $(P_i, P_i)$, $b_{ij} \neq \hat{b}_{ij}$.

\begin{lem} \label{lem:simulated-bit-error-difficulty}
Fix a segment. Suppose that in that segment, the adversary introduces at most $\frac{1}{32}\epsilon n \ell$ bit errors. Then the probability of a simulated bit error in that segment is no more than $\alpha$.
\end{lem}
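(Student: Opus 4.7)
My plan is to fix an arbitrary edge $(P_i, P_j) \in E$, bound $\Pr[\hat b_{ij} \neq b_{ij}]$ by $\alpha/m$, and then apply a union bound over the $m$ edges. The first step---which I regard as the main obstacle---is to argue that the adversary's flip pattern $F = \{F_{t,e}\}$ is statistically independent of the relay-determining random values $\{r_t\}$. The reason is that each bit transmitted in Algorithm~\ref{alg:magi} has the form (message bit) $\oplus$ (one-time pad $w_{ij}$ or $w'_{ij}$ sampled fresh in that iteration), and distinct edges within a given iteration invoke distinct pads (since for fixed $i$, $j = r - i - k \bmod n$ is a bijection in $k$), so the adversary's entire transcript is distributed uniformly on $\{0,1\}^{\O(m\ell)}$ regardless of the $\{r_t\}$. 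Hence we may condition on $F$ throughout, while the $\{r_t\}$ remain independent and uniform.

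Fix $(P_i, P_j)$. For iteration $t$, set $k_t \equiv r_t - i - j \pmod n$, let $A_t$ be the indicator that $k_t$ is a common neighbor of $P_i$ and $P_j$, and let $D_t$ be the indicator that $A_t = 1$ and the vote $P_j$ casts in iteration $t$ for $\hat{b}_{ij}$ is incorrect. Inspecting the algorithm, $A_t D_t = 1$ precisely when $A_t = 1$ and exactly one of the two transmissions $(P_i, P_{k_t})$ (loop 1) and $(P_{k_t}, P_j)$ (loop 2) is flipped during iteration $t$. Writing $V := \sum_t A_t$ and $W := \sum_t A_t D_t$, the majority vote correctly recovers $b_{ij}$ whenever $W < V/2$, so it suffices to show $V > \epsilon \ell / 2$ and $W < \epsilon \ell / 4$ each with probability at least $1 - \alpha/(2m)$. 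For $V$: the $A_t$ are independent Bernoullis of parameter $\geq \epsilon$ (by the common-neighbor hypothesis and the uniformity of $k_t$), so $E[V] \geq \epsilon \ell$, and the multiplicative Chernoff lower-tail bound gives $\Pr[V \leq \epsilon \ell / 2] \leq \exp(-\epsilon \ell / 8) = (2m/\alpha)^{-3}$.

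For $W$: conditional on $F$, the $A_t D_t$ are independent Bernoullis of parameter
\[
\Pr[A_t D_t = 1 \mid F] \;\leq\; \frac{f_{t,i}^+ + f_{t,j}^-}{n},
\]
where $f_{t,i}^+$ (resp.\ $f_{t,j}^-$) counts the out-edges of $P_i$ flipped in loop 1 (resp.\ the in-edges of $P_j$ flipped in loop 2) during iteration $t$. Summing over $t$ and invoking the adversary's total budget $B = \epsilon n \ell / 32$ gives $E[W \mid F] \leq 2B/n = \epsilon \ell / 16$; a Chernoff upper-tail bound then yields $\Pr[W \geq \epsilon \ell / 4 \mid F] \leq (e/4)^{\epsilon \ell / 4} \leq (2m/\alpha)^{-2}$. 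Both tail bounds are well inside $\alpha/(2m)$ by the choice $\ell = (24/\epsilon) \log(2m/\alpha)$, so union-bounding over the two failure modes and over the $m$ edges yields $\Pr[\text{any } \hat{b}_{ij} \neq b_{ij}] \leq \alpha$, as desired. The Chernoff calculations are routine; the substantive part is the independence claim in the first paragraph, where one has to verify carefully that the one-time-pad structure renders the adversary's view jointly uniform and thus devoid of any information about the $\{r_t\}$.
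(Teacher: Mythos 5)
Your proof is correct and follows essentially the same structure as the paper's: fix an edge, bound the number of votes cast from below and the number of incorrect votes from above via Chernoff, then union-bound over the two failure modes and over the $m$ edges. The one place you go beyond the paper is in spelling out the one-time-pad argument for why the adversary's flip pattern is independent of $\{r_t\}$; the paper asserts this with a one-line parenthetical (``This was the purpose of the random bits $w_{ij}, w'_{ij}$''), whereas you explicitly observe that within each iteration distinct edges invoke distinct fresh pads (the map $k \mapsto r - i - k$ being a bijection), making the wire transcript uniform and hence uninformative about the relay assignments --- a worthwhile clarification. Your constant in the bound on $E[W\mid F]$ ($\epsilon\ell/16$) is tighter than the paper's $\epsilon\ell/8$; both comfortably suffice.
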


\begin{proof}
Fix $(P_i, P_j) \in E$. In each iteration of the loop, the probability that $r$ is chosen such that $i + j + r \pmod{n}$ is the index of a common neighbor of $P_i$ and $P_j$ is at least $\epsilon$. Thus, the expected number of votes that $P_j$ casts regarding $\hat{b}_{ij}$ is $\epsilon \ell$. These are independent events, so by the Chernoff bound, the probability that fewer than $\frac{1}{2} \epsilon \ell$ such votes are cast is no more than $\exp(-\frac{1}{8} \epsilon \ell) \leq \frac{\alpha}{2m}$.

Say the number of bit errors that the adversary introduces during the $\tau$th iteration of the main loop of Algorithm~\ref{alg:magi} is $a_\tau$, for $1 \leq \tau \leq \ell$. Fix some iteration $\tau$ of that loop. For any edge $e$, the probability that $b_{ij}$ is sent across $e$ during that iteration is no more than $\frac{2}{n}$, because of the choice of $r$. Furthermore, in both rounds in that iteration, it remains true conditioned on all bits that have been transmitted (i.e. on all the information that the adversary has) that the probability that $b_{ij}$ is sent across $e$ in that round is no more than $\frac{2}{n}$. (This was the purpose of the random bits $w_{ij}$, $w'_{ij}$.) Therefore, by the union bound, the probability that the adversary corrupts $b_{ij}$ in this iteration is at most $\frac{4 a_\tau}{n}$.

The expected number of iterations in which the adversary corrupts $b_{ij}$ is no more than $\frac{1}{8} \epsilon \ell$. These events are independent, so by the Chernoff bound, the probability that more than $\frac{1}{4} \epsilon \ell$ incorrect votes are cast is no more than $\exp(-\frac{1}{24} \epsilon \ell) = \frac{\alpha}{2m}$.

If, at the end of the segment, $\hat{b}_{ij} \neq b_{ij}$, then either $P_j$ cast fewer than $\frac{1}{2} \epsilon \ell$ votes regarding $\hat{b}_{ij}$, or else $P_j$ cast at least $\frac{1}{4} \epsilon \ell$ incorrect votes regarding $\hat{b}_{ij}$. Therefore, by the union bound, $\Pr(\hat{b}_{ij} \neq b_{ij}) \leq \frac{\alpha}{m}$. Taking another union bound over the $m$ edges $(P_i, P_j)$ completes the proof.
\end{proof}

Observe that magi coding does not simply make it difficult for the adversary to create a high simulated bit error rate. Indeed, if she wants to introduce a simulated bit error rate of $\rho$, she can simply choose a $\rho$ fraction of the segments and corrupt every single edge in the even-numbered rounds of the chosen segments, costing her an actual bit error rate of $\rho / 2$. Rather, the gain from magi coding is that it makes it difficult for the adversary to introduce a positive number of simulated bit errors in a large number of segments:

\begin{lem} \label{lem:tiny-failure-probability}
Suppose that during the execution of $\tilde{\pi}^*$, the adversary introduces at most $\frac{1}{16} \alpha \epsilon \ell n T'$ bit errors. Then the probability that at least one simulated bit error occurs in each of $4 \alpha T'$ different segments is $e^{-\Omega((1 - 2\alpha)T')}$.
\end{lem}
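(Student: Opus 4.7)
The plan is to classify segments by how many bit errors they contain, apply the preceding lemma to the ``light'' ones, and then invoke a Chernoff-type bound on what remains.

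First, call a segment \emph{heavy} if the adversary places more than $\tfrac{1}{32}\epsilon n \ell$ bit errors in it, and \emph{light} otherwise. Since the total error budget is $\tfrac{1}{16}\alpha\epsilon \ell n T'$, the number of heavy segments is at most
\[
\frac{(1/16)\alpha \epsilon \ell n T'}{(1/32)\epsilon n \ell} \;=\; 2\alpha T',
\]
regardless of what the adversary does. Consequently, if simulated bit errors occur in at least $4\alpha T'$ distinct segments, then at least $4\alpha T' - 2\alpha T' = 2\alpha T'$ of those segments must be light, and the total number of light segments is at most $T'$ (in fact at least $(1-2\alpha)T'$ of them exist, but we only need the upper count for the tail bound).

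Next, I would apply Lemma~\ref{lem:simulated-bit-error-difficulty} segment-by-segment. For each $t$, let $Z_t$ be the indicator that segment $t$ is light \emph{and} produces a simulated bit error. Conditioned on the entire history (past randomness and past adversarial choices) up to the start of segment $t$, the adversary's strategy for segment $t$ is fixed; Lemma~\ref{lem:simulated-bit-error-difficulty} applied to that strategy (restricted to the event that the segment ends up light) gives $\Pr[Z_t = 1 \mid \mathcal{F}_{t-1}] \leq \alpha$. Hence $\{Z_t\}$ is dominated by a sequence of independent Bernoulli$(\alpha)$ trials. The event we care about implies $\sum_t Z_t \geq 2\alpha T'$, i.e., a deviation by a factor of about $2$ above the mean $\alpha T'$, so a standard multiplicative Chernoff bound (in its martingale form) yields probability at most $e^{-\Omega((1-2\alpha)T')}$.

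The main subtlety is the adversary's adaptivity: she decides online which segments to make heavy, so ``heaviness'' is a random event correlated with everything. The clean workaround is to use the deterministic upper bound on the number of heavy segments (which holds path-by-path from the total budget), and then to argue about $Z_t$ rather than about the unconditional simulated-error indicators: the $Z_t$'s automatically absorb the adversary's adaptivity, since they are bounded conditionally on $\mathcal{F}_{t-1}$ for \emph{every} adversarial strategy. With that setup in place, the Chernoff bound goes through without any further complication.
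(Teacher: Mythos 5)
Your proof matches the paper's almost exactly: the same dyadic classification into heavy/light (the paper calls them ``targeted''/non-targeted) using the same threshold $\tfrac{1}{32}\epsilon n \ell$, the same deterministic bound of $2\alpha T'$ heavy segments from the global budget, the same invocation of Lemma~\ref{lem:simulated-bit-error-difficulty} on the light segments, and the same Chernoff tail. The one place you improve on the paper is in handling adaptivity: the paper simply asserts that the per-segment error events are ``independent,'' whereas you correctly observe that the adversary's adaptive choice of which segments to make heavy forces one to work with the conditional bound $\Pr[Z_t = 1 \mid \mathcal{F}_{t-1}] \leq \alpha$ and a martingale (or stochastic-domination) form of Chernoff; this is a more rigorous writeup of the same argument, not a different one.
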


\begin{proof}
Say that a segment is \emph{targeted} if the adversary introduces at least $\frac{1}{32} \epsilon \ell n$ bit errors in that segment. By hypothesis, at most $2 \alpha T'$ segments are targeted. On the other hand, from Lemma~\ref{lem:simulated-bit-error-difficulty}, we know that in each non-targeted segment, the probability that at least one simulated bit error occurs is at most $\alpha$. There are $T'$ segments total, so the expected number of non-targeted segments in which at least one simulated bit error occurs is at most $\alpha T'$. There are at least $(1 - 2\alpha)T'$ non-targeted segments, and these events are all independent. Therefore, by the Chernoff bound, the probability that at least one simulated bit error occurs in $2 \alpha T'$ different non-targeted segments is $e^{-\Omega((1 - 2\alpha)T')}$.
\end{proof}

\begin{proof}[of Theorem~$\ref{thm:magi}$]
Note that $\tilde{\pi}^*$ runs in $2\ell T'$ rounds, which is $\O(T \log n)$ rounds as claimed. If $\tilde{\pi}^*$ fails, then by Proposition~\ref{prop:rs-rer}, there were at least $\eta T'$ distinct segments in which a simulated bit error occurred. By Lemma~\ref{lem:tiny-failure-probability} with $\alpha = \frac{1}{4} \eta$, as long as the adversary is restricted to introducing at most $\frac{1}{16} \alpha \epsilon \ell n T'$ bit errors, then the probability of simulated bit errors occurring in $\eta T'$ distinct segments is $e^{-\Omega(T)}$. This tolerable amount of error corresponds to the bit error rate $\rho$ given by
\begin{equation}
\rho = \frac{\frac{1}{16} \frac{1}{4} \eta \epsilon \ell n T'}{m \tilde{T}} = \frac{\eta \epsilon n}{128 m} \geq \frac{\eta \epsilon}{128 n}.
\end{equation}
\end{proof}

\section{Illustration of the proof of Theorem~\ref{thm:signal-diameter-negative-result}} \label{apx:signal-diameter-figs}
\begin{figure}[H]
\centering

\begin{tikzpicture}
\node[truecircle] (V0) at (0, 2) {$V_0$};
\node[trueellipse] (V1) at (3, 2) {$V_1$};
\node[trueellipse] (V2) at (6, 2) {$V_2$};

\node[falsecircle] (V0b) at (0, -2) {$\bar{V}_0$};
\node[falseellipse] (V1b) at (3, -2) {$\bar{V}_1$};
\node[falseellipse] (V2b) at (6, -2) {$\bar{V}_2$};

\node[confusedellipse] (V3) at (9, 0) {$V_3$};
\node[confusedellipse] (V4) at (12, 0) {$V_4$};
\node[confusedellipse] (V5) at (15, 0) {$V_5$};

\coordinate (splitpoint1) at ($(V3.north west)+(-0.3, 0)$) {};
\coordinate (splitpoint2) at ($(V3.west)+(-0.3, 0)$) {};
\coordinate (splitpoint3) at ($(V3.south west)+(-0.3, 0)$) {};

\draw[trueedge] (V0.north east) to (V1.north west);
\draw[trueedge] (V1.west) to (V0.east);
\draw[trueedge] (V0.south east) to (V1.south west);
\draw[trueedge] (V1.north east) to (V2.north west);
\draw[trueedge] (V2.south west) to (V1.south east);

\draw[falseedge] (V0b.north east) to (V1b.north west);
\draw[falseedge] (V1b.west) to (V0b.east);
\draw[falseedge] (V0b.south east) to (V1b.south west);
\draw[falseedge] (V1b.north east) to (V2b.north west);
\draw[falseedge] (V2b.south west) to (V1b.south east);

\draw[truesplitedge] (V2.north east) to[in=180, out=-25] (splitpoint1);
\draw[truesplitedge] (V2.south east) to[in=180, out=-25] (splitpoint3);
\draw[truesplitedge] (V3.west) to (splitpoint2);
\draw[falsesplitedge] (V2b.north east) to[in=180, out=25] (splitpoint1);
\draw[falsesplitedge] (V2b.south east) to[in=180, out=25] (splitpoint3);
\draw[truehalfedge] (splitpoint1) to (V3.north west);
\draw[falsehalfedge] (splitpoint1) to (V3.north west);
\draw[truehalfedge] (splitpoint3) to (V3.south west);
\draw[falsehalfedge] (splitpoint3) to (V3.south west);

\draw[properredge] (splitpoint2) to[out=180, in=-25] (V2.east);
\draw[properredge] (splitpoint2) to[out=180, in=25] (V2b.east);

\draw[properredge] (V3.north east) to (V4.north west);
\draw[properredge] (V3.south east) to (V4.south west);
\draw[properredge] (V5.north west) to (V4.north east);
\draw[properredge] (V4.south east) to (V5.south west);
\draw[properredge] (V5.west) to[bend right] (V3.east);
\end{tikzpicture}
\caption{The adversarial strategy used to prove Theorem~\ref{thm:signal-diameter-negative-result} on a graph with $D = 5$, during segment $3$. Regions with solid black boundaries represent sets of actual parties, with double boundaries indicating sets of parties who do not know whether $x$ or $x'$ is the true input. Regions with dashed red boundaries represent sets of imaginary parties. Solid black arrows indicate channels controlled by actual parties; dashed red arrows indicate channels controlled by imaginary parties.}
\label{fig:signal-diameter-negative-result-segment-3}

\vspace{0.3in}

\begin{tikzpicture}
\node[truecircle] (V0) at (0, 2) {$V_0$};
\node[trueellipse] (V1) at (3, 2) {$V_1$};
\node[trueellipse] (V2) at (6, 2) {$V_2$};
\node[trueellipse] (V3) at (9, 2) {$V_3$};

\node[falsecircle] (V0b) at (0, -2) {$\bar{V}_0$};
\node[falseellipse] (V1b) at (3, -2) {$\bar{V}_1$};
\node[falseellipse] (V2b) at (6, -2) {$\bar{V}_2$};
\node[falseellipse] (V3b) at (9, -2) {$\bar{V}_3$};

\node[confusedellipse] (V4) at (12, 0) {$V_4$};
\node[confusedellipse] (V5) at (15, 0) {$V_5$};

\coordinate (splitpoint1) at ($(V4.north west) + (-0.3, 0)$);
\coordinate (splitpoint2) at ($(V5.west) + (-0.3, 0)$);
\coordinate (splitpoint3) at ($(V4.south west) + (-0.3, 0)$);

\draw[trueedge] (V0.north east) to (V1.north west);
\draw[trueedge] (V1.west) to (V0.east);
\draw[trueedge] (V0.south east) to (V1.south west);
\draw[trueedge] (V1.north east) to (V2.north west);
\draw[trueedge] (V2.south west) to (V1.south east);

\draw[falseedge] (V0b.north east) to (V1b.north west);
\draw[falseedge] (V1b.west) to (V0b.east);
\draw[falseedge] (V0b.south east) to (V1b.south west);
\draw[falseedge] (V1b.north east) to (V2b.north west);
\draw[falseedge] (V2b.south west) to (V1b.south east);

\draw[trueedge] (V2.north east) to (V3.north west);
\draw[trueedge] (V2.south east) to (V3.south west);
\draw[trueedge] (V3.west) to (V2.east);
\draw[falseedge] (V2b.north east) to (V3b.north west);
\draw[falseedge] (V2b.south east) to (V3b.south west);
\draw[falseedge] (V3b.west) to (V2b.east);

\draw[truesplitedge] (V3.north east) to[out=-25, in=180] (splitpoint1);
\draw[truesplitedge] (V3.south east) to[out=-25, in=180] (splitpoint3);
\draw[falsesplitedge] (V3b.north east) to[out=25, in=180] (splitpoint1);
\draw[falsesplitedge] (V3b.south east) to[out=25, in=180] (splitpoint3);

\draw[truehalfedge] (splitpoint1) to (V4.north west);
\draw[falsehalfedge] (splitpoint1) to (V4.north west);
\draw[truehalfedge] (splitpoint3) to (V4.south west);
\draw[falsehalfedge] (splitpoint3) to (V4.south west);

\draw[properredge] (V5.north west) to (V4.north east);
\draw[properredge] (V4.south east) to (V5.south west);
\draw[truesplitedge] (V5.west) to (splitpoint2);

\draw[truesplitedge] (splitpoint2) to[out=180, in=0] (V3.east -| V4.north);
\draw[truesplitedge] (splitpoint2) to[out=180, in=0] (V3b.east -| V4.south);
\draw[properredge] (V3b.east -| V4.south) to (V3b.east);
\draw[properredge] (V3.east -| V4.north) to (V3.east);
\end{tikzpicture}
\caption{Segment 4 of the situation depicted in Figure~\ref{fig:signal-diameter-negative-result-segment-3}.}
\label{fig:signal-diameter-negative-result-segment-4}
\end{figure}

\section{Proof of Theorem~\ref{thm:range-negative-result}} \label{apx:range-negative-result}

\subsubsection{Description of the adversary's strategy}

Fix some digraph $G$ with no isolated vertices and with chain-length $R$. We begin with the following lemma.

\begin{lem} \label{lem:exploration-number-short-walk}
There is a walk through $G$ of length $< n^2$ which visits $R$ distinct vertices, with each vertex visited at most $R$ times.
\end{lem}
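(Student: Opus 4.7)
The plan is to start from an arbitrary walk witnessing the chain-length $R$ and to shorten it segment-by-segment. By definition of $R$, there exists a directed walk $W^*$ in $G$ that visits exactly $R$ distinct vertices. I would list these vertices $v_1, v_2, \ldots, v_R$ in the order of their first appearance along $W^*$, so that $W^*$ decomposes naturally as a concatenation of subwalks $W^*_i$, where $W^*_i$ goes from $v_i$ to $v_{i+1}$ (the segment of $W^*$ between the first appearance of $v_i$ and the first appearance of $v_{i+1}$).

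Next I would replace each subwalk $W^*_i$ by a simple path $p_i$ from $v_i$ to $v_{i+1}$, obtained by repeatedly shortcutting cycles inside $W^*_i$. Such a $p_i$ has length at most $n-1$. The new walk $W' := p_1 \circ p_2 \circ \cdots \circ p_{R-1}$ is well-defined because successive paths share endpoints, and every vertex used by any $p_i$ is one of $v_1,\ldots,v_R$ (since $p_i$ is a subwalk of $W^*_i \subseteq W^*$, which only visits those $R$ vertices).

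The length bound then follows immediately: $W'$ consists of $R-1$ simple paths of length at most $n-1$ each, so it has length at most $(R-1)(n-1) < n^2$ using $R \leq n$. For the multiplicity bound, note that within each simple path $p_i$ a vertex appears at most once, so any vertex can appear at most $R-1$ times in $W'$, which is at most $R$ as required. Finally, $W'$ visits all of $v_1,\ldots,v_R$, and it cannot visit any vertex outside this set (such a vertex would push the distinct-vertex count of $W'$ above $R$, contradicting the maximality in the definition of chain-length). The only nontrivial point in the argument is the consistency check that shortcutting inside the $W^*_i$'s cannot inadvertently introduce a new distinct vertex; this is immediate from the containment of vertex sets, so no real obstacle arises.
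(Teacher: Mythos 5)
Your proof is correct and takes essentially the same approach as the paper's: both list the $R$ distinct vertices in order of first appearance, replace each subwalk between consecutive first appearances with a simple path (of length $\leq n-1$), and concatenate. The extra observations you make (that shortcutting keeps the paths inside the original $R$ vertices, and the maximality-based fallback) are not needed but are harmless; the paper simply invokes the existence of a simple path of length at most $n$ between consecutive $v_i$'s and notes the resulting bounds.
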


\begin{proof}
Let $v_1, \dots, v_R$ be the $R$ distinct vertices visited in some chain-length walk through $G$, in the order in which they are visited. There is a path from $v_i$ to $v_{i + 1}$ of length no more than $n$ for each $1 \leq i < R$ which visits each vertex at most once. Chaining these paths together yields a walk with the desired properties.
\end{proof}

Fix some positive integer $T$. Say $\tilde{T}$ is the round complexity of the black-box simulation $\tilde{\pi}^* = C(\pi^*[G, T])$. Let $(P_{k_1}, P_{k_2}, \dots, P_{k_\ell})$ be the sequence of parties visited (with repetition) in the walk guaranteed by Lemma~\ref{lem:exploration-number-short-walk}, so that $(P_{k_i}, P_{k_{i + 1}}) \in E$ for all $i$ and $\ell \leq n^2$. Let $f_i$ denote the number of times that $P_{k_i}$ is visited in this walk, so that $1 \leq f_i \leq R$. The adversary's strategy is given by Algorithm~\ref{adversary-strategy}.

\begin{algorithm}
\caption{The strategy of the adversary $\mathcal{A}$ used to prove Theorem~\ref{thm:range-negative-result}.} \label{adversary-strategy}
\eIf{$\tilde{T} \geq R^2$}
{
  Divide the $\tilde{T}$ rounds into $\ell$ segments, with the $i$th segment containing no more than $\left\lceil \frac{\tilde{T}}{R f_i} \right\rceil$ rounds. \label{line:segmentation} \;
  \For{$i = 1$ \KwTo $\ell$}
  { \label{adversary-for-loop}
    Spend the $i$th segment zeroing out all messages going into or out of $P_{k_i}$.
  }
}
{
Do nothing.
}
\end{algorithm}
Note that the step on line~\ref{line:segmentation} is well defined, because
\begin{equation}
\sum_{i = 1}^\ell \left\lceil \frac{\tilde{T}}{R f_i} \right\rceil \geq \frac{\tilde{T}}{R} \sum_{i = 1}^\ell \frac{1}{f_i} = \tilde{T},
\end{equation}
where the last equation holds because each of $R$ distinct parties contributes a total of $1$ to the sum.

\subsubsection{Analysis of the adversary's strategy}

\begin{lem}
The adversary $\mathcal{A}$ described by Algorithm~$\ref{adversary-strategy}$ introduces a per-edge error rate of no more than $4 / R$.
\end{lem}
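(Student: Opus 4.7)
The plan is to fix an arbitrary edge $e = (P_u, P_v) \in E$, bound the number of bit errors introduced on $e$ over the course of the simulation, and then divide by $\tilde{T}$. If $\tilde{T} < R^2$ the adversary does nothing, so the per-edge error rate is trivially $0$, and we may assume $\tilde{T} \geq R^2$.

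Edge $e$ is attacked during segment $i$ precisely when $P_{k_i} \in \{P_u, P_v\}$, since those are the only segments in which the adversary zeroes out messages incident to $e$. Let $f_u$ denote the number of indices $i$ with $P_{k_i} = P_u$, and similarly $f_v$; these agree with the values $f_i$ appearing in Algorithm~\ref{adversary-strategy}. By Lemma~\ref{lem:exploration-number-short-walk}, both $f_u$ and $f_v$ are at most $R$. The total number of rounds in which $e$ is attacked is therefore bounded by
\begin{equation}
\sum_{i : P_{k_i} = P_u} \left\lceil \frac{\tilde{T}}{R f_i} \right\rceil + \sum_{i : P_{k_i} = P_v} \left\lceil \frac{\tilde{T}}{R f_i} \right\rceil = f_u \left\lceil \frac{\tilde{T}}{R f_u} \right\rceil + f_v \left\lceil \frac{\tilde{T}}{R f_v} \right\rceil.
\end{equation}

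The routine estimate $f \lceil \tilde{T}/(R f) \rceil \leq \tilde{T}/R + f \leq \tilde{T}/R + R$ then gives a total of at most $2\tilde{T}/R + 2R$ attacked rounds on $e$. Dividing by $\tilde{T}$ yields a per-edge error rate of at most $2/R + 2R/\tilde{T}$, and the assumption $\tilde{T} \geq R^2$ makes the second term at most $2/R$, giving the desired bound of $4/R$. No step here is a genuine obstacle; the only subtlety is verifying that the segmentation in Algorithm~\ref{adversary-strategy} is well defined (already handled in the paragraph following the algorithm) and remembering that for each edge, only two vertices (its two endpoints) trigger attacks, so the sum splits neatly into two terms, each of which is easily bounded.
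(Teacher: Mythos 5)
Your proof is correct and follows essentially the same approach as the paper: bound the rounds in which an edge is attacked by $f_u\lceil\tilde{T}/(Rf_u)\rceil + f_v\lceil\tilde{T}/(Rf_v)\rceil$, then use $\lceil x\rceil\leq x+1$, $f\leq R$, and $\tilde{T}\geq R^2$ to get $4\tilde{T}/R$. The only cosmetic difference is that you handle edges with endpoints outside the walk uniformly (via the observation that such segments simply don't attack $e$), whereas the paper disposes of that case with a separate remark.
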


\begin{proof}
In the case $\tilde{T} < R^2$, the statement is trivial, so assume $\tilde{T} \geq R^2$. The number of rounds in which $\mathcal{A}$ attacks an edge of the form $(P_{k_i}, P_{k_j})$ is no more than
\begin{equation}
f_i \left\lceil \frac{\tilde{T}}{R f_i} \right\rceil + f_j \left\lceil \frac{\tilde{T}}{R f_j} \right\rceil \leq f_i + \frac{\tilde{T}}{R} + f_j + \frac{\tilde{T}}{R} \leq 2R + \frac{2\tilde{T}}{R} \leq \frac{4\tilde{T}}{R}.
\end{equation}
For an edge $e$ with an endpoint which is not a $P_{k_i}$, $\mathcal{A}$ attacks $e$ in even fewer rounds than this.
\end{proof}

Suppose $P_i$ is a party with indegree $d_i^-$ and outdegree $d_i^+$. Observe that we can identify a $T$-round transmission function $x_i$ for $P_i$ with a $(2^{d_i^-})$-ary tree of depth $T$ whose vertices are labeled with strings in $\{0, 1\}^{d_i^+}$. The edges in a path from the root to a vertex $v$ in this tree specify a sequence of bits received from each in-neighbor, and the label of $v$ specifies the bits to send in the scenario described by that path. In these terms, when a party $P_i$ in a black-box simulation $\tilde{\pi}^*$ makes a query, she effectively specifies a node in $x_i$ and asks what its label is. For an input $x = (x_1, \dots, x_n)$, we can identify in each $x_i$ the ``true path'' $t_i^x$ from the root to a leaf of $x_i$, consisting of all the edges that would be taken if the parties followed $\pi^*$ in a noiseless network. In these terms, the goal of the protocol is for each $P_i$ to learn $t_i^x$.

\begin{lem} \label{lem:quick-simulator}
Suppose $\tilde{T} < R^2$. Then if we pick an input $x$ uniformly at random, the probability that $\tilde{\pi}^*$ fails in the presence of $\mathcal{A}$ on $x$ is at least $1 - 2^{d_1^-(R^2 - T)}$, where $d_1^-$ is the indegree of $P_1$.
\end{lem}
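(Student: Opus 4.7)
The plan is to argue information-theoretically that $P_1$ simply has too little information to output the correct transcript. Since $\tilde T < R^2$ the adversary $\mathcal A$ is idle, so $\tilde\pi^*$ runs noiselessly. The success criterion for $P_1$ is that she outputs all $d_1^- T$ bits she would have received from her in-neighbors in a noiseless run of $\pi^*[G,T]$. I will show (i) that, for a uniformly random input $x$, this target string is a uniform bit string of length $d_1^- T$ that is independent of $x_1$, and (ii) that $P_1$'s view during the simulation contains at most $d_1^- \tilde T$ bits of information about this target. A standard counting argument then yields a success probability of at most $2^{d_1^-(\tilde T-T)} \le 2^{d_1^-(R^2-T)}$.

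For (i), I would induct on rounds $\tau$ of $\pi^*$. Identify each transmission function $x_j$ with a $(2^{d_j^-})$-ary tree of depth $T$ whose node labels are independent uniform bits in $\{0,1\}^{d_j^+}$. The bit $P_j$ sends on edge $(P_j,P_1)$ in round $\tau$ is one coordinate of the label of the node $R_j^{\tau-1}$ (the path corresponding to $P_j$'s received bits through round $\tau-1$). The key observation is that $R_j^{\tau-1}$ is a node strictly deeper than any node previously traversed in $x_j$, so its label is a fresh uniform bit string, independent of all previously exposed labels in any $x_{j'}$, and in particular independent of $x_1$. Hence the joint distribution of the $d_1^- T$ bits received by $P_1$ is uniform over $\{0,1\}^{d_1^- T}$ and independent of $x_1$.

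For (ii), I would observe that $P_1$'s entire view in $\tilde\pi^*$ consists of (a) arbitrary black-box queries to $x_1$, and (b) the $d_1^- \tilde T$ bits she receives from her in-neighbors. By (i), queries to $x_1$ carry no information about the target noiseless received string. Thus, conditional on $x_1$ and all of $P_1$'s queries, the posterior distribution on the target is still uniform on a set of size $2^{d_1^- T}$, and the received transmissions can reduce this support to at most $2^{d_1^- \tilde T}$ possibilities. Therefore, no matter what decoding rule $P_1$ uses, she outputs the correct transcript with probability at most $2^{d_1^-(\tilde T-T)} \le 2^{d_1^-(R^2-T)}$, and the failure probability is at least $1-2^{d_1^-(R^2-T)}$.

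The main obstacle is the inductive argument in (i): one needs to keep track of why labels exposed by the noiseless run remain independent uniform bits. I expect this to be the subtle point because every party's transmission function gets evaluated at many inputs over the course of the protocol, and the nodes visited in $x_j$ can depend on $x_1$ through feedback paths in the graph. The resolution is that, although which node is visited may depend on $x_1$, the label of that node is drawn independently of its identity and of every other exposed label; this is precisely what makes the universal protocol the hardest one to simulate and is what drives the information-theoretic lower bound.
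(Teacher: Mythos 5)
Your proof is correct and takes essentially the same approach as the paper's: the paper also fixes an arbitrary $x_1$, observes that the $d_1^- T$ received bits under $\pi^*$ are then uniform over $(2^{d_1^-})^T$ possibilities, and bounds $P_1$'s information by the $< d_1^- R^2$ noiselessly received bits (the adversary being idle when $\tilde T < R^2$). Your write-up merely spells out the uniformity and independence step (your (i)) that the paper asserts implicitly.
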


\begin{proof}
Consider fixing an arbitrary transmission function $x_1$ and choosing the rest of $x$ uniformly at random. Then $P_1$ needs to choose between $(2^{d_1^-})^T$ possible true paths (each of which is a priori equally likely), based on $< (2^{d_1^-})^{R^2}$ bits. Thus, the probability of success is no more than $2^{d_1^-(R^2 - T)}$.
\end{proof}

\newcommand{\csq}{\left\lceil\sqrt{T}\right\rceil}
\begin{defn}
In the case that $\tilde{T} \geq R^2$, for $1 \leq i \leq \ell$, say that $E_i$ is the event that at the end of segment $(i - 1)$, the following condition holds. Let $u$ denote the node at depth $(i - 1)\csq$ along $t_{k_i}^x$. Then the subtree hanging from $u$ is completely unexplored, i.e. $P_{k_i}$ has not made any queries about the labels of any vertices in that subtree.
\end{defn}

\begin{lem} \label{lem:no-exploration}
Suppose $\tilde{T} \geq R^2$ and $C(\pi^*)$ and has query complexity no more than $2^{\sqrt{T}}(1 - \delta)$, where $0 \leq \delta \leq 1$. Suppose we pick an input $x$ uniformly at random, and execute $\tilde{\pi}^*$ on $x$ in the presence of $\mathcal{A}$. Then for each $1 \leq i \leq \ell$, conditioned on $E_1, \dots, E_{i - 1}$, the probability that $E_i$ occurs is at least $\delta$.
\end{lem}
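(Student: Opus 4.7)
My plan is to prove Lemma~\ref{lem:no-exploration} by induction on $i$. For $i=1$ no queries have yet been made, so the subtree hanging from the depth-$0$ node (the root) of $x_{k_1}$ is trivially unexplored and $E_1$ holds with probability $1 \geq \delta$.

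For the inductive step fix $i \geq 2$ and condition on $E_1 \cap \cdots \cap E_{i-1}$ together with the full observable transcript $H_{i-1}$ (i.e.\ every channel bit and every query-answer pair) through the end of segment $i-1$. Let $u^{\ast}$ be the node of $x_{k_i}$ at depth $(i-1)\csq$ along $t_{k_i}^x$. I claim that conditional on $H_{i-1}$ the random variable $u^{\ast}$ is uniform over a set of size at least $2^{\csq}$. Granting the claim, the number of queries into $x_{k_i}$ is at most the global query complexity $2^{\csq}(1-\delta)$, so at most $2^{\csq}(1-\delta)$ depth-$(i-1)\csq$ nodes of $x_{k_i}$ can be ancestors of queried vertices; a union bound yields $\Pr(E_i \mid H_{i-1}, E_1, \dots, E_{i-1}) \geq 1 - (1-\delta) = \delta$, and integrating over $H_{i-1}$ gives the lemma.

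To establish the claim I use the adversary's zeroing and the inductive hypothesis. In a noiseless execution of $\pi^{\ast}$, the bits sent from $P_{k_{i-1}}$ to $P_{k_i}$ during rounds $(i-2)\csq+1, \dots, (i-1)\csq$ are precisely the $(P_{k_{i-1}}, P_{k_i})$-coordinates of the labels of $x_{k_{i-1}}$ at the nodes of $t_{k_{i-1}}^x$ between depths $(i-2)\csq$ and $(i-1)\csq-1$. By $E_{i-1}$ the subtree of $x_{k_{i-1}}$ hanging from depth $(i-2)\csq$ on $t_{k_{i-1}}^x$ is entirely unqueried at the end of segment $i-1$; since $\tilde{\pi}^{\ast}$ is a black-box simulation, no other party can ever query $x_{k_{i-1}}$, so the only way the labels in question could have entered $H_{i-1}$ is through $P_{k_{i-1}}$'s own transmissions. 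During segment $i-1$ the adversary zeroes out every message incident to $P_{k_{i-1}}$, and during segments $1, \dots, i-2$ the event $E_{i-1}$ confines $P_{k_{i-1}}$'s queries (and hence her transmissions) to labels at depth strictly less than $(i-2)\csq$. Thus the specified $\csq$ labels are independent of $H_{i-1}$ and uniform in $\{0,1\}^{\csq}$; flipping any of them changes exactly one intermediate in-bit received by $P_{k_i}$ along $t_{k_i}^x$ and therefore alters $u^{\ast}$, so the map from these $\csq$ bits to $u^{\ast}$ (with all other ingredients of the branch held fixed) is injective, yielding the required set of $2^{\csq}$ equally likely values.

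The hard part is being fully careful that this uniformity survives conditioning on the entire transcript: one must rule out any indirect leakage of the $\csq$ relevant labels of $x_{k_{i-1}}$ via communications by other parties in earlier segments. The black-box assumption is essential here because it prevents any query-based leakage of $x_{k_{i-1}}$ outside of $P_{k_{i-1}}$ herself, and combined with the total zeroing during segment $i-1$ and the inductive confinement of $P_{k_{i-1}}$'s prior queries, it closes off every channel. A minor technical point is that the argument requires $(i-1)\csq \leq T$, but this is automatic for $i \leq \ell \leq n^2$ once $T$ is large compared to $n$, which is all Theorem~\ref{thm:range-negative-result} needs since it only asserts failure probability tending to $1$ as $T \to \infty$.
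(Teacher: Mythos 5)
Your high-level plan mirrors the paper's: isolate the set of never-observed labels of $x_{k_{i-1}}$, argue that the depth-$(i-1)\csq$ node $u^*$ on $t_{k_i}^x$ is near-uniform over $\geq 2^{\csq}$ possibilities given everything $P_{k_i}$ can act on, and union-bound over $P_{k_i}$'s queries. However, there is a genuine gap in the execution, stemming from over-conditioning.

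You condition on $H_{i-1}$, explicitly defined to include \emph{every} query-answer pair through the end of segment $i-1$. But nothing stops $P_{k_{i-1}}$ from querying $x_{k_{i-1}}$ \emph{during} segment $i-1$ --- the adversary zeroes her channels, not her oracle access --- and those queries can land squarely in the subtree $\tau$, thereby revealing bits of $U$. Conditioned on such an $H_{i-1}$, the random variable $u^*$ is no longer uniform over a set of size $\geq 2^{\csq}$ (in the extreme case where $P_{k_{i-1}}$ queries all of $\tau$ during segment $i-1$, $u^*$ is fully determined by $H_{i-1}$), so your key Claim is false as stated, and the union bound that follows from it does not go through. The slip traces back to the sentence ``By $E_{i-1}$ the subtree of $x_{k_{i-1}}$ hanging from depth $(i-2)\csq$ on $t_{k_{i-1}}^x$ is entirely unqueried at the end of segment $i-1$'': $E_{i-1}$ asserts this only at the end of segment $i-2$, not $i-1$, so it does not protect $U$ from $P_{k_{i-1}}$'s own segment-$(i-1)$ queries.

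The fix is exactly what the paper does: do not condition on the full transcript. Instead, fix the protocol's randomness and condition on all of $x$ \emph{except} the bits in $U$ (equivalently, on what is visible to parties other than $P_{k_{i-1}}$ through the end of segment $i-1$). Under this weaker conditioning, $P_{k_i}$'s set of queried nodes through the end of segment $i-1$ is determined and independent of $U$ (before segment $i-1$, $E_{i-1}$ ensures $P_{k_{i-1}}$ never consulted $U$, and during segment $i-1$ her outputs are jammed), while varying $U$ makes $u^*$ range over $2^{\csq}$ equiprobable nodes. The remainder of your argument, including the union bound $2^{\sqrt{T}}(1-\delta)/2^{\csq}\le 1-\delta$, is then fine.
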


\begin{proof}
For the analysis, it suffices to fix arbitrary values for any random bits that $\tilde{\pi}^*$ uses (still choosing $x$ randomly.) Vacuously, $E_1$ occurs with probability 1, so assume $i > 1$.  Consider an arbitrary input $x$ such that $E_1, \dots, E_{i - 1}$ occur. Let $w$ be the node at depth $(i - 2)\csq$ in $t_{k_{i - 1}}^x$, and let $\tau$ be the subtree hanging from $w$. Since $E_{i - 1}$ occurred, at the beginning of segment $(i - 1)$, $P_{k_{i - 1}}$ had not explored any of $\tau$. Therefore, if $x'$ is the same as $x$ except for the labels of the nodes in $\tau$, then the execution of $\tilde{\pi}^*(x')$ prior to segment $(i - 1)$ is the same as that of $\tilde{\pi}^*(x)$. Let $U$ denote the set of locations of labels of nodes in $\tau$ which correspond to bits transmitted from $P_{k_{i - 1}}$ to $P_{k_i}$. Consider altering $x$ by assigning values to the bits in $U$ uniformly at random.

Starting at depth $(i - 2)\csq$ in $t_{k_i}^x$, at each level, the true path could go one of two ways, depending on the value of a bit in $U$ which is at the same level in $x_{k_i}$. Thus, the node at depth $i \csq$ in $t_{k_i}^x$ could be any of $2^{\csq}$ different nodes, each with equal probability. Say this set of $2^{\csq}$ nodes is $S$. At the end of segment $(i - 1)$, $P_{k_i}$ has made fewer than $2^{\sqrt{T}}(1 - \delta)$ queries total, and thus the fraction of nodes in $S$ whose subtrees she has not explored at all is at least $\delta$. The queries she chooses to make during segment $(i - 1)$ cannot depend on the labels assigned to nodes in $U$, because $P_{k_{i - 1}}$ is attacked by $\mathcal{A}$ during segment $(i - 1)$. Therefore, when we assign values to $U$ uniformly at random, the probability that $E_i$ occurs is at least $\delta$. Therefore, if we choose $x$ uniformly at random, then conditioned on $E_1, \dots, E_{i - 1}$, the probability that $E_i$ occurs is at least $\delta$.
\end{proof}

\begin{lem} \label{polynomial-lemma}
Suppose $\mathcal{Q}(T)$ is a polynomial. Then for all sufficiently large $T$,
\begin{equation}
\mathcal{Q}(T) < 2^{\sqrt{T}}\left(1 - 2^{-1/T}\right).
\end{equation}
\end{lem}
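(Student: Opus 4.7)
The plan is to reduce the inequality to a standard fact about super-polynomial growth of $2^{\sqrt{T}}$, after first replacing the awkward factor $1 - 2^{-1/T}$ by a clean $1/T$-type lower bound.

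First I would establish a quantitative lower bound on $1 - 2^{-1/T}$ for large $T$. Writing $2^{-1/T} = e^{-(\ln 2)/T}$, I would use the elementary inequality $e^{-x} \leq 1 - x/2$ valid for all $x \in [0, 1]$ (checked by differentiating $f(x) = 1 - x/2 - e^{-x}$, noting $f(0)=0$, and verifying $f(\ln 2) > 0$ and $f(1) > 0$). Applied with $x = (\ln 2)/T \leq 1$ for $T \geq 1$, this gives
\begin{equation}
1 - 2^{-1/T} \;\geq\; \frac{\ln 2}{2T} \;\geq\; \frac{1}{4T}.
\end{equation}

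Consequently, it suffices to show that for all sufficiently large $T$,
\begin{equation}
4T \cdot \mathcal{Q}(T) \;<\; 2^{\sqrt{T}}.
\end{equation}
If $\mathcal{Q}$ has degree $d$, then $4T \cdot \mathcal{Q}(T)$ is a polynomial in $T$ of degree $d+1$, so its logarithm is $(d+1)\log_2 T + O(1)$, which is $o(\sqrt{T})$. Hence $\log_2(4T \cdot \mathcal{Q}(T)) < \sqrt{T}$ for all sufficiently large $T$, completing the proof.

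There is no real obstacle here: the only minor subtlety is making the lower bound on $1 - 2^{-1/T}$ fully explicit, and for that a single application of the convex inequality $e^{-x} \leq 1 - x/2$ on $[0,1]$ is enough.
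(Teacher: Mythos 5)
Your proof is correct and follows essentially the same strategy as the paper: lower-bound $1 - 2^{-1/T}$ by a quantity of order $1/T$ (the paper does this via the limit $(1 - \frac{1}{2T})^T \to e^{-1/2}$, you via the pointwise inequality $e^{-x} \le 1 - x/2$ on $[0,1]$), then conclude because $2^{\sqrt{T}}/T$ eventually dominates any polynomial. The inequality you use and the constants you obtain check out, so both proofs are sound and structurally identical.
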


\begin{proof}
Note that $\frac{1}{2}e^{1/2} < 1$. Therefore, from the limit definition of the exponential function, we see that for sufficiently large $T$,
\begin{equation}
\left(1 - \frac{1/2}{T}\right)^{T} > \frac{1}{2} e^{1/2} \cdot e^{-1/2} = \frac{1}{2}.
\end{equation}
Taking a $T$th root of both sides gives $1 - \frac{1}{T} > 2^{-1/T}$, and therefore
\begin{equation}
2^{\sqrt{T}}\left(1 - 2^{-1/T}\right) > \frac{2^{\sqrt{T}}}{T}.
\end{equation}
Obviously, for sufficiently large $T$, the right-hand side is larger than $\mathcal{Q}(T)$.
\end{proof}

\begin{proof}
[of Theorem~$\ref{thm:range-negative-result}$] Say that the simulations output by $C$ make a number of queries which is bounded by the polynomial $\mathcal{Q}(T)$. Since we only care about the limit as $T \to \infty$, we may assume that $T > 16n^4$, and by Lemma~\ref{polynomial-lemma}, we may also assume that
\begin{equation}
\mathcal{Q}(T) < 2^{\sqrt{T}} \left(1 - 2^{-1/T}\right).
\end{equation}
We will show that if we pick an input $x$ for $\pi^*[G, T]$ uniformly at random, the failure probability $\delta$ of $\tilde{\pi}^*$ in the presence of $\mathcal{A}$ satisfies
\begin{equation}
\delta \geq \min\{2^{-(n^2/T)} \cdot (1 - 2^{-\frac{1}{2}T}), (1 - 2^{d_1^-(R^2 - T)})\} \label{goal-equation}
\end{equation}
which in particular means that $\delta \to 1$ as $T \to \infty$. (Recall that $G$ is fixed.) If $\tilde{T} < R^2$, then we are done by Lemma~\ref{lem:quick-simulator}. Assume, therefore, that $\tilde{T} \geq R^2$.

The probability that $E_i$ happens for all $1 \leq i \leq \ell$ is $\prod_{j = 1}^{\ell} \Pr(E_i | E_1, \dots, E_{i - 1})$, which by Lemma~\ref{lem:no-exploration} is at least $2^{-\ell/T}$. Since $\ell \leq n^2$, the probability that $E_j$ happens for all $1 \leq i \leq \ell$ is at least $2^{-n^2/T}$. Suppose $E_{\ell - 1}$ happens. Then at the end of segment $(\ell - 2)$, $P_{k_{\ell - 1}}$ has not made any queries about the labels of any of the vertices in the subtree $\tau$ hanging from the node at depth $(\ell - 2)\csq$ along $t_{k_{\ell - 1}}^x$. The height $h$ of $\tau$ satisfies
\begin{equation}
h = T - (\ell - 2)\csq \geq T - n^2\csq \geq T - 2n^2\sqrt{T}.
\end{equation}
Since $T > 16n^4$, we have $\frac{1}{2}\sqrt{T} > 2n^2$, and hence $h \geq \frac{1}{2}T$.

During segment $\ell$, all messages going into $P_{k_{\ell}}$ are zeroed out, so the output of $P_{k_\ell}$ at the end of the protocol does not depend on queries that $P_{k_{\ell - 1}}$ makes during segment $\ell$. After fixing the labels of all nodes other than those in $\tau$, each step in $t_{k_\ell}^x$ at the level of $\tau$ could go one of two ways, based on a label of a node in $\tau$. Therefore, conditioned in $E_{\ell - 1}$, the probability that $P_{k_{\ell}}$ correctly guesses $t_{k_{\ell}}^x$ is no more than $2^{-\frac{1}{2}T}$, since $\tau$ has height at least $\frac{1}{2}T$. Thus, conditioned on $E_{\ell - 1}$, the probability that $P_{k_{\ell' + 1}}$ fails to guess her transcript is at least $(1 - 2^{-\frac{1}{2}T})$. Multiplying, we see that (unconditionally) the probability that $P_j$ fails to guess her transcript is at least $2^{-n^2/T}(1 - 2^{-\frac{1}{2}T})$, and thus Equation~\ref{goal-equation} is satisfied.
\end{proof}

\end{document}